\PassOptionsToPackage{unicode}{hyperref}
\PassOptionsToPackage{hyphens}{url}
\PassOptionsToPackage{dvipsnames,svgnames,x11names}{xcolor}
\documentclass[
  12pt]{article}

\usepackage{amsmath,amssymb,amsthm,needspace}
\usepackage{graphicx,url,latexsym,multirow,enumitem,booktabs,
bbm,soul,epsfig,cancel,makecell,lscape,algorithmicx,bm,array,
longtable,caption,algorithm,algpseudocode,setspace}

\usepackage{iftex}
\ifPDFTeX
  \usepackage[T1]{fontenc}
  \usepackage[utf8]{inputenc}
  \usepackage{textcomp} %
\else %
  \usepackage{unicode-math}
  \defaultfontfeatures{Scale=MatchLowercase}
  \defaultfontfeatures[\rmfamily]{Ligatures=TeX,Scale=1}
\fi
\usepackage{lmodern}
\ifPDFTeX\else
\fi
\IfFileExists{upquote.sty}{\usepackage{upquote}}{}
\IfFileExists{microtype.sty}{%
  \usepackage[]{microtype}
  \UseMicrotypeSet[protrusion]{basicmath} %
}{}
\makeatletter
\@ifundefined{KOMAClassName}{%
  \IfFileExists{parskip.sty}{%
    \usepackage{parskip}
  }{%
    \setlength{\parindent}{0pt}
    \setlength{\parskip}{6pt plus 2pt minus 1pt}}
}{%
  \KOMAoptions{parskip=half}}
\makeatother
\usepackage{xcolor}
\makeatletter
\ifx\paragraph\undefined\else
  \let\oldparagraph\paragraph
  \renewcommand{\paragraph}{
    \@ifstar
      \xxxParagraphStar
      \xxxParagraphNoStar
  }
  \newcommand{\xxxParagraphStar}[1]{\oldparagraph*{#1}\mbox{}}
  \newcommand{\xxxParagraphNoStar}[1]{\oldparagraph{#1}\mbox{}}
\fi
\ifx\subparagraph\undefined\else
  \let\oldsubparagraph\subparagraph
  \renewcommand{\subparagraph}{
    \@ifstar
      \xxxSubParagraphStar
      \xxxSubParagraphNoStar
  }
  \newcommand{\xxxSubParagraphStar}[1]{\oldsubparagraph*{#1}\mbox{}}
  \newcommand{\xxxSubParagraphNoStar}[1]{\oldsubparagraph{#1}\mbox{}}
\fi
\makeatother

\usepackage{longtable,booktabs,array}
\usepackage{calc} %
\usepackage{etoolbox}
\makeatletter
\patchcmd\longtable{\par}{\if@noskipsec\mbox{}\fi\par}{}{}
\makeatother
\IfFileExists{footnotehyper.sty}{\usepackage{footnotehyper}}{\usepackage{footnote}}
\makesavenoteenv{longtable}
\usepackage{graphicx}
\makeatletter
\def\maxwidth{\ifdim\Gin@nat@width>\linewidth\linewidth\else\Gin@nat@width\fi}
\def\maxheight{\ifdim\Gin@nat@height>\textheight\textheight\else\Gin@nat@height\fi}
\makeatother
\setkeys{Gin}{width=\maxwidth,height=\maxheight,keepaspectratio}
\makeatletter
\def\fps@figure{htbp}
\makeatother

\makeatletter
\@ifpackageloaded{caption}{}{\usepackage{caption}}
\AtBeginDocument{%
\ifdefined\contentsname
  \renewcommand*\contentsname{Table of contents}
\else
  \newcommand\contentsname{Table of contents}
\fi
\ifdefined\listfigurename
  \renewcommand*\listfigurename{List of Figures}
\else
  \newcommand\listfigurename{List of Figures}
\fi
\ifdefined\listtablename
  \renewcommand*\listtablename{List of Tables}
\else
  \newcommand\listtablename{List of Tables}
\fi
\ifdefined\figurename
  \renewcommand*\figurename{Figure}
\else
  \newcommand\figurename{Figure}
\fi
\ifdefined\tablename
  \renewcommand*\tablename{Table}
\else
  \newcommand\tablename{Table}
\fi
}
\@ifpackageloaded{float}{}{\usepackage{float}}
\floatstyle{ruled}
\@ifundefined{c@chapter}{\newfloat{codelisting}{h}{lop}}{\newfloat{codelisting}{h}{lop}[chapter]}
\floatname{codelisting}{Listing}

\makeatother
\makeatletter
\@ifpackageloaded{caption}{}{\usepackage{caption}}
\@ifpackageloaded{subcaption}{}{\usepackage{subcaption}}
\makeatother

\ifLuaTeX
  \usepackage{selnolig}  %
\fi
\usepackage[]{natbib}
\usepackage{bookmark}

\IfFileExists{xurl.sty}{\usepackage{xurl}}{} %
\hypersetup{
  pdftitle={Fitting time-varying parameter models via partially noncentered parametrizations},
  pdfauthor={Lingjie Wang; Rufeng Liu; Abhinek Shukla; Linda S. L. Tan},
  pdfkeywords={Time-varying parameter models, Partially noncentered
parametrization, Variational inference, Mean-field variational Bayes, Triple Gamma prior, State-space models},
  colorlinks=true,
  linkcolor={blue},
  filecolor={Maroon},
  citecolor={Blue},
  urlcolor={Blue},
  pdfcreator={LaTeX via pandoc}}

\newcommand{\anon}{1}

\def\half{\frac{1}{2}}

\def\tr{\text{\rm tr}}
\def\vec{\text{\rm vec}}

\def\Cov{\text{\rm Cov}}
\def\Var{\text{\rm Var}}

\def\opt{\text{opt}}
\def\KL{\text{\rm KL}}

\def\CP{\text{CP}}
\def\NCP{\text{NCP}}
\def\G{\text{G}}

\def\IG{\text{IG}}
\def\GIG{\text{GIG}}

\def\argmax{\text{\rm argmax}}
\def\diag{\text{\rm diag}}
\def\blockdiag{\text{\rm blockdiag}}

\def\vech{\text{\rm vech}}

\def\bfa{{\bf a}}
\def\bfb{{\bf b}}
\def\bfc{{\bf c}}

\def\bfm{{\bf m}}

\def\bfu{{\bf u}}

\def\bfx{{\bf x}}
\def\bfy{{\bf y}}
\def\bfz{{\bf z}}

\def\bfbeta{{\pmb \beta}}

\def\wbeta{{\widetilde{\beta}}}
\def\wbfbeta{{\widetilde{\pmb \beta}}}

\def\bfmu{{\pmb \mu}}
\def\bfkappa{{\pmb \kappa}}
\def\bfgamma{{\pmb \gamma}}
\def\bftheta{{\pmb \theta}}

\def\bfpsi{{\pmb \psi}}
\def\bftau{{\pmb \tau}}
\def\bfxi{{\pmb \xi}}
\def\bflambda{{\pmb \lambda}}
\def\bfvartheta{{\pmb \vartheta}}

\def\obs{{\text{obs}}}
\def\aug{{\text{aug}}}

\def\N{{\text{N}}}
\def\IG{{\text{IG}}}
\def\E{{\text{E}}}

\def\bfone{{\bf 1}}
\def\bfzr{{\bf 0}}

\newcommand\mL{{\mathcal{L}}}

\newcommand{\bracket}[1]{\left( #1 \right)}

\newtheorem{theorem}{Theorem}
\newtheorem{corollary}{Corollary}

\newcolumntype{P}[1]{>{\centering\arraybackslash}p{#1}}

\allowdisplaybreaks
\begin{document}

\def\spacingset#1{\renewcommand{\baselinestretch}%
  {#1}\small\normalsize} \spacingset{1}

\allowdisplaybreaks

\if1\anon
  {
    \title{\bf Fitting time-varying parameter models via partially noncentered parametrizations}
    \author{
    Lingjie Wang (lingjie@u.nus.edu)\\
    Department of Statistics and Data Science, National University of Singapore\\
    Rufeng Liu (rufeng-liu@nus.edu.sg)\\
    Department of Statistics and Data Science, National University of Singapore\\
    Abhinek Shukla (abhushukla@gmail.com)\\
    Duke-NUS Medical School, Singapore\\
    Linda S. L. Tan (statsll@nus.edu.sg)\thanks{Linda Tan's research is supported by the Ministry of Education, Singapore, under its Academic Research Fund Tier 2 (Award MOE-T2EP20222-0002). Corresponding author.}\hspace{.2cm}\\
    Department of Statistics and Data Science, National University of Singapore
    }
    \date{October 23, 2026}
    \maketitle
  } \fi

\if0\anon
  {
    \bigskip
    \bigskip
    \bigskip
    \begin{center}
      {\LARGE\bf Fitting time-varying parameter models via partially noncentered parametrizations}
    \end{center}
    \medskip
  } \fi

\bigskip
\begin{abstract}
Time-varying parameter (TVP) models are widely used to capture evolving relationships but inference is often computationally demanding due to high-dimensional latent states. The efficiency of iterative algorithms depends critically on the choice of latent state augmentation, with centering preferred for genuinely time-varying coefficients and noncentering favored when coefficients are nearly constant. We propose a partially noncentered parametrization that bridges these extremes via dimension-specific working parameters for both location and scale. Optimal working parameters are derived by analyzing EM convergence rates, and an alternating expectation-conditional maximization algorithm is developed to implement the approach. As the convergence rate of mean-field variational Bayes (MFVB) is closely related to EM, we extend the approach to develop partially noncentered MFVB algorithms for the TVP model, placing triple gamma priors on baseline coefficients and process variances to induce shrinkage. Simulations and empirical applications show that the resulting MFVB algorithms yield posterior approximations closer to Markov chain Monte Carlo than conventional centered or noncentered parametrizations. We also develop an efficient solver for the block tridiagonal inverse matrices arising in the proposed algorithms.
\end{abstract}

\noindent%
{\it Keywords:} Time-varying parameter models, Partially noncentered
parametrization, Mean-field variational Bayes, Triple Gamma prior, State-space models
\vfill

\newpage
\spacingset{1.8} %

\section{Introduction}\label{sec:Introduction}
Time-varying parameter (TVP) models capture structural shifts in predictor–response relationships by allowing coefficients to evolve dynamically, making them indispensable in macroeconomics, finance and forecasting \citep{liu2024economic, coulombe2025time}. However, fitting TVP state space models is computationally challenging due to joint inference over the high-dimensional latent states and model parameters. For iterative algorithms, such as expectation-maximization (EM), Markov chain Monte Carlo (MCMC) and mean-field variational Bayes \citep[MFVB,][]{Blei2017}, efficiency hinges on the data augmentation scheme used to construct the latent states. Hence, data augmentation strategies have been widely developed across frequentist and Bayesian settings to accelerate convergence.

In the EM context, \citet{Meng1997} introduce working parameters to minimize the fraction of missing information \citep{Liu1999} via conditional augmentation, while \cite{Meng1999} assign priors to working parameters and integrate them out under marginal augmentation. More recently, \cite{Goplerud2021} adapts marginal augmentation to MFVB, while \citet{Tan2023} incorporates working parameters into state space models to construct efficient MCMC algorithms using block-specific reparametrization.

For hierarchical models, two standard augmentation schemes are the centered parametrization (CP) and noncentered parametrization (NCP) \citep{Gelfand1995}, whose relative performance depends on data informativeness and posterior geometry \citep{Zanella2021}. As CP and NCP exhibit complementary convergence properties, the ancillarity-sufficiency interweaving strategy \citep[ASIS,][]{Yu2011} interweaves them within MCMC iterations to reduce autocorrelation. While ASIS has been successfully applied to TVP models \citep{Bitto2019, He2024, Iacopini2026}, its implementation remains challenging in complex architectures, such as time-varying multi-seasonal autoregressive models \citep{Fagerberg2026}, where sampling under at least one parametrization is non-trivial. Alternatively, the partially noncentered parametrization \citep[PNCP,][]{Papaspiliopoulos2007} interpolates between CP and NCP via a data driven scheme that adapts the posterior correlation structure to improve efficiency. PNCP effectively accelerates convergence in both MCMC for spatial models \citep{Christensen2006, Bass2019}, and MFVB for generalized linear mixed models \citep{Tan2013, Tan2021}.

Building on these insights, we propose a dimension-specific PNCP for TVP models, which recenters and rescales latent states using working parameters $(W, \bfa)$. By analyzing the EM convergence rate via the augmented information matrix, we analytically derive optimal working parameters to construct an alternating expectation-conditional maximization \citep[AECM,][]{Meng1997} algorithm. This framework dynamically updates data augmentation schemes across parameter blocks and iterations, ensuring tractable updates and monotonic likelihood growth, while consistently outperforming CP and NCP in both convergence speed and maximized likelihood.

As the EM algorithm, Gibbs sampler and MFVB share identical convergence rates under Gaussian targets \citep{Sahu1999, Tan2014}, these optimal working parameters extend directly to MCMC and variational Bayes (VB). We propose two schemes to implement PNCP within MFVB: (1) D-CP/NCP, which preserves analytic updates by selecting between CP and NCP per dimension, and (2) CLBS (continuous location binary scale), which allows arbitrary $W$ and thresholds elements of $\bfa$ in $\{0,1\}$. In simulations and macroeconomic applications, D-CP/NCP consistently accelerates convergence and improves posterior approximation accuracy over CP and NCP, while CLBS often yields further gains through added flexibility, all while maintaining MFVB's computational edge over MCMC. To ensure high-dimensional scalability, we build on \cite{Meurant1992} to derive an efficient algorithm for computing the inverses of block tridiagonal matrices and solving their associated linear systems, as required by the AECM and MFVB updates.

As TVP models grow in dimension, shrinkage priors are essential to prevent overfitting \citep{huber2021jbes} and to classify coefficients as time-varying, static, significant or irrelevant \citep{Fruhwirth2021}. We employ triple gamma shrinkage priors \citep{Cadonna2020} on state variances and coefficients, which provide flexible control over zero-concentration and tail behavior, while unifying the horseshoe \citep{Carvalho2009}, double gamma \citep{Bitto2019} and Bayesian lasso \citep{Belmonte2014} as special cases. Dynamic variants \citep{Kowal2019, hauzenberger2024dynamic, Knaus2025} further allow shrinkage to evolve over time.

As MCMC scales poorly in high-dimensional TVP models, VB offers a computationally efficient alternative, well-suited for economic forecasting applications that require frequent model refitting. In particular, MFVB enables fast analytic updates via factorized posterior approximations. While \cite{Koop2023} and \cite{Bianco2026} implement MFVB using spike-and-slab dynamic variable selection under a CP, we consider a static triple gamma shrinkage prior under a dimension-specific PNCP. Related developments include hybrid VB \citep{Loaizamaya2022} and importance variational approximations \citep{Loaiza2025}, while \cite{frazier2023} evaluate the theoretical accuracy of VB for state space models. Alternatively, \cite{Rockova2021} employ a one-step-late EM algorithm for maximum a posteriori smoothing in dynamic variable selection.

The rest of the article is organized as follows. Section \ref{sec:TVP_model} introduces the PNCP for the TVP model. Section \ref{sec:EM} derives the optimal working parameters and details the AECM algorithm, followed by simulation and empirical studies in Section \ref{sec:EM_Applications}. Section \ref{sec:Meurant} outlines efficient computational methods for inverting high-dimensional block tridiagonal matrices. Section \ref{sec:Prior_and_density} introduces the  triple gamma shrinkage prior, and Section \ref{sec:MFVB_TVP} develops the MFVB algorithm alongside the proposed D-CP/NCP and CLBS schemes. Section \ref{sec:VA_Applications} presents empirical findings for the MFVB implementation, and Section \ref{sec:Conclusion} concludes.

\section{Partial noncentering in TVP models}\label{sec:TVP_model}
In the univariate TVP model with observations $y_t \in \mathbb{R}$, we have
\begin{align*}
y_t &= \bfx_t^\top \bfbeta_t + \epsilon_t, \quad
\epsilon_t \sim \N(0, \sigma^2), \qquad t=1, \dots, T,
\end{align*}
where the regressors $\bfx_t = (x_{t1}, \dots, x_{td})^\top\in \mathbb{R}^d$ and the latent state $\bfbeta_t  = (\beta_{t1}, \dots, \beta_{td})^\top \sim \N( \bfbeta_{t-1}, D_\bftheta)$ evolves as a random walk with initial state $\bfbeta _0  \sim \N(\bfgamma, D_\bftheta)$. Here $\bfgamma =(\gamma_1, \dots, \gamma_d)^\top$, $\bftheta = (\theta_1, \dots, \theta_d)^\top$ and $D_\bftheta = \diag(\bftheta)$, where $\theta_j > 0$ for $j=1, \dots, d$. This specification of the latent states ${\bfbeta} = (\bfbeta_0^\top, \bfbeta_1^\top, \dots, \bfbeta_T^\top)^\top$ is the CP, since each $\bfbeta_t$ is centered about $\bfgamma$ with $\E(\bfbeta_t) = \bfgamma$ and $\Cov(\bfbeta_t) = (t+1) D_\bftheta$. \citet{Bitto2019} introduce a NCP, which can improve posterior geometry and MCMC sampling efficiency when the CP mixes poorly. It defines standardized latent state $\bar{\bfbeta}_t = (\bar{\beta}_{t1}, \dots, \bar{\beta}_{td})^\top$ via
\[
\bar{\bfbeta}_t = D_\bftheta^{-1/2} (\bfbeta_t - \bfgamma),
\quad t=0, \dots, T.
\]
Under this transformation, $\bar{\bfbeta}_t \sim \N( \bar{\bfbeta}_{t-1}, I_d)$ for $t=1, \dots, T$ with initial state $\bar{\bfbeta}_0 \sim \N(\bfzr_d,I_d )$. Thus, the latent states $\bar{\bfbeta} = (\bar{\bfbeta}_0^\top, \bar{\bfbeta}_1^\top, \dots, \bar{\bfbeta}_T^\top )^\top$ are a priori independent of both $\bfgamma$ and $\bftheta$.

To interpolate between CP and NCP, we introduce a {\em dimension-specific} PNCP. Define partially noncentered latent state $\wbfbeta_t$ by
\begin{align*}
\wbfbeta_t = D_\bftheta^{-\bfa/2} (\bfbeta_t - W_t \bfgamma),
\quad t=0, \dots, T,
\end{align*}
where $W_t \in \mathbb{R}^{d \times d}$ and $\bfa = (a_1, \dots, a_d)^\top \in \mathbb{R}^d$ are working parameters controlling the degree of centering in location and scale respectively, and $D_\bftheta^{\bfa/2} = \diag(\theta_1^{a_1/2}, \dots, \theta_d^{a_d/2})$. Let $\overline{W}_t = I_d - W_t$, and stack $W= (W_0^\top, W_1^\top, \dots, W_T^\top)^\top$, $\overline{W} = (\overline{W}_0^\top, \overline{W}_1^\top, \dots,
\overline{W}_T^\top)^\top$. The PNCP nests both standard parametrizations: it reduces to the CP when ($\bfa=\bfzr$, $W = 0$), and NCP when ($\bfa=\bfone$, $\overline{W} = 0$). Let $\bfy = (y_1, \dots, y_T)^\top$ and $X = \blockdiag(\bfx_1^\top, \dots, \bfx_T^\top)$. Writing $W_{1:T}=(W_1^\top, \dots, W_T^\top)^\top$ and $\bfbeta_{1:T} = (\bfbeta_1^\top, \dots, \bfbeta_T^\top)^\top$, the PNCP can be expressed as
\begin{align*}
\bfy \mid \wbfbeta, \bfgamma, \bftheta, \sigma^2
&\sim \N \left( X W_{1:T}\bfgamma + X(I_T \otimes D_\bftheta^{\bfa/2}) \wbfbeta_{1:T}, \; \sigma^2 I_T \right), \\
\wbfbeta \mid \bfgamma, \bftheta &\sim \N \left( (I_{T+1}\otimes D_\bftheta^{-\bfa/2}) \overline{W} \bfgamma, \;
\Lambda^{-1} \otimes D_\bftheta^{\bfone-\bfa} \right),
\end{align*}
where $\Lambda \in \mathbb{R}^{(T+1)\times (T+1)}$ is a tridiagonal matrix with diagonal $(2, \dots, 2, 1)^\top$ and $-1$ on the off-diagonals. The prior precision matrix for $\wbfbeta$ is block tridiagonal, reflecting the Markov property that each $\wbfbeta_t$ is conditionally independent of all other states given its neighbors.

The PNCP is motivated by \citet{Bitto2019}, who find that in MCMC estimation, CP is more efficient for time-varying coefficients, while NCP is preferred for nearly constant coefficients. A single global parametrization is thus suboptimal when time variation differs across dimensions, and we address this by allowing the working parameters $(\bfa, W)$ to vary by dimension. Incorporating a time-varying $W_t$ further reduces posterior dependence between $\bfgamma$ and $\wbfbeta$. Unlike \citet{Bitto2019}, who interweave CP and NCP via ASIS \citep{Yu2011}, we target a computationally efficient parametrization per dimension.  To establish a principled basis for selecting the working parameters, we exploit the convergence rate of the EM algorithm, which is theoretically governed by the augmented information matrix \citep{Dempster1977}.

\section{AECM algorithm for TVP model}\label{sec:EM}
In this section, we develop an AECM algorithm for the TVP model that adaptively selects an optimal parametrization per coefficient dimension. First, we derive closed-form updates for the model parameters, restricting $\bfa \in \{0,1\}^d$ to ensure tractability. Second, we determine optimal working parameters by addressing location and scale parameters separately. Finally, we present an AECM algorithm \citep{Meng1997}, which allows the data augmentation scheme to vary across parameter blocks and iterations.

Let $\bfpsi = (\bfgamma^\top, \bftheta^\top, \sigma^2)^\top$ denote the model parameters. Our goal is to find $\bfpsi^*$ that maximizes the observed data likelihood $p(\bfy \mid \bfpsi) = \int p(\bfy, \wbfbeta \mid \bfpsi) \, d\wbfbeta$, where $\wbfbeta$ represents the latent (missing) data and $(\bfy, \wbfbeta)$ constitutes the augmented (complete) data. Given an initial estimate $\bfpsi^{(0)}$, each iteration $i$ of the EM algorithm alternates between an E-step, computing $Q(\bfpsi \mid \bfpsi^{(i)}) = \E_{\wbfbeta \mid \bfy, \bfpsi^{(i)}} [\log p(\bfy, \wbfbeta \mid \bfpsi)]$, and an M-step, updating $\bfpsi^{(i+1)} = \argmax_\bfpsi Q(\bfpsi \mid \bfpsi^{(i)})$.

Under the PNCP, the conditional posterior of the latent states is $\wbfbeta \mid \bfy, \bfpsi \sim N(\bfm_{\wbfbeta, \bfa W}, V_{\wbfbeta, \bfa})$, where the subscripts emphasize dependence on the working parameters $(\bfa, W)$. For instance, $\bfm_{\wbfbeta, \bfzr 0}$ and $\bfm_{\wbfbeta, \bfzr I}$ denote the posterior mean evaluated at $(\bfa=\bfzr, W = 0)$ and $(\bfa=\bfzr, \overline{W} = 0)$ respectively. We have
\begin{gather}
V_{\wbfbeta, \bfa} = (I_{T+1} \otimes D_\bftheta^{-\bfa/2}) V_{\wbfbeta, \bfzr} (I_{T+1} \otimes D_\bftheta^{-\bfa/2}), \quad
\bfm_{\wbfbeta, \bfa W} = (I_{T+1} \otimes D_\bftheta^{-\bfa/2})\bfm_{\wbfbeta, \bfzr W}, \nonumber \\
V_{\wbfbeta, \bfzr} = (\Lambda \otimes D_\bftheta^{-1} + Z)^{-1}, \quad
\bfm_{\wbfbeta, \bfzr W} = V_{\wbfbeta, \bfzr} \{ (\Lambda  \otimes D_\bftheta^{-1}) \overline{W} \bfgamma + \bfz \},  \label{eq: m & V}
\end{gather}
where $Z = \blockdiag(0_{d\times d}, X^\top X/\sigma^2)$ and $\bfz = (\bfzr_d^\top, \{ X^\top (\bfy -  X W_{1:T} \bfgamma)/\sigma^2 \}^\top)^\top$. As $\Lambda \otimes D_\bftheta^{-1} + Z$ is block tridiagonal, the diagonal and superdiagonal blocks of $V_{\wbfbeta, \bfzr}$ along with $\bfm_{\wbfbeta, \bfzr W}$ can be computed efficiently using Algorithm \ref{Alg: Meurant} presented in Section~\ref{sec:Meurant}. In addition,
\begin{align*}
Q(\bfpsi &\mid \bfpsi^{(i)})
= - \frac{1}{2\sigma^2} \sum_{t=1}^T \{(y_t - \bfx_t^\top W_t \bfgamma - \bfx_t^\top D_\bftheta^{\bfa/2} \bfm_{\wbfbeta_t, \bfa W}^{(i)})^2 + \bfx_t^\top D_\bftheta^{\bfa/2} V_{\wbfbeta_t, \bfa}^{(i)} D_\bftheta^{\bfa/2} \bfx_t \} -\frac{T}{2} \log \sigma^2\\
&\quad -\frac{T + (T+1)d}{2} \log(2\pi) - \frac{(T+1)}{2} \sum_{j=1}^d (1-a_j)\log \theta_j -\frac{1}{2} \tr\{(\Lambda \otimes D_\bftheta^{\bfa-\bfone})V_{\wbfbeta, \bfa}^{(i)} \}\\
&\quad  - \frac{1}{2}\{ \bfm_{\wbfbeta, \bfa W}^{(i)} - (I_{T+1}\otimes D_\bftheta^{-\bfa/2}) \overline{W} \bfgamma \}^\top (\Lambda \otimes D_\bftheta^{\bfa-\bfone}) \{ \bfm_{\wbfbeta, \bfa W}^{(i)} - (I_{T+1}\otimes D_\bftheta^{-\bfa/2}) \overline{W} \bfgamma \}.
\end{align*}

Solving $\nabla_{\bfpsi} Q(\bfpsi \mid \bfpsi^{(i)}) = 0$, we obtain closed-form updates of $\sigma^2$ and $\bfgamma$ for any $(\bfa, W)$:
\begin{align}
\sigma^2 &\leftarrow \frac{1}{T} \sum_{t=1}^T \{(y_t - \bfx_t^\top W_t \bfgamma - \bfx_t^\top D_\bftheta^{\bfa/2} \bfm_{\wbfbeta_t, \bfa W}^{(i)})^2 + \bfx_t^\top D_\bftheta^{\bfa/2} V_{\wbfbeta_t, \bfa}^{(i)} D_\bftheta^{\bfa/2} \bfx_t \},
\label{eq: sig2 update}\\
\bfgamma &\leftarrow \left\{ W^\top Z W + \overline{W}^\top (\Lambda \otimes D_\bftheta^{-1}) \overline{W} \right\}^{-1} \left[ W_{1:T}^\top X^\top \{\bfy  - X (I_T \otimes D_\bftheta^{\bfa/2}) \bfm_{\wbfbeta_{1:T}, \bfa W}^{(i)}\} / \sigma^2 \right. \nonumber \\
& \left. \quad  + \overline{W}^\top (\Lambda \otimes D_\bftheta^{\bfa/2-1}) \bfm_{\wbfbeta, \bfa W}^{(i)}\right]. \label{eq: gamma update}
\end{align}

To derive closed-form updates for $\bftheta$, we partition the coordinate set $\{1, \dots, d\}$ into two disjoint subsets, $S_\NCP$ and $S_\CP$. Let $W_{[j]}$ and $\overline{W}_{[j]}$ denote the $j$th rows of $W_t$ and $\overline{W}_t$ stacked across $t=0, \dots, T$, respectively. For $j \in S_\NCP$, we set $(a_j=1,\overline{W}_{[j]} = 0)$. For $j \in S_\CP$, we set $a_j=0$, leaving $W_{[j]}$ unconstrained. For $j \in S_\CP$,  the update for $\theta_j$ is
\begin{align} \label{eq: theta CP}
\theta_j &\leftarrow \frac{1}{T+1} \tr\left[ \Lambda \left\{ V_{\wbfbeta_{[j]}, \bfa}^{(i)} + (\bfm_{\wbfbeta_{[j]}, \bfa W}^{(i)} - \overline{W}_{[j]} \bfgamma)(\bfm_{\wbfbeta_{[j]}, \bfa W}^{(i)} - \overline{W}_{[j]} \bfgamma)^\top \right\} \right],
\end{align}
where $\bfm_{\wbfbeta_{[j]}, \bfa W} \in \mathbbm{R}^{T+1}$ collects the $j$th entry of $\bfm_{\wbfbeta_t, \bfa W}$ across $t=0, \dots, T$, and $V_{\wbfbeta_{[j]}, \bfa} \in \mathbbm{R}^{(T+1)\times (T+1)}$ has its $(t, t')$ element equal to the $(j, j)$ element of the $(t,t')$ block of $V_{\wbfbeta, \bfa}$. Let $\sqrt{\bftheta_\NCP} = (\sqrt{\theta_j})_{j \in S_\NCP}$, and define $\bfgamma_\NCP$, $\bfx_{t, \CP}$, $\bfx_{t, \NCP}$, $\wbfbeta_{t, \NCP}$ and $\bfm_{\wbfbeta_{t, \NCP}, \bfa W}^{(i)}$ analogously. Further, let $D_\bfu = \diag(\bfu)$ for any vector $\bfu$. For $j \in S_\NCP$,
\begin{align}  \label{eq: theta NCP}
\sqrt{\bftheta_\NCP} &\leftarrow \left( \sum_{t=1}^{T} D_{\bfx_{t, \NCP}}( \Var(\wbfbeta_{t, \NCP}) + \bfm_{\wbfbeta_{t, \NCP}, \bfa W}^{(i)} \bfm_{\wbfbeta_{t, \NCP}, \bfa W}^{(i)\top}) D_{\bfx_{t, \NCP}}  \right) ^{-1} \sum_{t=1}^{T} D_{\bfx_{t, \NCP}}\nonumber
\\
& \times \{ \bfm_{\wbfbeta_{t, \NCP}, \bfa W}^{(i)}  (y_t - \bfx_t^\top W_t \bfgamma -  \bfx_{t, \CP}^\top \bfm_{\wbfbeta_{t, \CP}, \bfa W}^{(i)} ) -\Cov(\wbfbeta_{t, \NCP}, \widetilde{\bfbeta}_{t, \CP}) \bfx_{t, \CP} \},
\end{align}
where $\Var(\wbfbeta_{t, \NCP})$ and $\Cov(\wbfbeta_{t, \NCP}, \widetilde{\bfbeta}_{t, \CP})$ are extracted from corresponding blocks in $V_{\wbfbeta, \bfa}^{(i)}$.

From \cite{Meng1994}, the matrix rate of convergence of the EM algorithm is $J(\bfpsi^*) = I - I_\aug^{-1}(\bfpsi^*) I_\obs(\bfpsi^*)$, where $I_\obs(\bfpsi) = - \nabla_\bfpsi^2 \log p(\bfy \mid \bfpsi)$ is the observed information and $I_\aug(\bfpsi) = - \E_{p(\wbfbeta \mid \bfy, \bfpsi)} [\nabla_\bfpsi^2 \log p(\bfy, \wbfbeta \mid \bfpsi)]$ is the augmented data information. The global rate of convergence is defined as the spectral radius (largest eigenvalue) of $J(\bfpsi^*)$, where larger values indicate slower convergence. Since $I_\obs(\bfpsi^*)$ depends only on the observed data likelihood and is invariant to reparametrization, this motivates choosing $(\bfa, W)$ to minimize $I_\aug(\bfpsi^*)$ in the positive-semidefinite order \citep{Meng1997}.

Using $I_{\aug}(\bfpsi^*)=-\nabla_{\bfpsi}^2 Q(\bfpsi \mid \bfpsi^{(i)}) \big|_{\bfpsi=\bfpsi^{(i)}=\bfpsi^*}$, we can write
\begin{equation*}
I_\aug(\bfpsi^*) = \begin{bmatrix}
I_{\bfgamma, \bfgamma} (W) & I_{\bfgamma, \bftheta} (\bfa, W) & I_{\bfgamma, \sigma^2} (W) \\
\cdot & I_{\bftheta, \bftheta} (\bfa, W) & I_{\bftheta, \sigma^2} (\bfa, W) \\
\cdot & \cdot & I_{\sigma^2, \sigma^2} \\
\end{bmatrix},
\end{equation*}
where each block depends only on the working parameters indicated in parentheses. Following \cite{Tan2023}, rather than minimizing $I_\aug(\bfpsi^*)$ globally over $(\bfa, W)$, we optimize the augmentation for each parameter block conditional on the others. These conditional schemes integrate naturally into an AECM algorithm by inserting an E-step before each conditional maximization step, enabling the augmentation to vary across parameter blocks and iterations. Consequently, it suffices to focus on the diagonal blocks of $I_\aug(\bfpsi^*)$. Since $I_{\sigma^2, \sigma^2}$ is independent of $(\bfa, W)$, our analysis concentrates on two reduced settings in which either the location parameter $\bfgamma$ or scale parameter $\bftheta$ is treated as the only unknown.

\subsection{Optimal working parameters for location}\label{sec:optimal_location}
Suppose that only the location parameter $\bfgamma$ is unknown. The augmented information,
\begin{align*}
I_{\bfgamma, \bfgamma} (W)&= W^\top Z W  + \overline{W}^\top (\Lambda \otimes D_\bftheta^{-1})\overline{W},
\end{align*}
is independent of $\bfa$. Minimizing $I_{\bfgamma, \bfgamma}$ with respect to $W$ leads to Theorem \ref{thm_opt_location}, which implies that Algorithm 1, which alternately updates $\bfm_{\wbfbeta, \bfa W}$ and $\bfgamma$, conditional on $(\bftheta, \sigma^2)$, converges instantly when $W = W_\opt$. Indeed, substituting $W = W_\opt$ into \eqref{eq: gamma update} reveals that the next iterate equals the maximum likelihood estimator, $\hat{\bfgamma}_{\text{MLE}} = (X_s^\top \Sigma_y^{-1} X_s)^{-1} (X_s^\top \Sigma_y^{-1} \bfy)$. Here $X_s = (\bfx_1, \dots, \bfx_T)^\top$, $\Sigma_y = X(\Lambda_1^{-1} \otimes D_\bftheta) X^\top + \sigma^2 I_T$ and $\Lambda_1 \in \mathbbm{R}^{T \times T}$ is a tridiagonal matrix with diagonal $(1.5, 2, \dots, 2, 1)^\top$ and off-diagonal entries $-1$. Alternatively, $\hat{\bfgamma}_{\text{MLE}}$ can be derived by integrating out the latent states to obtain the marginal distribution $\bfy \sim \N(X_s \bfgamma, \Sigma_y)$ and maximizing the log-likelihood with respect to $\bfgamma$.

\begin{theorem} \label{thm_opt_location}
The convergence rate of Algorithm 1, which alternately updates $\bfm_{\wbfbeta, \bfa W}$ and $\bfgamma$, conditional on $(\bftheta, \sigma^2)$, is minimized to zero when
\begin{align*}
W_\opt = (\Lambda \otimes D_\bftheta^{-1} + Z)^{-1}(\Lambda\bfone_{T+1} \otimes D_\bftheta^{-1}) = V_{\wbfbeta, \bfzr} (D_\bftheta^{-1}, 0, \dots, 0)^\top.
\end{align*}
\end{theorem}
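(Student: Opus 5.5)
Since only $\bfgamma$ is unknown, the EM rate matrix is $J = I - I_{\bfgamma,\bfgamma}(W)^{-1} I_\obs$, where $I_\obs = X_s^\top \Sigma_y^{-1} X_s$ does not depend on $W$. Write $A = \Lambda \otimes D_\bftheta^{-1}$. Then $I_{\bfgamma,\bfgamma}(W) = W^\top Z W + (\bfone - W)^\top A (\bfone - W)$, where $\bfone$ denotes the stacked identity $(I_d,\dots,I_d)^\top$, so that $\overline{W} = \bfone - W$.

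The plan is to show two things. First, $I_{\bfgamma,\bfgamma}(W) \succeq I_\obs$ for every $W$. Second, equality holds exactly at $W_\opt$, so that $J = 0$ and the rate is zero. This is the minimum possible, because the spectral radius is nonnegative.

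To do this, complete the square in $W$. We have
\[
I_{\bfgamma,\bfgamma}(W) = W^\top (A+Z) W - W^\top A\bfone - \bfone^\top A W + \bfone^\top A \bfone .
\]
Setting $M = A+Z$ and $W_\opt = M^{-1}A\bfone$, this becomes
\[
I_{\bfgamma,\bfgamma}(W) = (W - W_\opt)^\top M (W - W_\opt) + \bfone^\top A\bfone - \bfone^\top A M^{-1} A \bfone .
\]
Since $M \succ 0$, the minimum in the positive-semidefinite order is attained uniquely at $W_\opt$. Note that $A\bfone = \Lambda\bfone_{T+1}\otimes D_\bftheta^{-1}$. Also, $\Lambda \bfone_{T+1} = (1,0,\dots,0)^\top$, because the first row of $\Lambda$ sums to $2-1$, the interior rows sum to $0$, and the last row sums to $1-1=0$. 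This gives the second expression $W_\opt = V_{\wbfbeta,\bfzr}(D_\bftheta^{-1},0,\dots,0)^\top$.

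The main step is identifying the minimum value $\bfone^\top A\bfone - \bfone^\top A M^{-1}A\bfone$ with $X_s^\top\Sigma_y^{-1}X_s$. First, by the Woodbury identity,
\[
A - A M^{-1} A = A - A(A+Z)^{-1}A = (A^{-1} + Z^{+})^{-1},
\]
interpreted suitably because $Z$ is singular. A cleaner route is to write $Z = \tilde X^\top \tilde X/\sigma^2$ with $\tilde X = [0, X]$, the zero block corresponding to $\bfbeta_0$. Woodbury then gives
\[
A - A(A+Z)^{-1}A = \tilde X^\top (\sigma^2 I_T + \tilde X A^{-1}\tilde X^\top)^{-1}\tilde X .
\]
Next, $\tilde X\bfone = X_s$. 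It remains to check $\tilde X A^{-1}\tilde X^\top = X(\Lambda_1^{-1}\otimes D_\bftheta)X^\top$, which amounts to showing that the lower-right $T\times T$ block of $\Lambda^{-1}$ equals $\Lambda_1^{-1}$. This holds because $\Lambda^{-1}$ is the covariance of the random walk, with $\Cov(\beta_t,\beta_{t'}) = \min(t,t')+1$. Equivalently, marginalizing $\bfbeta_0$ out of the precision $\Lambda$ gives the Schur complement: the $(1,1)$ entry becomes $2 - 1/2 = 1.5$, which matches the stated $\Lambda_1$. Hence the minimal information equals $X_s^\top\Sigma_y^{-1}X_s = I_\obs$.

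Finally, I will confirm the algorithmic interpretation. Substituting $W_\opt$ into \eqref{eq: gamma update} returns $\hat\bfgamma_{\text{MLE}}$ regardless of the current iterate, which corroborates that the rate is zero. The identification of the Schur complement with $\Lambda_1$ is the step most prone to bookkeeping errors, so I will verify it directly.
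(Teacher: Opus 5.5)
Your proposal is correct and follows the paper's proof. Both obtain $W_\opt=(A+Z)^{-1}A\bfone$ as the minimizer of the quadratic $I_{\bfgamma,\bfgamma}(W)$, then use Woodbury with $Z=\tilde X^\top\tilde X/\sigma^2$, $\tilde X\bfone=X_s$ and the lower-right block of $\Lambda^{-1}$ being $\Lambda_1^{-1}$ to show the information at $W_\opt$ equals $X_s^\top\Sigma_y^{-1}X_s=I_\obs$.

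Your completing-the-square step improves slightly on the paper's first-order condition. The identity $I_{\bfgamma,\bfgamma}(W)-I_\obs=(W-W_\opt)^\top(A+Z)(W-W_\opt)$ gives Loewner-order minimality and uniqueness directly. Your checks of $\Lambda\bfone_{T+1}=(1,0,\dots,0)^\top$ and of the Schur complement supply details the paper only asserts. Drop the pseudo-inverse aside $(A^{-1}+Z^{+})^{-1}$: read with a Moore--Penrose inverse it is false (take $Z=0$), though your main argument does not use it.
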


\begin{corollary} \label{cor_opt_location}
For a fixed $T$, $\overline{W}_\opt \rightarrow 0$ as the signal-to-noise ratios $\bftheta/\sigma^2 \rightarrow \bfzr$.
\end{corollary}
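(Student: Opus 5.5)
The plan is to obtain a closed form for $\overline{W}_\opt$ from Theorem~\ref{thm_opt_location}, rescale it so that the dependence on $\bftheta/\sigma^2$ is explicit, and then use continuity of the matrix inverse with $T$ and $X$ held fixed. Throughout, write $E = \bfone_{T+1}\otimes I_d$, so that $\overline{W} = E - W$, and let $A = \Lambda\otimes D_\bftheta^{-1} + Z$.

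\textbf{Step 1: closed form for $\overline{W}_\opt$.} By the mixed-product rule, $(\Lambda\otimes D_\bftheta^{-1})E = \Lambda\bfone_{T+1}\otimes D_\bftheta^{-1}$. Theorem~\ref{thm_opt_location} therefore gives $W_\opt = A^{-1}(\Lambda\otimes D_\bftheta^{-1})E$. Writing $E = A^{-1}AE$, we get
\[
\overline{W}_\opt = E - W_\opt = A^{-1}\{AE - (\Lambda\otimes D_\bftheta^{-1})E\} = A^{-1} Z E .
\]
The remaining task is to show that $A^{-1}ZE \to 0$.

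\textbf{Step 2: rescaling.} Let $S = I_{T+1}\otimes D_\bftheta^{1/2}$. Then $\Lambda\otimes D_\bftheta^{-1} = S^{-1}(\Lambda\otimes I_d)S^{-1}$. Define $\widetilde Z = SZS = \blockdiag(0_{d\times d}, D_\bftheta^{1/2}\bfx_t\bfx_t^\top D_\bftheta^{1/2}/\sigma^2,\ t=1,\dots,T)$, so that $A = S^{-1}MS^{-1}$ with $M = \Lambda\otimes I_d + \widetilde Z$. Hence
\[
\overline{W}_\opt = S M^{-1} S\, Z E = S M^{-1}\, SZE .
\]
The $(t,k)$ block of $SZE$ is $D_\bftheta^{1/2}\bfx_t\bfx_t^\top/\sigma^2$ for $t\ge1$ (and $0$ for $t=0$). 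Its $(m,k)$ entry is $\theta_m^{1/2}x_{tm}x_{tk}/\sigma^2$. Consequently every entry of $\overline{W}_\opt$ is a finite sum of terms of the form
\[
\theta_j^{1/2}\,(M^{-1})_{rs}\,\theta_m^{1/2}\,x_{tm}x_{tk}/\sigma^2 ,
\]
where $r,s$ index rows and columns of $M^{-1}$.

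\textbf{Step 3: limits.} First, $|\theta_j^{1/2}\theta_m^{1/2}/\sigma^2| \le \max_l \theta_l/\sigma^2 \to 0$. Second, every entry of $\widetilde Z$ is bounded by $\max_l(\theta_l/\sigma^2)\,\max_{t,j,k}|x_{tj}x_{tk}|$, so $\widetilde Z \to 0$ and $M \to \Lambda\otimes I_d$.

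The matrix $\Lambda$ is positive definite: it is the precision matrix of the random walk with $\bfbeta_0$ included, as shown by $\bfu^\top\Lambda\bfu = u_0^2+\sum_{t\ge1}(u_t-u_{t-1})^2$ after reordering. Hence $\Lambda\otimes I_d$ is invertible. Moreover $M \succeq \Lambda\otimes I_d$ since $\widetilde Z\succeq0$, so $\|M^{-1}\|_2 \le \|\Lambda^{-1}\|_2$ uniformly. With $T$ and $X$ fixed, each entry of $\overline{W}_\opt$ is therefore bounded by a constant times $\max_l\theta_l/\sigma^2$, and so tends to $0$.

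\textbf{Main obstacle.} The delicate point is that the components $\theta_j$ may approach zero at different rates. A naive conjugation by $D_\bftheta^{1/2}$ leaves ratios $\theta_j^{1/2}/\theta_k^{1/2}$ that need not be bounded. The factorisation $\overline{W}_\opt = SM^{-1}(SZE)$ avoids this, because the inverse powers of $\theta$ cancel inside $SZE$. The uniform bound $M^{-1}\preceq \Lambda^{-1}\otimes I_d$ then removes any need to control the limit of $M$ more precisely.
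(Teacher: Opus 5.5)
Your proof is correct, but it takes a more elaborate route than the paper's. The paper rescales on one side only, writing $\Lambda\otimes D_\bftheta^{-1}+Z=(I_{T+1}\otimes D_\bftheta^{-1})\{\Lambda\otimes I_d+(I_{T+1}\otimes D_\bftheta)Z\}$. The powers of $\bftheta$ then cancel exactly against $\Lambda\bfone_{T+1}\otimes D_\bftheta^{-1}$, giving $W_\opt=\{\Lambda\otimes I_d+(I_{T+1}\otimes D_\bftheta)Z\}^{-1}(\Lambda\bfone_{T+1}\otimes I_d)$. The perturbation $(I_{T+1}\otimes D_\bftheta)Z$ has entries $\theta_j x_{tj}x_{tk}/\sigma^2$, each tending to $0$, and $\Lambda\otimes I_d$ is invertible. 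Continuity of matrix inversion therefore gives $W_\opt\to\bfone_{T+1}\otimes I_d$ immediately.

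In particular, the ``main obstacle'' you describe (unbounded ratios $\theta_j^{1/2}/\theta_k^{1/2}$) is an artifact of conjugating symmetrically. The one-sided factorization never produces such ratios. Its cost is a non-symmetric perturbation, which is harmless for a pure continuity argument.

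What your route buys is a quantitative statement. You write $\overline{W}_\opt=A^{-1}ZE$ and use the symmetric form $M=\Lambda\otimes I_d+\widetilde Z$ with $\widetilde Z\succeq 0$. This gives the Loewner bound $M^{-1}\preceq\Lambda^{-1}\otimes I_d$, which the non-symmetric version would not directly provide. The result is an explicit, non-asymptotic estimate: every entry of $\overline{W}_\opt$ is at most $C\max_l\theta_l/\sigma^2$ in absolute value, with $C$ depending only on $T$, $X$ and $\|\Lambda^{-1}\|_2$. The paper's argument is shorter but only qualitative.
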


Expressing $W_\opt = \{ (\Lambda \otimes I_d) + (I_d \otimes D_\bftheta) Z\}^{-1} (\Lambda \bfone_{T+1} \otimes I_d)$ reveals that $W_\opt$ depends on $(\bftheta, \sigma^2)$ only through the signal-to-noise ratios $\bftheta/\sigma^2$, since $(I_d \otimes D_\bftheta) Z$ is a block diagonal matrix with a zero first block and $(t+1)$th block $D_\bftheta \bfx_t \bfx_t^\top /\sigma^2$ for $t=1, \dots, T$.  Moreover, Corollary \ref{cor_opt_location} shows that noncentering is preferred as $\bftheta/\sigma^2 \rightarrow 0$, aligning with \citet{Bitto2019}'s finding that NCP is preferred when coefficients are nearly constant over time. Evaluating $W_\opt$ reduces to computing the first block-column of a block tridiagonal inverse matrix, which can be executed efficiently via Algorithm \ref{Alg: Meurant} in Section \ref{sec:Meurant}.

To examine the behavior of $W_\opt$, we set $\bftheta=(10^{-8}, 10^{-4}, 0.01, 0.1)^\top$, $\sigma^2=1$ and $T\in \{100, 500, 2500\}$, generating covariates  $x_{t1} = 1$ and $x_{tj} \sim \N(0,1)$ for $j>1$ and $t=1, \dots, T$. As shown in Figure \ref{Fig - locationW}, the diagonal elements of $W_{\opt, t}$ generally lie in $[0,1]$ and decrease with both $t$ and $\theta_j$. This pattern aligns with Corollary  \ref{cor_opt_location}, which implies $W_{\opt, t,jj} \rightarrow 1$ as $\theta_j \rightarrow 0$. However, this convergence toward 1 slows significantly for larger $T$.

\begin{figure}[htb!]
\centering
\includegraphics[width=0.83\textwidth]{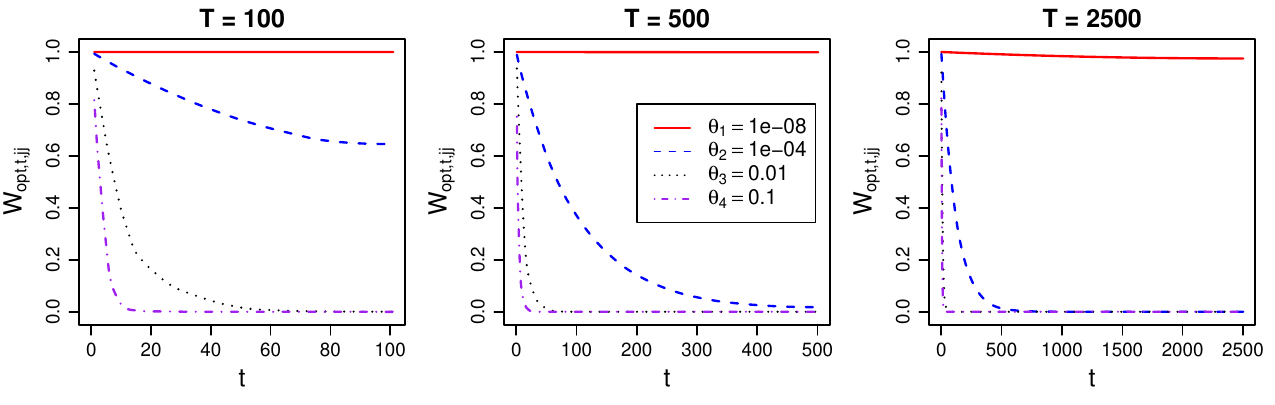}
\caption{Diagonal entries of $W_{\opt, t}$ for $\bftheta=(10^{-8}, 10^{-4}, 0.01, 0.1)^\top$.}
\label{Fig - locationW}
\end{figure}

Intuitively, $W_\opt$ reflects the prior coupling between $\bfgamma$ and the latent states $\bfbeta$. Small signal-to-noise ratios $\bftheta/\sigma^2$ tighten this coupling, making $\bfbeta$ highly informative about $\bfgamma$. Treating $\bfbeta$ as missing data causes CP to suffer from excessive missing information, and thus NCP is preferred. Conversely, large ratios $\bftheta/\sigma^2$ weaken this coupling, favoring CP. This preference for CP strengthens over time as the random walk variance accumulates ($\Var(\bfbeta_t) = (t+1)D _\theta$), making $\bfbeta_t$ progressively less informative about $\bfgamma$.

Finally, Corollary \ref{cor2_opt_location} reveals an insightful connection. The optimal working parameter $W_\opt$, derived to optimize the convergence rate of Algorithm 1, also induces conditional posterior independence between $\wbfbeta$ and $\bfgamma$. Thus, setting $W = W_\opt$ yields a zero convergence rate for the two-block Gibbs sampler that alternately samples $\wbfbeta$ and $\bfgamma$ given $(\bftheta, \sigma^2)$. This highlights $W_\opt$ as a principled parametrization choice extending beyond EM to MCMC and MFVB.

\begin{corollary}\label{cor2_opt_location}
At $W = W_\opt$, $\wbfbeta$ and $\bfgamma$ are conditionally independent given $(\bfy, \bftheta, \sigma^2)$.
\end{corollary}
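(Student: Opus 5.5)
The plan is to show directly that, at $W = W_\opt$, the conditional posterior $p(\wbfbeta \mid \bfy, \bfgamma, \bftheta, \sigma^2)$ does not depend on $\bfgamma$. Once that holds, for any prior on $\bfgamma$ we can factor the joint posterior as
\[
p(\wbfbeta, \bfgamma \mid \bfy, \bftheta, \sigma^2) = p(\bfgamma \mid \bfy, \bftheta, \sigma^2)\, p(\wbfbeta \mid \bfy, \bfgamma, \bftheta, \sigma^2).
\]
The second factor is then a function of $\wbfbeta$ alone. It therefore equals $p(\wbfbeta \mid \bfy, \bftheta, \sigma^2)$, and conditional independence follows.

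By \eqref{eq: m & V}, $\wbfbeta \mid \bfy, \bfgamma, \bftheta, \sigma^2 \sim \N(\bfm_{\wbfbeta, \bfa W}, V_{\wbfbeta, \bfa})$. The covariance $V_{\wbfbeta, \bfa}$ involves only $(\bftheta, \sigma^2, \bfa)$, so it is free of $\bfgamma$. The mean satisfies $\bfm_{\wbfbeta, \bfa W} = (I_{T+1} \otimes D_\bftheta^{-\bfa/2}) \bfm_{\wbfbeta, \bfzr W}$, so it suffices to show that $\bfm_{\wbfbeta, \bfzr W}$ has zero coefficient on $\bfgamma$. Because $Z$ has a zero first block, $\bfz = \bfz_0 - Z W \bfgamma$, where $\bfz_0 = (\bfzr_d^\top, (X^\top \bfy/\sigma^2)^\top)^\top$.

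Using $\overline{W} = \bfone_{T+1} \otimes I_d - W$ and $(\Lambda \otimes D_\bftheta^{-1})(\bfone_{T+1} \otimes I_d) = \Lambda \bfone_{T+1} \otimes D_\bftheta^{-1}$, we obtain
\[
\bfm_{\wbfbeta, \bfzr W} = V_{\wbfbeta, \bfzr} \bfz_0 + \left\{ V_{\wbfbeta, \bfzr}(\Lambda \bfone_{T+1} \otimes D_\bftheta^{-1}) - V_{\wbfbeta, \bfzr}(\Lambda \otimes D_\bftheta^{-1} + Z) W \right\} \bfgamma .
\]
Since $V_{\wbfbeta, \bfzr} = (\Lambda \otimes D_\bftheta^{-1} + Z)^{-1}$, the coefficient of $\bfgamma$ equals $V_{\wbfbeta, \bfzr}(\Lambda \bfone_{T+1} \otimes D_\bftheta^{-1}) - W$. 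This vanishes exactly when $W = W_\opt$ as defined in Theorem \ref{thm_opt_location}.

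As a consistency check on the second form of $W_\opt$, note that $\Lambda \bfone_{T+1} = (1, 0, \dots, 0)^\top$. The first row of $\Lambda$ sums to $2 - 1 = 1$, the interior rows sum to $0$, and the last row sums to $1 - 1 = 0$.

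There is no substantive obstacle; the argument is bookkeeping. The only care needed is in handling the $\bfgamma$-dependence hidden inside $\bfz$ through $X W_{1:T} \bfgamma$, and in rewriting it as $Z W \bfgamma$ using the zero first block of $Z$. This is what allows the two $W$-terms to combine into $(\Lambda \otimes D_\bftheta^{-1} + Z) W$, which then cancels against $V_{\wbfbeta, \bfzr}$.
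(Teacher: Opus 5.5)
Your proposal is correct and follows essentially the same route as the paper. Both arguments write $\bfz = \bfz_\bfy - ZW\bfgamma$ and $\overline{W} = \bfone_{T+1}\otimes I_d - W$. This reduces the $\bfgamma$-coefficient of $\bfm_{\wbfbeta,\bfzr W}$ to $V_{\wbfbeta,\bfzr}(\Lambda\bfone_{T+1}\otimes D_\bftheta^{-1}) - W$, which vanishes at $W = W_\opt$.

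Your explicit factorization step, going from a $\bfgamma$-free conditional law to conditional independence for any prior on $\bfgamma$, spells out a step the paper leaves implicit.
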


\subsection{Optimal working parameters for scale}\label{sec: optimal scale}
Next, we consider the scale $\bftheta$ as the only unknown parameter. The entries of $I_{\bftheta, \bftheta}$ are
\begin{multline*}
I_{\theta_j, \theta_j} = \frac{1}{2\theta_j^2} \bigg\{ (T+1) \left(1 - 2a_j + \frac{a_j^2}{2}\right) - \frac{2a_j \bfm_{\wbfbeta_{[j]}, \bfzr I}^\top \Lambda \overline{W}_{[j]} \bfgamma}{\theta_j} + \frac{a_j^2 \bfm_{\wbfbeta_{[j]}, \bfzr W}^\top \bar{V}_{\wbfbeta_{[j]}, \bfzr}^{-1} \bfm_{\wbfbeta_{[j]}, \bfzr W} }{2}
\\
+ \frac{a_j^2}{2} \tr(\bar{V}_{\wbfbeta_{[j]}, \bfzr}^{-1} V_{\wbfbeta_{[j]}, \bfzr} ) \bigg\}, \;\;
I_{\theta_j, \theta_k} =\frac{a_j a_k }{4\sigma^2\theta_j \theta_k} \sum_{t=1}^T x_{tj}\left(  m_{\wbeta_{tj}, 0W} m_{\wbeta_{tk}, 0W}
+ V_{\wbfbeta_t, 0, jk}  \right)x_{tk} \text{ if $j \neq k$}.
\end{multline*}
Here, $\bar{V}_{\wbfbeta_{[j]}, \bfzr}^{-1} \in \mathbbm{R}^{(T+1) \times (T+1)}$ has $(t, t')$ entry equal to the $(j, j)$ element of the $(t, t')$ block of $V_{\wbfbeta, \bfzr}^{-1}$. Unlike in the location case, achieving a zero convergence rate or establishing a clear ordering among information matrices is generally impossible. Moreover, for $d>1$, jointly optimizing over $\bfa$ is considerably more difficult than in the univariate setting. To maintain tractability, we minimize the diagonal entries $I_{\theta_j,\theta_j}$ as a surrogate, as positive definiteness of $I_{\bftheta, \bftheta}$ generally requires its diagonal elements to dominate the off-diagonal terms.

Theorem \ref{thm_aopt} provides the optimal choice of $a_j$ for any given $\overline{W}_{[j]}$, and in the special case $\overline{W}_{[j]} = 0$. Although $I_{\theta_j, \theta_j}$ can in principle be minimized over $a_j$ and $\overline{W}_{[j]}$ jointly, the resulting optimal $\overline{W}_{[j]}$ contains a $\gamma_j^{-1}$ term. This can cause numerical instability when $\gamma_j$ is close to zero, a plausible scenario in TVP models when the $j$th predictor is irrelevant.

\begin{theorem} \label{thm_aopt}
For any $j \in \{1, \dots, d\}$, $I_{\theta_j, \theta_j}$ is minimized with respect to $a_j$ at
\begin{align} \label{gen_a_opt}
a_{\opt, j} = \frac{2(T+1) + 2\bfm_{\wbfbeta_{[j]}, \bfzr I}^\top \Lambda \overline{W}_{[j]} \bfgamma/\theta_j}{T+1 +  \bfm_{\wbfbeta_{[j]}, \bfzr W}^\top \bar{V}_{\wbfbeta_{[j]}, \bfzr}^{-1} \bfm_{\wbfbeta_{[j]}, \bfzr W} + \tr(\bar{V}_{\wbfbeta_{[j]}, \bfzr}^{-1} V_{\wbfbeta_{[j]}, \bfzr} )}.
\end{align}
If $\overline{W}_{[j]} = 0$, then $a_{\opt, j}^* = 2\{1 + C_j/(T+1) \}^{-1}$, where $C_j = \tr\{ \bar{V}_{\wbfbeta_{[j]}, \bfzr}^{-1} (V_{\wbfbeta_{[j]}, \bfzr} + \bfm_{\wbfbeta_{[j]}, \bfzr I} \bfm_{\wbfbeta_{[j]}, \bfzr I}^\top)\}$.
\end{theorem}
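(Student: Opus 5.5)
The plan is to treat the displayed formula for $I_{\theta_j,\theta_j}$ (stated just before the theorem) as a function of the single scalar $a_j$, with $\bfgamma$, $\bftheta$, $\sigma^2$ and $\overline{W}_{[j]}$ held fixed. An important preliminary is that every other ingredient in that formula is free of $a_j$. The quantities $\bfm_{\wbfbeta_{[j]}, \bfzr I}$, $\bfm_{\wbfbeta_{[j]}, \bfzr W}$, $V_{\wbfbeta_{[j]}, \bfzr}$ and $\bar{V}_{\wbfbeta_{[j]}, \bfzr}^{-1}$ are all evaluated at $\bfa=\bfzr$. By \eqref{eq: m & V}, the $\bfa$-dependence of $\bfm_{\wbfbeta,\bfa W}$ and $V_{\wbfbeta,\bfa}$ is carried entirely by the factors $D_\bftheta^{-\bfa/2}$. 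These factors were already pulled out when deriving $I_{\theta_j,\theta_j}$, which is why explicit powers of $a_j$ appear in the formula.

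Multiplying through by $2\theta_j^2$, the criterion takes the form $f(a_j) = A - B a_j + \tfrac{1}{2} C a_j^2$. Here
\[
A = T+1, \qquad B = 2(T+1) + 2\bfm_{\wbfbeta_{[j]}, \bfzr I}^\top \Lambda \overline{W}_{[j]} \bfgamma/\theta_j,
\]
and
\[
C = T+1 + \bfm_{\wbfbeta_{[j]}, \bfzr W}^\top \bar{V}_{\wbfbeta_{[j]}, \bfzr}^{-1} \bfm_{\wbfbeta_{[j]}, \bfzr W} + \tr(\bar{V}_{\wbfbeta_{[j]}, \bfzr}^{-1} V_{\wbfbeta_{[j]}, \bfzr}).
\]
This grouping uses $(T+1)(1-2a_j+a_j^2/2)$ together with the $a_j^2/2$ terms. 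The function $f$ is a quadratic in $a_j$, so setting $f'(a_j)=0$ gives the stationary point $a_j = B/C$, which is exactly \eqref{gen_a_opt}.

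The step that needs care is showing that this stationary point is a minimum, i.e.\ $C>0$. I would argue this as follows:
\begin{itemize}
\item $V_{\wbfbeta,\bfzr}=(\Lambda\otimes D_\bftheta^{-1}+Z)^{-1}$ is positive definite, because $\Lambda$ is positive definite (it is the precision matrix of a random walk) and $Z\succeq 0$. Hence $V_{\wbfbeta,\bfzr}^{-1}$ is positive definite.
\item $\bar{V}_{\wbfbeta_{[j]},\bfzr}^{-1}$ is a principal submatrix of a permutation of $V_{\wbfbeta,\bfzr}^{-1}$, so it is also positive definite.
\item $V_{\wbfbeta_{[j]},\bfzr}$ is likewise a principal submatrix of $V_{\wbfbeta,\bfzr}$ and is therefore positive definite.
\end{itemize}
It follows that the quadratic form term is nonnegative and the trace of a product of two positive definite matrices is positive. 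Combined with $T+1>0$, this gives $C>0$, so $f$ is strictly convex and $B/C$ is its unique global minimizer.

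For the special case $\overline{W}_{[j]}=0$, the cross term in the numerator vanishes, leaving $B=2(T+1)$. Moreover, $\overline{W}=0$ in the sense relevant to coordinate $j$ means $W_{[j]}$ is the identity stacking. In \eqref{eq: m & V}, the $j$th coordinate contribution $(\Lambda\otimes D_\bftheta^{-1})\overline{W}\bfgamma$ then disappears, so $\bfm_{\wbfbeta_{[j]},\bfzr W}$ coincides with $\bfm_{\wbfbeta_{[j]},\bfzr I}$. I would check this identification carefully, since $\overline{W}_{[j]}$ only fixes the $j$th rows, and the theorem's notation implicitly takes $W=I$ in this case.

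With that identification, combining the quadratic form and the trace via $\bfm^\top M\bfm=\tr(M\bfm\bfm^\top)$ gives $C = T+1+C_j$. Therefore
\[
a_{\opt,j}^* = \frac{2(T+1)}{T+1+C_j} = 2\{1+C_j/(T+1)\}^{-1},
\]
as claimed. The main obstacle I anticipate is this bookkeeping: confirming that the $a_j$-free quantities are genuinely independent of $a_j$, and that the mean vectors match in the special case. The optimization itself is elementary.
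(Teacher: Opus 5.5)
Your main argument is the paper's. You treat $I_{\theta_j,\theta_j}$ as a quadratic in $a_j$ whose other ingredients are free of $a_j$, solve the first-order condition, and check that the second derivative $\{T+1+\bfm_{\wbfbeta_{[j]},\bfzr W}^\top\bar V_{\wbfbeta_{[j]},\bfzr}^{-1}\bfm_{\wbfbeta_{[j]},\bfzr W}+\tr(\bar V_{\wbfbeta_{[j]},\bfzr}^{-1}V_{\wbfbeta_{[j]},\bfzr})\}/(2\theta_j^2)$ is positive. Your principal-submatrix argument for that positivity is correct and supplies a justification the paper only asserts.

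The step you rightly flagged, the special case, does not go through as written.

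\begin{itemize}
\item Observing that the $j$th-coordinate part of $(\Lambda\otimes D_\bftheta^{-1})\overline W\bfgamma$ vanishes does not give $\bfm_{\wbfbeta_{[j]},\bfzr W}=\bfm_{\wbfbeta_{[j]},\bfzr I}$ when only $\overline W_{[j]}=0$. Premultiplication by $V_{\wbfbeta,\bfzr}$ mixes coordinates through the blocks $\bfx_t\bfx_t^\top/\sigma^2$ of $Z$, and $\bfz$ involves every row of $W_{1:T}$.
\item Falling back on $\overline W=0$ globally proves less than the theorem states. The row-wise version is what the $S_\CP/S_\NCP$ comparison relies on; for instance, in D-CP/NCP $\overline W_{[k]}\neq 0$ for $k\in S_\CP$.
\end{itemize}

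What you need is an exact identity valid for every $W$. Write $\bfz=\bfz_\bfy-ZW\bfgamma$ with $\bfz_\bfy=(\bfzr_d^\top,\{X^\top\bfy/\sigma^2\}^\top)^\top$, and $W=(\bfone_{T+1}\otimes I_d)-\overline W$. Since $V_{\wbfbeta,\bfzr}=(\Lambda\otimes D_\bftheta^{-1}+Z)^{-1}$,
\[
\bfm_{\wbfbeta,\bfzr W}=V_{\wbfbeta,\bfzr}\bigl\{(\Lambda\otimes D_\bftheta^{-1}+Z)\overline W\bfgamma+\bfz_\bfy-Z(\bfone_{T+1}\otimes I_d)\bfgamma\bigr\}=\overline W\bfgamma+\bfm_{\wbfbeta,\bfzr I}.
\]
Taking the $j$th entry at each $t$ gives $\bfm_{\wbfbeta_{[j]},\bfzr W}=\bfm_{\wbfbeta_{[j]},\bfzr I}+\overline W_{[j]}\bfgamma$. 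Hence $\overline W_{[j]}=0$ alone suffices, whatever the other rows of $W$ are. This is the identity the paper records in its supplement. With it, your computation $C=T+1+C_j$ completes the proof.
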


To obtain closed-form updates for $\bftheta$, we partitioned the coordinate set $\{1, \dots, d\}$ into $S_\CP$ and $S_\NCP$. Under this scheme, each coordinate $j$ can be assigned to $S_\CP$ or $S_\NCP$ by comparing its augmented information $I_{\theta_j, \theta_j}$ under both parametrizations. For $j \in S_\CP$ ($a_j=0$), $I_{\theta_j, \theta_j} = (T+1)/(2\theta_j^2)$ for any $\overline{W}_{[j]}$. Conversely, for $j \in S_\NCP$ ($a_j=1$, $\overline{W}_{[j]} = 0$), $I_{\theta_j, \theta_j} = \left\{ C_j -(T+1) \right\} / (4\theta_j^2)$. Comparing these quantities yields the assignment rule: $j \in S_\NCP$ if $C_j < 3(T+1)$, and $j \in S_\CP$ otherwise. Notably, this criterion is equivalent to assigning $j\in S_\NCP$ if $a_{\opt,j}^* > 0.5$ and $j\in S_\CP$ otherwise. This equivalence holds because $I_{\theta_j, \theta_j}$ for $j\in S_\CP$ is independent of $\overline W_{[j]}$, and setting $\overline W_{[j]}=0$ recovers $a_{\opt,j}^*$ from Theorem~\ref{thm_aopt}.

To investigate the behavior of $\bfa_\opt^*$, we simulate data from the TVP model. We generate $X$ as in Section \ref{sec:optimal_location} with $d=11$, set $\bfgamma = (0, 1.5, 0, 1.5, \dots, 0)$, $\bftheta=(10^{-10}, 10^{-9}, \dots, 1)^\top$ and $\sigma^2 = 1$, and consider $T \in \{100, 500, 2500\}$. For each configuration, we report the mean and standard deviation of $\bfa$ over $1000$ replications. Unlike $W_\opt$, $\bfa_\opt^*$ varies with $\bfy$ even when $X$ and $\bfpsi$ are held fixed. Figure \ref{Fig - scalea} shows that $a_{\opt, j}^*$ typically lies within $[0,1]$ and decreases toward $0$ as $\theta_j$ increases, consistent with the observation in \citet{Bitto2019} that NCP is preferred when coefficients are nearly constant.

\begin{figure}[htb!]
\centering
\includegraphics[width=0.82\textwidth]{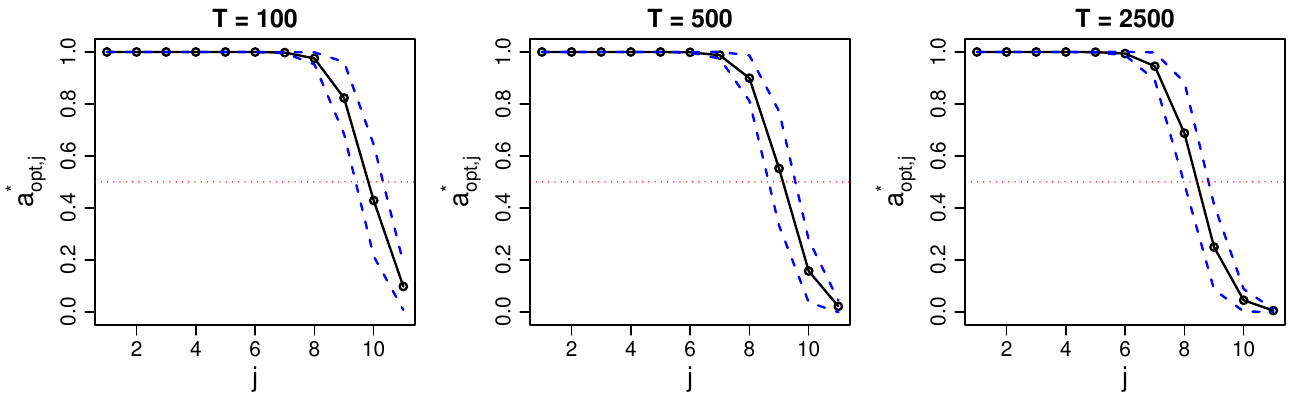}
\caption{Mean (black solid line) and one standard deviation (blue dotted line) of $\bfa_{\opt}^*$ across $1000$ simulations, for $\bftheta=(10^{-10}, 10^{-9}, \dots, 1)^\top$ and $\bfgamma = (0, 1.5, 0, 1.5, \dots, 0)$}
\label{Fig - scalea}
\end{figure}

\subsection{AECM algorithm}\label{sec: AECM algorithm}
Next, we present Algorithm \ref{Alg: AECM} (AECM) to compute the maximum likelihood estimate $\hat{\bfpsi}$, leveraging the optimal augmentation schemes derived in Sections \ref{sec:optimal_location} and \ref{sec: optimal scale} prior to each conditional maximization step.

For the location block, plugging $W = W_\opt$ (Theorem \ref{thm_opt_location}) into \eqref{eq: gamma update} yields the simplified update for $\bfgamma$ presented in Algorithm \ref{Alg: AECM}, which is independent of both $\bfa$ and $\bfm_{\wbfbeta, \bfa W}$. This formulation is computationally more efficient than, yet mathematically equivalent to, directly evaluating the maximum likelihood estimator $\hat{\bfgamma}_{\text{MLE}}$ obtained by integrating out the latent states.

For the scale block, we set $\overline{W}=0$ without loss of generality and partition $\{1, \dots, d\}$ into $S_\CP$ and $S_\NCP$. Specifically, we evaluate $a_{\opt,j}^*$ (Theorem \ref{thm_aopt}) for each $j$, setting $j\in S_\NCP$ ($a_j=1$) if
$a_{\opt,j}^* > 0.5$ and $j\in S_\CP$ ($a_j = 0$) otherwise. Subsequently, $\bftheta$ is updated using \eqref{eq: theta CP} or \eqref{eq: theta NCP} according to the assignment of each $j$. Finally, for convenience, the observation variance is also updated within this block, as its convergence rate is invariant to $(\bfa, W)$.

\setcounter{algorithm}{1}
\begin{algorithm}[tb!]
\caption{(AECM Algorithm)}
\label{Alg: AECM}
Initialize $\bfpsi$.
For $i=0,1,2,\ldots$:
\begin{enumerate}[leftmargin=*]
\item \textbf{Location block:} Set $W = W_\opt$ (Theorem \ref{thm_opt_location}) and update
\[
\bfgamma \leftarrow \{ W^\top Z W + \overline{W}^\top (\Lambda \otimes D_\bftheta^{-1}) \overline{W} \}^{-1}  W_{1:T}^\top X^\top \bfy / \sigma^2.
\]
\item \textbf{Scale block:}
\begin{itemize}[leftmargin=*]
\item Set $\overline W=0$ and compute $\bfa_\opt^*$ (Theorem \ref{thm_aopt}). Set $a_j = \mathbbm{1} \{a_{\opt,j}^* > 0.5\}$ for $j=1,\ldots,d$.
\item Compute $\bfm_{\wbfbeta, \bfa W}$ and $V_{\wbfbeta, \bfa}$.
\item Update $\bftheta$ using \eqref{eq: theta CP} or \eqref{eq: theta NCP} according to the assignment of $j=1, \dots, d$.
\item Update $\sigma^2$ using \eqref{eq: sig2 update}.
\end{itemize}
\item Compute the observed data log-likelihood $\log p(\bfy \mid \bfpsi)$.
\end{enumerate}
\end{algorithm}

In Algorithm \ref{Alg: AECM}, $W_\opt$ (location block) and $\bfa_\opt^*$ (scale block) can be computed jointly since both depend on $V_{\wbfbeta, \bfzr}$, the inverse of a block tridiagonal matrix. Once $\bfa$ is determined, the moments $\bfm_{\wbfbeta, \bfzr I}$ and $V_{\wbfbeta, \bfzr}$ used to evaluate $\bfa_\opt^*$, can be readily transformed into $\bfm_{\wbfbeta, \bfa W}$ and $V_{\wbfbeta, \bfa}$ via the identities in \eqref{eq: m & V}. Thus, each iteration of Algorithm \ref{Alg: AECM} requires only one execution of Algorithm \ref{Alg: Meurant} (Section \ref{sec:Meurant}) to evaluate $V_{\wbfbeta, \bfzr}$, keeping the computational cost of the proposed PNCP scheme comparable to standard CP and NCP implementations.

The data augmentation scheme in Algorithm \ref{Alg: AECM} adapts dynamically across both parameter blocks and iterations, while maintaining monotonic increase of the observed data likelihood. By leveraging the block-specific reparametrization framework of \cite{Tan2023}, the PNCPs underpinning the AECM algorithm, and their associated computational efficiencies, can also be extended to MCMC sampling, while preserving the invariance of the target posterior. This represents a promising direction for future research.

\section{Applications of AECM algorithm for TVP model}\label{sec:EM_Applications}
To evaluate performance, we compare PNCP (Algorithm \ref{Alg: AECM}) against CP ($W = 0$, $\bfa = \bf0$ in scale block) and NCP ($\overline{W} = 0$, $\bfa = \bfone$ in scale block), across one simulated dataset and three empirical applications (with all predictors standardized to zero mean and unit variance). All schemes retain the location block update, which is equivalent to $\hat{\bfgamma}_{\text{MLE}}$. Algorithm \ref{Alg: AECM} is terminated when the relative increase in $\log p(\bfy \mid \bfpsi)$ falls below 0.001\% or upon reaching a maximum of 2000 iterations. Table \ref{tab:EM_results} summarizes the maximized log-likelihoods, iteration counts, runtimes and $|S_\NCP|/d$ ratios, while Figure \ref{fig:em_loglik} illustrates the log-likelihood and $\bfa_\opt^*$ trajectories. Further results are given in Section~\ref{supp:EM_results} of the Supplement. All experiments are performed on an Apple M3 Pro computer with 18 GB of memory, using R and Julia 1.12.7.

For the simulation study, we generate $T=200$ observations from the TVP model with $d=6$, setting $\bfgamma = (0, -0.3, 1.5, 0, -0.3, 1.5)^\top$, $\bftheta = (0, 0, 0.01, 0.05, 0.1, 0.5)^\top$ and $\sigma^2 = 1$. The covariates are generated as $x_{t1} = 1$ and $x_{tj} \sim \N(0,1)$ for $j>1$ and $t=1, \dots, T$. Under this setup, $\beta_{t1}$ is static and insignificant, $\beta_{t2}$ is static but significant, while $\beta_{t3}$ through $\beta_{t6}$ are time-varying with increasing state variances. PNCP automatically assigns NCP to the first three dimensions which have the smallest state variances, and CP to the remaining three. As shown in Table \ref{tab:EM_results}, PNCP achieves faster convergence than both CP and NCP while attaining the highest maximized log-likelihood.

\begin{table}[tb!]
\centering
\caption{AECM algorithm performance under CP, NCP and PNCP. Bold values indicate the highest attained log-likelihood, and the lowest iteration count and runtime.}
\label{tab:EM_results}
\begin{small}
\begin{tabular}{llrrrr}
\hline
& {Scheme} & {Simulation} & {U.S. Macro} & {U.S. Inflation} & {ECB} \\
\hline
\multirow{3}{*}{Log-likelihood}
 & CP & $-406.12$ & $159.18$ & $-170.33$ & $26.14$ \\
 & NCP & $-404.57$ & $159.97$ & $\bm{-165.73}$ & $27.10$ \\
 & PNCP & ${\bf -404.25}$ & ${\bf 159.99}$ & $\bm{-165.73}$ & $\bm{30.49}$ \\
\hline
\multirow{3}{*}{\makecell[l]{Iterations}}
 & CP & $275$ & $919$ & $1567$ & $2000$ \\
 & NCP & $410$ & $1445$ & $\bm{63}$ & $2000$ \\
 & PNCP & $\bm{41}$ & $\bm{871}$ & $\bm{63}$ & $2000$ \\
\hline
\multirow{3}{*}{Time (s)}
 & CP & $0.8$ & $1.9$ & $34.5$ & $\bm{107.9}$ \\
 & NCP & $0.8$ & $2.6$ & $\bm{1.3}$ & $108.1$ \\
 & PNCP & $\bm{0.1}$ & $\bm{1.6}$ & $\bm{1.3}$ & $108.4$ \\
\hline
$|S_\NCP|/d$ & PNCP & $3 / 6$ & $2 / 4$ & $18 / 18$ & $24 / 37$ \\
\hline
\end{tabular}
\end{small}
\end{table}

Next, we consider the {\tt usmacro.update} dataset from the {\tt R} package {\tt bvarsv} \citep{Krueger2024}, which comprises 250 quarterly observations from 1953:Q1 to 2015:Q2. Following \cite{Knaus2021}, we model the U.S. inflation rate using an intercept, its own lag, lagged unemployment rate and lagged treasury bill interest rate ($d=4$). PNCP assigns CP to the two dimensions with the highest signal-to-noise ratios and NCP to the remaining two, effectively separating time-varying coefficients from near static ones. This flexibility allows PNCP to outperform both CP and NCP in maximized log-likelihood and convergence rate.

\begin{figure}[tb!]
\centering
\includegraphics[width=0.9\textwidth]{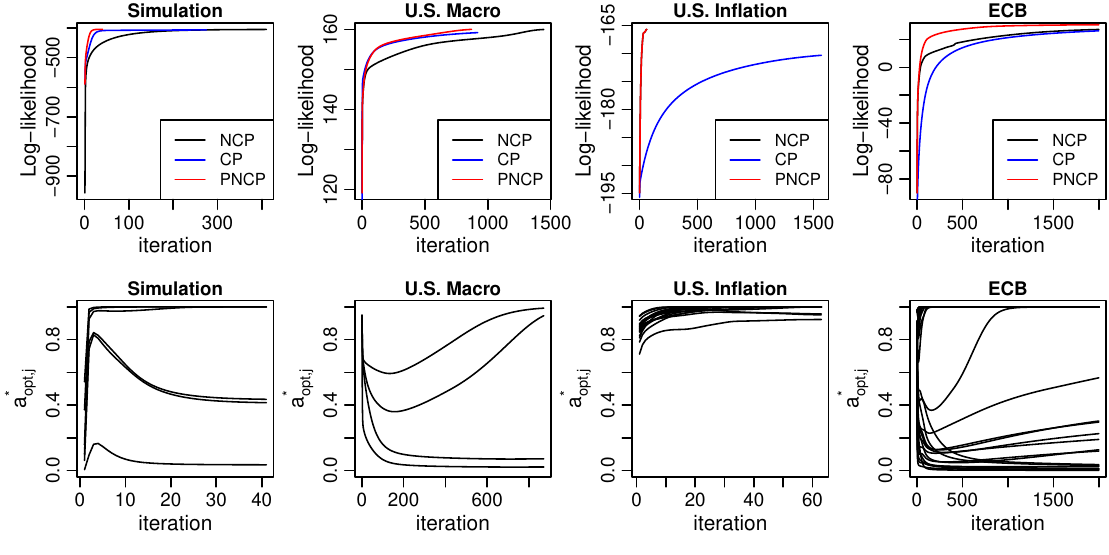}
\caption{Log-likelihood and PNCP's $\bfa_{\opt}^*$ trajectories of AECM algorithm. For U.S. inflation data, PNCP's likelihood trajectory overlaps with NCP.}
\label{fig:em_loglik}
\end{figure}

The quarterly U.S. inflation data (1961:Q1--2015:Q4) studied by \cite{Fruhwirth2021} models inflation $y_t$ for $t=1, \dots, 220$, using $d=18$ predictors. These include an intercept, three autoregressive lags $(y_{t-1}$, $y_{t-2}$, $y_{t-3})$, and 14 potential economic determinants (detailed variable descriptions are given in the Supplement). Due to negligible signal-to-noise ratios indicating minimal time variation in the coefficients, PNCP selects NCP across all dimensions. Thus, PNCP yields log-likelihood trajectories, iteration counts, and runtimes identical to NCP. Notably, NCP achieves a higher log-likelihood of $-165.73$ in just 63 iterations whereas CP converges in 1567 iterations to a lower log-likelihood of $-170.33$. This demonstrates that PNCP can automatically select the optimal parametrization globally.

Finally, we analyze Euro area inflation data from the European Central Bank (ECB) Data Portal, using the economic indicators introduced by \cite{Belmonte2014} and \cite{Bitto2019}. Owing to portal updates and changes in series availability, we consider $T=190$ monthly observations spanning a later period from January 1997 to October 2012. Following \cite{Belmonte2014}, we model core inflation based on the  harmonized index of consumer prices, using a generalized Phillips curve to forecast inflation $h=1$ month ahead. The response variable is the logged first difference of the price index, standardized to unit variance. We include $d=37$ regressors, comprising an intercept, $p=12$ autoregressive lags of the response variable, 13 economic indicators, and 11 monthly dummies. Detailed variable descriptions are given in the Supplement.

For this ECB data, all three schemes are terminated upon reaching the maximum of 2000 iterations. PNCP assigns 24 dimensions to NCP and the remaining 13 to CP. As illustrated in Figure \ref{fig:em_loglik}, PNCP converges much faster than both CP and NCP, achieving a higher log-likelihood of 30.49  at termination, compared to 27.10 for NCP and 26.14 for CP. This highlights the clear advantage of flexibly selecting the parametrization per dimension.

PNCP provides an adaptive alternative to enforcing CP or NCP across all dimensions, consistently accelerating convergence and improving maximized log-likelihoods across all four applications (Table \ref{tab:EM_results}). The dimension-wise optimal parametrization is governed by $\bfa_\opt^*$, which evolves dynamically as optimization proceeds (bottom row of Figure \ref{fig:em_loglik}).

\section{Compute inverse of block tridiagonal matrices} \label{sec:Meurant}
In the AECM algorithm, updates for the model and working parameters rely on $\bfm_{\wbfbeta, \bfa W}$ and $V_{\wbfbeta, \bfa}$, which are expressed in terms of $V_{\wbfbeta, \bfzr}$ and $\bfm_{\wbfbeta, \bfzr W}$ via \eqref{eq: m & V}. As computing $V_{\wbfbeta, \bfzr}$ requires inverting a high-dimensional block tridiagonal matrix of order $(T+1)d$, we adapt recursive algorithms from \cite{Meurant1992} to efficiently and reliably evaluate $\log|V_{\wbfbeta, \bfzr}|$, $\bfm_{\wbfbeta, \bfzr W}$ as well as the first block column and the diagonal and superdiagonal blocks of $V_{\wbfbeta, \bfzr}$.

\begin{algorithm}[tb!]
\caption{(Meurant's Algorithm)}
\label{Alg: Meurant}
Given invertible symmetric block tridiagonal matrix $\Omega \in \mathbb{R}^{nd\times nd}$ and $\bfc = (\bfc_1^\top, \ldots, \bfc_n^\top)^\top \in \mathbb{R}^{nd}$:
\begin{enumerate}[leftmargin=*]
\item Compute $\Delta_1, \dots, \Delta_n$ and $\Sigma_1, \dots, \Sigma_n$:
\begin{gather*}
\begin{cases}
\Delta_1 = D_1 &\\
\Delta_i = D_i - A_i \Delta_{i-1}^{-1} A_i^\top, &i = 2, \dots, n,
\end{cases}\;\;
\begin{cases}
\Sigma_n = D_n, & \\
\Sigma_i = D_i - A_{i+1}^\top\Sigma_{i+1}^{-1}A_{i+1}, &i = n-1, \dots, 1,
\end{cases}
\end{gather*}
\item $\log|\Omega| = \sum_{i=1}^n \log |\Delta_i|$.
\item Compute first block column of $\Omega^{-1}$ given by $(V_1, V_2^\top, \dots, V_n^\top)^\top$:
\[
V_1 = \Sigma_1^{-1}, \quad V_j = \Sigma_j^{-1} A_j V_{j-1} \quad  \text{for} \quad j=2,\dots, n.
\]
\item Compute diagonal blocks and superdiagonal blocks of $\Omega^{-1}$:
\begin{gather*}
\Omega_{j,j}^{-1} =
\begin{cases}
\Sigma_1^{-1} & \text{if $j=1$}, \\
(\Delta_j + \Sigma_j - D_j)^{-1} & \text{if $2 \leq j \leq n-1$}, \\
\Delta_n^{-1} & \text{if $j = n$}.
\end{cases} \;\;
\Omega_{j-1,j}^{-1} =
\Delta_{j-1}^{-1} A_j^\top \Omega_{j,j}^{-1}
\;\; \text{for} \;\; j=2, \dots, n.
\end{gather*}
\item Evaluate $\bfmu = (\bfmu_1^\top, \ldots, \bfmu_n^\top)^\top = \Omega^{-1}\bfc$:
\begin{align*}
\begin{cases}
\widetilde{\bfmu}_1 = \Delta_1^{-1}\bfc_1, \quad\;\;
\widetilde{\bfmu}_{i} = \Delta_i^{-1}(\bfc_i + A_i  \widetilde{\bfmu}_{i-1}),  &i = 2, \dots, n, \\
\bfmu_n = \widetilde{\bfmu}_n, \quad \qquad
\bfmu_{i} = \widetilde{\bfmu}_{i} + \Delta_i^{-1}A_{i+1}^\top\bfmu_{i+1}, &i = n-1, \dots, 1.
\end{cases}
\end{align*}
\end{enumerate}
\end{algorithm}

Consider a general invertible symmetric block tridiagonal matrix $\Omega \in \mathbb{R}^{nd\times nd}$ with diagonal blocks, $D_1, \dots, D_n$ and subdiagonal blocks $-A_2, \dots, -A_n$, where each $D_i$ and $A_i$ is a $d \times d$ invertible matrix. \cite{Meurant1992} derived explicit expressions for $\Omega^{-1}$ using the block LU, block UL and {\em twisted} factorizations of $\Omega$, given respectively by
\begin{align} \label{Omeg LU and UL}
\Omega = (\Delta + L)\Delta^{-1}(\Delta + L^\top)
= (\Sigma + L^\top)\Sigma^{-1}(\Sigma+L)
= (\Phi + \tilde{L})\Phi^{-1}(\Phi + \tilde{L}^\top).
\end{align}
Here $\Delta = \diag(\Delta_1, \dots, \Delta_n)$, $\Sigma = \diag(\Sigma_1, \dots, \Sigma_n)$ and $\Phi = \diag(\Phi_1, \dots, \Phi_n)$ are block diagonal matrices, $L$ is the block strictly lower triangular part of $\Omega$, and $\tilde{L}$ is a twisted block matrix with subdiagonal blocks $-A_2, \dots, -A_j, 0, \dots, 0$ and superdiagonal blocks $0, \dots, 0, -A_{j+1}^\top, \dots, -A_n^\top$ for some twist index $j\in \{2,\dots, n-1\}$.

The recursive identities for $\{\Delta_i\}_{i=1}^n$ and $\{\Sigma_i\}_{i=1}^n$ in Step 1 of Algorithm \ref{Alg: Meurant} are derived by matching the diagonal blocks of the block LU and block UL factorizations to those of $\Omega$. Evaluating $\log |\Omega|$ in Step 2 then reduces to computing the determinant via the block LU factorization in \eqref{Omeg LU and UL}. Theorem 3.4 in \cite{Meurant1992} expresses the $(i,j)$th block of $\Omega^{-1}$ as $U_i V_j^\top$ for $j \geq i$. We enforce uniqueness by setting $U_1 = I_d$, which implies that the first block column of $\Omega^{-1}$ is $(V_1, V_2^\top, \dots, V_n^\top)^\top$, where $V_1$ is symmetric. The update formulas in Step 3 are then derived by premultiplying this first block column by the block UL factorization of $\Omega$ and equating the result to $(I_d, 0, \dots, 0)^\top$.

Theorem 3.1 in \cite{Meurant1992} provides explicit expressions for the $j$th block column of $\Omega^{-1}$, derived by premultiplying it by the twisted factorization of $\Omega$ and equating the product to $(0, \dots, 0, I_d, 0, \dots, 0)^\top$, where $I_d$ is the $j$th block. This yields the diagonal and superdiagonal block update formulas in Step 4. Finally, to evaluate $\bfmu = \Omega^{-1} \bfc$, we exploit the block LU factorization of $\Omega$ to express the linear system $\Omega\bfmu = \bfc$ as $(\Delta + L)\Delta^{-1}(\Delta + L^\top) \bfmu = \bfc$, and solve it via a two-step forward-backward substitution procedure. The first step solves $(\Delta + L)\tilde{\bfmu} = \bfc$ for $\tilde{\bfmu}$ and the second step solves $\Delta^{-1}(\Delta + L^\top)\bfmu = \tilde{\bfmu}$ for $\bfmu$.

The core building blocks of Algorithm \ref{Alg: Meurant} are in Step 1. Steps 2 and 3 serve to compute $\log p(\bfy \mid \bfpsi)$ and $W_\opt$, while Steps 4 and 5 facilitate the updates in Algorithm \ref{Alg: AECM} involving $V_{\wbfbeta, \bfa}$ and $\bfm_{\wbfbeta, \bfa W}$, respectively. Algorithm \ref{Alg: Meurant} is also applied later in this article, where we transition to Bayesian inference and construct an MFVB algorithm for the TVP model using shrinkage priors on the baseline coefficients and state variances.

\section{Bayesian inference for TVP model}\label{sec:Prior_and_density}
The convergence rate of the EM algorithm is closely connected to those of the Gibbs sampler \citep{Sahu1999} and MFVB. In particular, \cite{Tan2014} established that for Gaussian targets, all three methods share identical convergence rates. This theoretical connection motivates developing MCMC and MFVB algorithms using the optimal parametrizations derived for the EM algorithm in Sections \ref{sec:optimal_location} and \ref{sec: optimal scale}. In this work, we focus on MFVB algorithms, leaving MCMC extensions to future research.

For Bayesian inference in the TVP model, we specify a conjugate hierarchical inverse gamma prior for $\sigma^2$, where $\sigma^2 \mid C_0 \sim \IG(c_0, C_0),$ and $C_0\sim \G(g_0, G_0)$. Additionally, we place triple gamma shrinkage priors \citep{Cadonna2020} on $\bftheta$ and $\bfgamma$, as detailed below.

\subsection{Triple gamma shrinkage priors}\label{sec:priors}
In many TVP applications, only a subset of coefficients varies over time. Distinguishing static from time-varying parameters is thus critical to prevent overfitting. Following \cite{Cadonna2020}, we place triple gamma priors on $\theta_j$ and $\gamma_j^2$ for $j=1, \dots, d$. Shrinking $\theta_j$ toward 0 differentiates time-varying from static coefficients, whereas shrinking $\gamma_j$ toward 0 isolates significant from insignificant coefficients. Jointly, these priors classify the coefficients into time-varying ($\theta_j$ is not shrunk to 0), static but significant ($\theta_j$ is shrunk to $0$ but $\gamma_j$ is not) and static and insignificant (both $\theta_j$ and $\gamma_j$ are shrunk to $0$). Here we adopt the alternative representation of the triple gamma prior \citep{Knaus2025}:
\begin{equation}\label{alt_rep_TG}
\begin{aligned}
\theta_j \mid \xi_j, \kappa_B &\sim \G\bracket{\half, \frac{\kappa_B}{4\xi_j}}, \quad
\xi_j \mid c^\xi, \kappa_j \sim \IG\bracket{c^\xi, \kappa_j}, \quad
\kappa_j \mid a^\xi, c^\xi \sim \G\bracket{a^\xi, \frac{a^\xi}{c^\xi}},  \\
\gamma_j^2 \mid \tau_j, \lambda_B &\sim \G\bracket{\half, \frac{\lambda_B}{4\tau_j}}, \quad
\tau_j \mid c^\tau, \lambda_j \sim \IG\bracket{c^\tau, \lambda_j}, \quad
\lambda_j \mid a^\tau, c^\tau \sim \G\bracket{a^\tau, \frac{a^\tau}{c^\tau}}.
\end{aligned}
\end{equation}
The representation in \eqref{alt_rep_TG} is computationally more efficient as the conditional posteriors as well as mean-field variational posteriors for ($\xi_j$, $\tau_j$, $\kappa_B$, $\lambda_B$) involve simpler gamma and inverse gamma distributions rather than generalized inverse Gaussian distributions.

Following \cite{Bitto2019}, let $\surd \theta_j \in \mathbb{R}$ denote both the positive and negative square roots of $\theta_j$. Then \eqref{alt_rep_TG} admits the global-local prior representation, $\surd{\theta_j} \mid \xi_j, \kappa_B \sim \N\left( 0, 2\xi_j/\kappa_B \right)$, $\xi_j \mid a^\xi, c^\xi \sim F(2a^\xi, 2c^\xi)$ and $\gamma_j \mid \tau_j, \lambda_B \sim \N(0, 2\tau_j/\lambda_B)$, $\tau_j \mid a^\tau, c^\tau \sim F(2a^\tau, 2c^\tau)$, while the marginal prior distributions $p(\surd{\theta_j} \mid \kappa_B, a^\xi, c^\xi)$ and $p(\gamma_j \mid \lambda_B, a^\tau, c^\tau)$ can be expressed in terms of the confluent hypergeometric function of the second kind \cite[Theorem 1,][]{Cadonna2020}. The triple gamma prior encompasses several popular shrinkage priors, such as the normal-gamma \citep{Griffin2010}, horseshoe \citep{Carvalho2009} and Bayesian Lasso \citep{Park2008}, as special or limiting cases. Its behavior is governed by pole parameters $(a^\xi, a^\tau)$ controlling mass concentration near 0, tail parameters $(c^\xi, c^\tau)$ controlling heavy-tailedness, and global shrinkage parameters $(\kappa_B, \lambda_B)$ regulating overall shrinkage strength. Setting $(a^\xi, a^\tau, c^\xi, c^\tau)$ to lie in $(0, 0.5)$ yields a pole at 0 and infinite moments. As illustrated in Figure \ref{fig:cases_triple_gamma}, the triple gamma prior ($a^\xi = c^\xi = 0.1$, $\kappa_B = 2$) exhibits a sharp spike at 0 accompanied by heavy tails, effectively shrinking small coefficients while preserving large signals. We treat the hyperparameters  $(a^\xi, a^\tau, c^\xi, c^\tau)$ as fixed, and place hierarchical priors on the global shrinkage parameters: $\kappa_B \mid d_2\sim \G(a^\xi, d_2)$, $d_2 \sim \G\bracket{c^\xi, {2c^\xi}/{a^\xi}}$ and $\lambda_B \mid e_2\sim \G(a^\tau, e_2)$, $e_2 \sim \G\bracket{c^\tau, {2 c^\tau}/{a^\tau}}$.

\begin{figure}[tb!]
\centering
\includegraphics[width=0.6\textwidth]{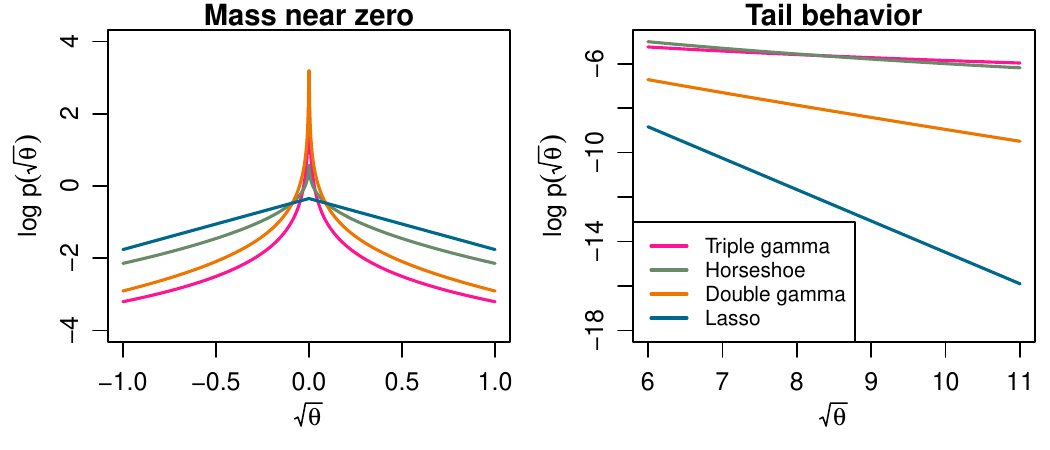} \;
\includegraphics[width=0.37\textwidth]{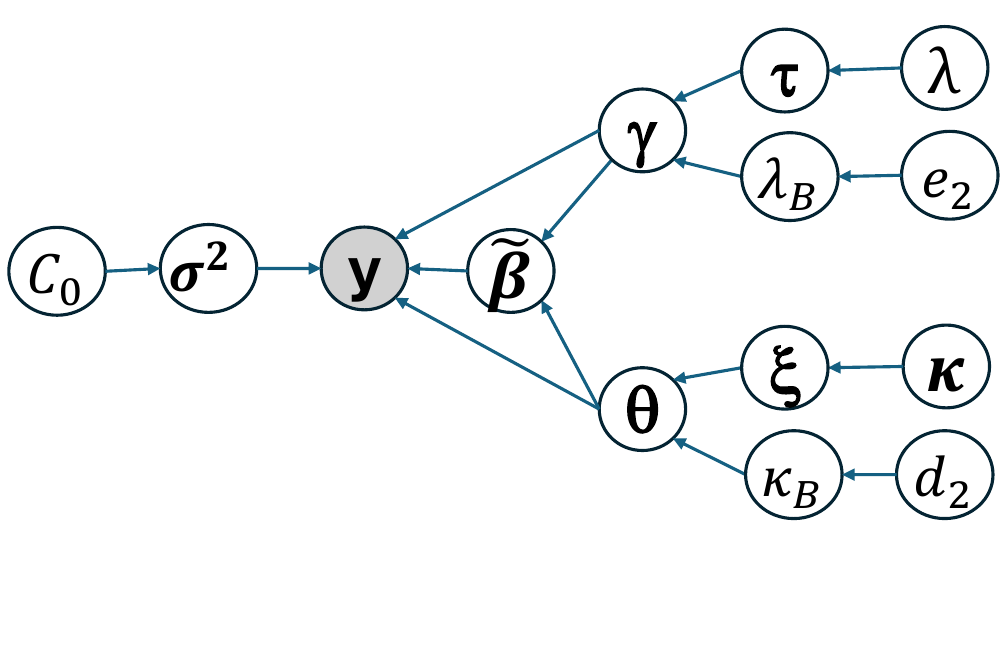}
\caption{Left two panels: shrinkage priors---triple gamma ($a^\xi = c^\xi = 0.1$, $\kappa_B = 2$), horseshoe ($\phi^\xi = 1$), double gamma ($a^\xi = 0.1$, $\kappa_B = 2$), and Bayesian Lasso ($\kappa_B = 2$). Parametrizations for the latter three appear in Section \ref{supp:table_shrinkage} of the Supplement. Right panel: Directed acyclic graph for TVP model.}
\label{fig:cases_triple_gamma}
\end{figure}

Let $\bfxi =(\xi_1, \dots, \xi_d)^\top$, $\bftau =(\tau_1, \dots, \tau_d)^\top$,
$\bfkappa =(\kappa_1, \dots, \kappa_d)^\top$ and $\bflambda =(\lambda_1, \dots, \lambda_d)^\top$. Furthermore, let $\Theta = \{\wbfbeta, \bfgamma, \bftheta, \sigma^2, C_0, \bfxi, \bftau, \bfkappa, \bflambda, \kappa_B, \lambda_B, d_2, e_2\}$ denote the complete set of variables under the PNCP. The right panel of Figure \ref{fig:cases_triple_gamma} illustrates the conditional independence structure of the TVP model. Explicit expressions for the likelihood and priors are provided in Section \ref{supp:joint_density_terms} of the Supplement, and the log joint density is
\begin{align*}
\log \; & p(\bfy, \Theta) = \log p(\bfy \mid \wbfbeta, \bfgamma, \bftheta, \sigma^2)
+ \log p(\wbfbeta \mid \bfgamma, \bftheta)
+ \log p(\sigma^2 \mid C_0) + \log p(C_0)
\\
& + \sum_{j=1}^d \{\log p(\theta_j \mid \xi_j, \kappa_B)
+ \log p(\xi_j \mid \kappa_j) + \log p(\kappa_j)
+ \log p(\gamma_j \mid \tau_j, \lambda_B) + \log p(\tau_j \mid \lambda_j) \\
& + \log p(\lambda_j) \} + \log p(\kappa_B \mid d_2) + \log p(d_2)
+ \log p(\lambda_B \mid e_2) + \log p(e_2).
\end{align*}

\section{Mean-field variational inference for TVP model}\label{sec:MFVB_TVP}
To mitigate the high computational cost of MCMC as the dimension $d$ and time steps $T$ grow, we develop a MFVB scheme for the TVP model that balances accuracy and efficiency. VB approximates the intractable posterior $p(\Theta \mid \bfy)$ with a tractable distribution $q(\Theta)$ by minimizing the Kullback-Leibler divergence, $\KL(q \| p) = \int q(\Theta) \log \{q(\Theta) / p(\Theta \mid \bfy) \} d\Theta$. From the identity, $\log p(\bfy) = \KL(q \| p) + \mL(q)$, minimizing $\KL(q \| p)$ is equivalent to maximizing the evidence lower bound (ELBO) on $\log p(\bfy)$, $\mL(q) = \E_q \{ \log p(\bfy, \Theta) - \log q(\Theta) \},$
where $\E_q(\cdot)$ denotes expectation with respect to $q(\Theta)$.

In MFVB, $q(\Theta)$ is assumed to factorize over a partition $\Theta = (\Theta_1, \dots, \Theta_n)$ such that $q(\Theta) = \prod_{i=1}^n q_i(\Theta_i)$. The optimal variational density $q_i(\Theta_i)$ satisfies
\begin{align} \label{CAVI_update}
q_i(\Theta_i) \propto \exp [\E_{-\Theta_i} \{\log p(\bfy, \Theta)\}], \quad
i=1, \dots, n,
\end{align}
where $\E_{-\Theta_i}$ denotes expectation with respect to $\prod_{j \neq i} q_j(\Theta_j)$ \citep{Ormerod2010}. Under conditionally conjugate priors, these optimal variational densities belong to known parametric families. Iteratively updating the parameters of each variational density conditional on the rest  yields a coordinate ascent algorithm that guarantees a monotonic increase in the ELBO at each iteration until convergence.

\subsection{Optimal densities in MFVB}\label{sec:opt dens}
For the TVP model under the PNCP, we assume the mean-field variational family,
\begin{align*}
q(\Theta) = q(\wbfbeta) \, q(\bfgamma)
\, q(\bftheta) \,  q(\sigma^2) \, q( \bfxi, \bftau) \, q(\bfkappa, \bflambda)
\, q(C_0, \kappa_B, \lambda_B) \, q(d_2, e_2).
\end{align*}
Applying \eqref{CAVI_update} reveals that the optimal variational densities induce further independence structures, factorizing as $q( \bfxi, \bftau) = \prod_{j=1}^d q(\xi_j) q(\tau_j)$, $q(\bfkappa, \bflambda) = \prod_{j=1}^d q(\kappa_j) q(\lambda_j)$, $q(C_0, \kappa_B, \lambda_B) = q(C_0) q(\kappa_B) q(\lambda_B)$ and $q(d_2, e_2) = q(d_2)q( e_2)$. With the exception of $q(\bftheta)$, all optimal variational distributions belong to standard parametric families:
\begin{align*}
q(\wbfbeta) &= \N(\bfmu_\wbfbeta^q, \Sigma_\wbfbeta^q), &
q(\bfgamma) &= \N(\bfmu_\bfgamma^q, \Sigma_\bfgamma^q), &
q(\sigma^2) &= \IG(c^q, C^q), & q(C_0) &= \G(g^q, G^q),
\\
q(\kappa_B) &= \G(a_{\kappa_B}^q, b_{\kappa_B}^q), &
q(\lambda_B) &= \G(a_{\lambda_B}^q, b_{\lambda_B}^q), &
q(d_2) &= \G(a_{d_2}^q, b_{d_2}^q), & q(e_2) &= \G(a_{e_2}^q, b_{e_2}^q), \\
q(\xi_j) &= \IG(a_\xi^q, b_{\xi_j}^q), &
q(\tau_j) &= \IG(a_\tau^q, b_{\tau_j}^q), &
q(\kappa_j) &= \G(a_\kappa^q, b_{\kappa_j}^q), &
q(\lambda_j) &= \G(a_\lambda^q, b_{\lambda_j}^q),
\end{align*}
for $j=1, \dots, d$. Specifically, the covariance and mean of $q(\wbfbeta)$ are given by
\begin{align}
\Sigma_{\wbfbeta}^q &= \left[ \Lambda \otimes \E_q (D_\bftheta^{\bfa-\bfone}) + \blockdiag\left( 0_{d \times d}, (c^q/C^q)\E_q(U)
\right) \right]^{-1}, \label{wbfbeta update}
\\
\bfmu_{\wbfbeta}^q &= \Sigma_{\wbfbeta}^q \big[ \{\Lambda \otimes \E_q(D_\bftheta^{\bfa/2 - \bfone})\} \overline{W} \bfmu_\bfgamma^q + (c^q/C^q)\left(\bfzr_d^\top, (\bfy -  X W_{1:T} \bfmu_\bfgamma^q)^\top X \{I_T \otimes \E_q(D_\bftheta^{\bfa/2})\} \right)^\top \big],
\nonumber
\end{align}
where $U = (I_T \otimes D_\bftheta^{\bfa/2}) X^\top X (I_T \otimes D_\bftheta^{\bfa/2}) = \diag(D_\bftheta^{\bfa/2} \bfx_1 \bfx_1^\top D_\bftheta^{\bfa/2}, \dots, D_\bftheta^{\bfa/2} \bfx_T \bfx_T^\top D_\bftheta^{\bfa/2})$ and $\E_q \{D_\bftheta^{\bfa/2} \bfx_t \bfx_t^\top D_\bftheta^{\bfa/2}\} = D_{\bfx_t}  \E_q\{\bftheta^{\bfa/2}(\bftheta^{\bfa/2})^\top \}D_{\bfx_t}$ for $t=1, \dots, T$. As $\Sigma_\wbfbeta^q$ is the inverse of a block tridiagonal matrix, the required blocks of $\Sigma_\wbfbeta^q$ and $\bfmu_\wbfbeta^q$ can be  efficiently computed using Algorithm \ref{Alg: Meurant}. The parameters of the remaining variational distributions are summarized in Algorithm \ref{alg:TVP_PNCP}, with derivations detailed in Section \ref{supp:derive_opt_dens} of the Supplement.

While the optimal variational densities derived above hold for arbitrary $(W, \bfa)$, the optimal $q(\bftheta)$ falls into a tractable parametric family only under the CP $(W=0, \bfa = \bfzr)$ or NCP $(\overline{W} = 0, a = \bfone)$. Under the NCP, $q(\surd{\bftheta})=\N( \bfmu_{\surd{\bftheta}}^q, \Sigma_{\surd{\bftheta}}^q)$, whereas $q(\bftheta) = \prod_{j=1}^{d} q(\theta_j)$ with $q(\theta_j) = \GIG(p_\theta^q, a_{\theta_j}^q, b_{\theta_j}^q)$ under the CP. Here, $\GIG(p, a, b)$ denotes the three-parameter generalized inverse Gaussian distribution, with probability density function $f(x) = (a/b)^{p/2}x^{p-1} \exp\{ -ax/2 - b/(2x) \} / \{2K_p(\sqrt{ab})\}$ for $x > 0$, $p\in \mathbb{R}$, $a, b > 0$, where $K_p(\cdot)$ is the modified Bessel function of the second kind. For any $r \in \mathbb{R}$, the $r$th moment of $X \sim \GIG(p, a, b)$ is $\E(X^r) = (b/a)^{r/2} K_{p+r}(\sqrt{ab}) / K_{p}(\sqrt{ab})$.

Updating $q(\theta_j)$ under the CP requires evaluating $K_p(\cdot)$, which can suffer from numerical underflow, overflow or precision loss when the argument or order $|p|$ is large. To maintain numerical stability, we dynamically switch between {\tt Julia}'s standard AMOS-based routine and Debye's large-order uniform asymptotic expansion \citep{Olver1974AsymptoticsAS} when evaluating $K_p(\cdot)$. Implementation details and switching criteria are provided in Section \ref{supp:Bessel} of the Supplement.

\subsection{Harnessing optimal working parameters in MFVB}
Like EM algorithms, MFVB is highly sensitive to latent variable parametrization, which governs both convergence rate and posterior approximation accuracy. MFVB thus stands to benefit from dimension-specific parametrization schemes. In this section, we adapt the optimal working parameters derived for the EM algorithm in Sections \ref{sec:optimal_location} and \ref{sec: optimal scale} to the MFVB framework. After highlighting the inherent challenges, we propose two novel implementations for the PNCP: a (dimension-wise) D-CP/NCP scheme that selects between CP and NCP for each dimension, and a continuous location binary scale  (CLBS) scheme that allows an arbitrary $W$ while constraining $\bfa \in \{0, 1\}^d$.

While the AECM framework can dynamically update data augmentation schemes before each conditional maximization step and preserve monotonic increase of the marginal log-likelihood, MFVB lacks this flexibility. Changing the augmentation scheme alters the variational posterior over latent states, thereby redefining the variational objective and breaking the monotonic growth guarantee of the ELBO. Furthermore, as established in Section \ref{sec:opt dens}, the optimal variational density for $q(\bftheta)$ belongs to standard parametric families only under CP or NCP. Preserving closed-form updates alongside ELBO monotonicity thus confines MFVB to static CP or NCP. Finally, as MFVB optimizes variational posteriors rather than point estimates of model parameters, the optimal working parameters from  Sections \ref{sec:optimal_location} and \ref{sec: optimal scale} cannot be evaluated exactly. To resolve this final limitation, both proposed schemes approximate the optimal working parameters $(W_\opt, \bfa_\opt, \bfa_\opt^*)$ from Theorems \ref{thm_opt_location} and \ref{thm_aopt} by replacing $(D_\bftheta^{-1}$, $\sigma^{-2}, \bfgamma)$ with their variational expectations, $(\E_q[D_\bftheta^{-1}], \E_q[\sigma^{-2}], \E_q[\bfgamma])$, yielding $(\hat{W}_\opt, \hat{\bfa}_{\opt}, \hat{\bfa}_{\opt}^*)$.

\subsection{Dimension-wise CP/NCP scheme}\label{sec:dim_wise}
The D-CP/NCP scheme assigns each dimension to CP or NCP by partitioning $\{1, \dots, d\}$ into disjoint subsets, $S_{\CP}$ and $S_{\NCP}$, setting $(a_j=0, W_{[j]}=0)$ for $j\in S_{\CP}$, and $(a_j=1, \overline{W}_{[j]}=0)$ for $j\in S_{\NCP}$. Thus, the optimal variational density factorizes as $q(\bftheta) = q(\bftheta_\CP)q(\bftheta_\NCP)$, where $q(\bftheta_\CP) = \prod_{j\in S_\CP} \GIG(\theta_j \mid p_\theta^q, a_{\theta_j}^q, b_{\theta_j}^q)$ and $q(\surd\bftheta_\NCP) = \N( \bfmu_{\surd\bftheta_\NCP}^q, \Sigma_{\surd\bftheta_\NCP}^q)$. The corresponding parameters are given by $p_\theta^q=-T/2$, $a_{\theta_j}^q = {(a_{\kappa_B}^q a_\xi^q)} / (2b_{\kappa_B}^q b_{\xi_j}^q)$,
\begin{align}
&b_{\theta_j}^q  = \sum_{t=1}^T \big[ (\mu_{\widetilde{\beta}_{tj}}^q -\mu_{\widetilde{\beta}_{t-1,j}}^q)^2 + v_{\widetilde{\beta}_{tj}}^q + v_{\widetilde{\beta}_{t-1, j}}^q \negmedspace  - 2\Cov_q(\widetilde{\beta}_{tj}, \widetilde{\beta}_{t-1,j}) \big] + (\mu_{\widetilde{\beta}_{0j}}^q-\mu_{\gamma_j}^q)^2 + v_{\gamma_j}^q + v_{\widetilde{\beta}_{0j}}^q,
\nonumber \\
&\Sigma_{\surd\bftheta_\NCP}^q = \left\{ \frac{c^q}{C^q} \sum_{t=1}^{T} D_{\bfx_{t, \NCP}} (\Sigma_{\wbfbeta_{t, \NCP}}^q + \bfmu_{\wbfbeta_{t, \NCP}}^q \bfmu_{\wbfbeta_{t, \NCP}}^{q\top}) D_{\bfx_{t, \NCP}} + \frac{a_{\kappa_B^q}}{2 b_{\kappa_B^q}} \diag \left( \frac{a_\xi^q} {\bfb_\xi^q} \right) \right\}^{-1},
\nonumber \\
&\bfmu_{\surd\bftheta_\NCP}^q = \Sigma_{\surd\bftheta_\NCP}^q \frac{c^q}{C^q} \sum_{t=1}^{T} D_{\bfx_{t, \NCP}} \left\{ \bfmu_{\wbfbeta_{t, \NCP}}^q (y_t - \bfx_{t, \NCP}^\top\bfmu_{\bfgamma_\NCP}^q - \bfx_{t, \CP}^\top \bfmu_{\wbfbeta_{t, \CP}}^q) \right.
\label{theta updtae} \\
&\qquad \qquad \left. -\Cov_q(\wbfbeta_{t, \NCP}, \widetilde{\bfbeta}_{t, \CP}) \bfx_{t, \CP} \right\}, \nonumber
\end{align}
where $\mu_\alpha^q$ and $v_\alpha^q$ denote the mean and variance of a variable $\alpha$ under $q(\Theta)$. At each iteration, the partition is dynamically updated using $\hat{a}_{\opt, j}^*$ (Theorem \ref{thm_aopt}), by assigning $j$ to $S_\CP$ if $\hat{a}_{\opt, j}^* \leq 0.5$ and to $S_\NCP$ otherwise, for $j=1, \dots, d$. While this scheme enables adaptive, dimension-specific assignments without sacrificing closed-form updates, ELBO monotonicity is no longer guaranteed when partitions switch between iterations. Thus, temporary fluctuations must be tolerated.

\subsection{Continuous location binary scale scheme}\label{sec:lognormal}
The continuous location binary scale (CLBS) scheme sets $W = \hat{W}_\opt$ (Theorem \ref{thm_opt_location}) and computes $\hat{a}_{\opt,j}$ (Theorem \ref{thm_aopt}) by evaluating \eqref{gen_a_opt} at $W = \hat{W}_\opt$. We then discretize $a_j = \mathbbm{1}\{\hat{a}_{\opt,j} > 0.5\}$ for $j=1,\dots, d$. Recall that when $W= W_\opt$, $\wbfbeta$ and $\bfgamma$ are conditionally independent given $(\bfy, \bftheta, \sigma^2)$ (Corollary \ref{cor2_opt_location}). While setting $W= W_\opt$ exactly is infeasible (since MFVB optimizes variational posteriors rather than point estimates of model parameters), Corollary \ref{cor2_opt_location} suggests that setting $W = \hat{W}_\opt$ induces approximate conditional independence between $\wbfbeta$ and $\bfgamma$. This refinement improves the variational approximation by relaxing the mean-field independence assumption between $q(\wbfbeta)$ and $q(\bfgamma)$.

As closed form updates for $q(\bftheta)$ are unavailable for arbitrary $W$, we reparametrize via $\vartheta_j = \log \theta_j \in \mathbbm{R}$ and assume a Gaussian variational posterior $q(\bfvartheta) = \N(\bfmu_\bfvartheta^q, \Sigma_{\bfvartheta}^q)$, where $\bfvartheta = (\vartheta_1, \dots, \vartheta_d)^\top$ and $\Sigma_{\bfvartheta}^q = C_{\bfvartheta}^q C_{\bfvartheta}^{q \top}$ denotes the Cholesky factorization. Although $(\bfmu_\bfvartheta^q, C_{\bfvartheta}^q)$ lack closed form updates, the ELBO remains tractable, as required expectations under $q(\bfvartheta)$ can be evaluated analytically via the identities: $\E_q(\log \theta_j) = \mu_{\vartheta_j}^q$, $\E_q(\theta_j^{r_j}) = \exp(r_j \mu_{\vartheta_j}^q + r_j^2 v_{\vartheta_j}^q/2)$ and $\E_q(\theta_j^{r_j} \theta_k^{r_k}) = \E_q(\theta_j^{r_j}) \E_q(\theta_k^{r_k}) \exp(r_jr_k v_{\vartheta_j, \vartheta_k}^q)$ $\forall r_j, r_k \in \mathbb{R}$. The ELBO is then maximized with respect to $(\bfmu_\bfvartheta^q, C_{\bfvartheta}^q)$ via gradient ascent using Adam \citep{Kingma2015}.

Given the use of gradient ascent and dynamic updates of the working parameters $(\bfa, W)$ at each iteration, monotonic increase of the ELBO is not strictly guaranteed. In practice, however, the ELBO trajectory remains predominantly monotonic. This stability is driven by the continuous updates of $W$ steadily refining the variational approximation, while the thresholding of $\bfa \in \{0,1\}^d$ suppresses frequent switching.

\subsection{MFVB algorithm}\label{sec:MFVB_alg}

\begin{algorithm}[tb!]
\caption{(MFVB algorithm)}
\label{alg:TVP_PNCP}
Set prior hyperparameters $(c_0, g_0, G_0, a^\xi, c^\xi, a^\tau, c^\tau)$. For $i=1, 2, \ldots$:
\begin{enumerate}[leftmargin=*]
\item Update working parameters $(\bfa, W)$ under D-CP/NCP and CLBS schemes accordingly.
\item Update  $q(\wbfbeta)$ as in \eqref{wbfbeta update} using Algorithm \ref{Alg: Meurant}.
\item Update $q(\bftheta)$ under D-CP/NCP scheme using \eqref{theta updtae}
or $q(\bfvartheta) = \N(\bfmu_\bfvartheta^q, \Sigma_\bfvartheta^q)$ via gradient ascent using Adam under CLBS scheme.
\item Update $q(\xi_j)$: $a_\xi^q = c^\xi + 1/2$, $b_{\xi_j}^q = {a_\kappa^q}/{b_{\kappa_j}^q} + {a_{\kappa_B}^q\E_q(\theta_j)}/ (4b_{\kappa_B}^q)$ for $j=1, \dots, d$.

\item Update $q(\kappa_B)$: $a_{\kappa_B}^q = a^\xi + d/2$, $b_{\kappa_B}^q = a_{d_2}^q/b_{d_2}^q + (a_\xi^q/4) \sum_{j=1}^d \E_q(\theta_j)/b_{\xi_j}^q$.

\item Update $q(\kappa_j)$: $a_\kappa^q = a^\xi + c^\xi$, $b_{\kappa_j}^q = {a_\xi^q}/{b_{\xi_j}^q} + {{a}^\xi}/{{c}^\xi}$ for $j=1, \dots, d$.

\item Update $q(d_2)$: $a_{d_2}^q = a^\xi + c^\xi$, $b_{d_2}^q = {a_{\kappa_B}^q}/{b_{\kappa_B}^q} + {2{c}^\xi}/{{a}^\xi}$.

\item Update $q(\sigma^2)$: $c^q = c_0 + T/2$ and
\begin{align*}
C^q &= \frac{g^q}{G^q} + \frac{1}{2}\sum_{t=1}^T \big[ (y_t - \bfx_t^\top W_t \bfmu_\bfgamma^q)^2 - 2(y_t - \bfx_t^\top W_t \bfmu_\bfgamma^q) \bfx_t^\top \E_q(D_\bftheta^{\bfa/2}) \bfmu_{\wbfbeta_t}^q \\
&\quad + \bfx_t^\top W_t \Sigma_\bfgamma^q W_t^\top \bfx_t + \tr\{(\bfmu_{\wbfbeta_t}^q\bfmu_{\wbfbeta_t}^{q\top} + \Sigma_{\wbfbeta_t}^q) D_{\bfx_t} \E_q(\bftheta^{\bfa/2}\bftheta^{\bfa/2 \top})D_{\bfx_t} \} \big].
\end{align*}

\item Update $q(C_0)$: $g^q = g_0 + c_0$ and $G^q = G_0 + {c^q}/{C^q}$.

\item Update $q(\bfgamma)$:
\begin{align*}
\Sigma_\bfgamma^q &= \left[ \frac{c^q}{C^q} \sum_{t=1}^T W_t^\top \bfx_t \bfx_t^\top W_t + \overline{W}^\top \{\Lambda \otimes \E_q(D_\bftheta^{-1})\} \overline{W} + \frac{a_{\lambda_B}^q}{2 b_{\lambda_B}^q} \diag\left( \frac{a_\tau^q}{\bfb_\tau^q}\right) \right]^{-1},
\\
\bfmu_\bfgamma^q &= \Sigma_\bfgamma^q \left[ \frac{c^q}{C^q}  \sum_{t=1}^T \{y_t - \bfx_t^\top \E_q (D_\bftheta^{\bfa/2}) \mu_{\wbfbeta_t}^q\} W_t^\top\bfx_t + \overline{W}^\top \{ \Lambda \otimes \E_q(D_\bftheta^{\bfa/2-1}) \} \mu_{\wbfbeta}^q \right].
\end{align*}
\item Update $q(\tau_j)$: $a_\tau^q = c^\tau + 1/2$, $b_{\tau_j}^q = {a_\lambda^q}/{b_{\lambda_j}^q} + {a_{\lambda_B}^q(\mu_{\gamma_j}^{q2} + v_{\gamma_j}^q)}/(4b_{\lambda_B}^q)$ for $j=1, \dots, d$.

\item Update $q(\lambda_B)$: $a_{\lambda_B}^q = a^\tau + d/2$, $b_{\lambda_B}^q = a_{e_2}^q/b_{e_2}^q + (a_\tau^q/4) \sum_{j=1}^d (\mu_{\gamma_j}^{q2} + v_{\gamma_j}^q)/b_{\tau_j}^q$.

\item Update $q(\lambda_j)$: $a_\lambda^q = a^\tau + c^\tau$, $b_{\lambda_j}^q = {a_\tau^q}/{b_{\tau_j}^q} + {{a}^\tau}/{{c}^\tau}$ for $j=1, \dots, d$.

\item Update $q(e_2)$: $a_{e_2}^q = a^\tau + c^\tau$, $b_{e_2}^q = {a_{\lambda_B}^q}/{b_{\lambda_B}^q} + {2{c}^\tau}/{{a}^\tau}$.
\item Evaluate the ELBO.
\end{enumerate}
\end{algorithm}

Algorithm \ref{alg:TVP_PNCP} summarizes the proposed MFVB algorithm under D-CP/NCP and CLBS. Variational parameters are updated iteratively until convergence, which is monitored via the ELBO detailed in Section \ref{supp:derive_elbo} of the Supplement. Although the ELBOs differ between the two schemes (D-CP/NCP employs the optimal $q(\bftheta)$ derived under the mean-field assumption, whereas CLBS assumes a Gaussian variational approximation for $\log \bftheta$), both yield valid, directly comparable lower bounds to $\log p(\bfy)$. Algorithm \ref{alg:TVP_PNCP} is terminated when the relative change in the ELBO falls below $10^{-6}$ or a maximum of 2000 iterations is reached.

Each iteration begins by updating the working parameters $(\bfa, W)$ in Step 1, which may alter the latent state parametrization. Step 2 then updates $q(\wbfbeta)$ using the revised $(\bfa, W)$. This update requires evaluating the expectations $\E_q(D_\bftheta^{\bfa-1})$, $\E_q(D_\bftheta^{\bfa/2-1})$, $\E_q(D_\bftheta^{\bfa/2})$ and $\E_q\{ D_\bftheta^{\bfa/2}(D_\bftheta^{\bfa/2})^\top\}$. Under the D-CP/NCP scheme, switching between CP and NCP changes $q(\bftheta)$, which can cause required expectations to be undefined under the latest variational posterior. In such cases, these expectations are approximated, for instance, by setting $\E_q(1/\theta_j) = 1/\E_q(\theta_j)$ when switching from NCP to CP. Guided by the directed acyclic graph in Figure \ref{fig:cases_triple_gamma}, the remaining variational distributions are updated across three sequential blocks: $\bftheta$, followed by $\sigma^2$ and $\bfgamma$, each alongside their respective dependent variables.

\section{Applications of MFVB algorithm for TVP model}\label{sec:VA_Applications}
We compare our proposed D-CP/NCP and CLBS schemes against MFVB under CP and NCP, which correspond to assigning all coordinates in D-CP/NCP to $S_\CP$ or $S_\NCP$ respectively. CLBS can also implement CP and NCP by fixing $(\bfa=0, W=0)$ and $(\bfa=\bfone, \overline{W} = 0)$, but D-CP/NCP remains superior as it utilizes exact optimal variational densities for $q(\bftheta)$, whereas CLBS relies on a Gaussian approximation for $\log \bftheta$. Prior hyperparameters are set to $c_0=2.5$, $g_0=5$, $G_0=g_0/(c_0-1)$ \citep{Cadonna2020}, and $a^\xi=a^\tau=c^\xi=c^\tau=0.4$.

Performance is evaluated across the four datasets described in Section \ref{sec:EM_Applications} against MCMC benchmarks generated via the \texttt{shrinkTVP} package \citep{Knaus2021}. For each dataset, we draw 60,000 MCMC samples, discarding the first 10,000 as burn-in and thinning by a factor of 10. As MFVB guarantees convergence only to local modes, careful initialization is crucial. For CLBS, Adam uses an elevated learning rate of 0.1, enabled by analytic gradients.

\begin{table}[tb!]
\centering
\caption{MFVB performance under CP, NCP, D-CP/NCP and CLBS schemes. Bold values indicate the highest ELBO, and the lowest iteration count and runtime.}
\label{tab:VA_results}
\begin{small}
\begin{tabular}{llrrrr}
\hline
{} & {Scheme} & {Simulated} & {U.S. Macro} & {U.S. Inflation} & {ECB} \\
\hline
\multirow{4}{*}{ELBO}
 & CP & $-848.7$ & $-153.1$ & $-1997.4$ & $-3430.2$ \\
 & NCP & $-847.8$ & $-162.7$ & $\bm{-1956.5}$ & $-3346.7$ \\
 & D-CP/NCP & $-841.6$ & $-152.7$ & $\bm{-1956.5}$ & $-3342.7$ \\
 & CLBS & $\bm{-840.8}$ & $\bm{-147.6}$ & $-1959.7$ & $\bm{-3339.0}$ \\
\hline
\multirow{5}{*}{Iterations}
 & CP & $222$ & $449$ & $550$ & $1,238$ \\
 & NCP & $630$ & $811$ & $\bm{47}$ & $854$ \\
 & D-CP/NCP & $\bm{56}$ & $329$ & $\bm{47}$ & $539$ \\
 & CLBS & $302$ & $\bm{98}$ & $443$ & $\bm{395}$ \\
\hline
\multirow{5}{*}{Time (sec)}
 & CP & $0.5$ & $0.4$ & $6.8$ & $49.8$ \\
 & NCP & $1.0$ & $0.6$ & $\bm{0.5}$ & $33.5$ \\
 & D-CP/NCP & $\bm{0.1}$ & $0.6$ & $1.0$ & $38.2$ \\
 & CLBS & $0.8$ & $\bm{0.2}$ & $10.7$ & $\bm{33.3}$ \\
 & MCMC & $34.3$ & $27.2$ & $125.6$ & $444.7$ \\
\hline
\multirow{2}{*}{$|S_\NCP|/d$}
 & D-CP/NCP & $3 / 6$ & $1 / 4$ & $18 / 18$ & $33 / 37$ \\
 & CLBS & $2 / 6$ & $1 / 4$ & $17 / 18$ & $34 / 37$ \\
\hline
\end{tabular}
\end{small}
\end{table}

Table~\ref{tab:VA_results} summarizes key metrics, including converged ELBOs, iteration counts, runtimes, and $|S_\NCP|/d$ ratios, while Figure~\ref{fig:va_elbo} displays the ELBO trajectories. Marginal posterior accuracy is evaluated in Figure~\ref{fig:va_iae} using $\mathrm{Accuracy}(q)=100\{1-\mathrm{IAE}(q)/2\}\%$, where the integrated absolute error $\mathrm{IAE}(q)=\int|q(z)-q_{\mathrm{MCMC}}(z)|\,\mathrm{d}z$ and $q_{\mathrm{MCMC}}$ is the MCMC kernel density estimate. Higher accuracy values indicate closer agreement with MCMC. Additional results are provided in Section~\ref{supp:VA_results} of the Supplement.

\begin{figure}[tb!]
\centering
\includegraphics[width=0.88\textwidth]{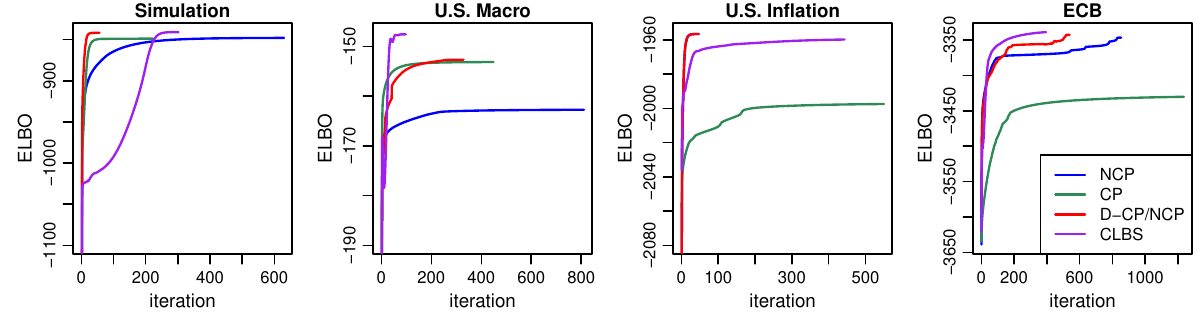}
\caption{ELBO trajectories of MFVB under CP, NCP, D-CP/NCP and CLBS schemes. For U.S. inflation data, D-CP/NCP's ELBO trajectory overlaps with NCP.}
\label{fig:va_elbo}
\end{figure}

\begin{figure}[tb!]
\centering
\includegraphics[width=0.86\textwidth]{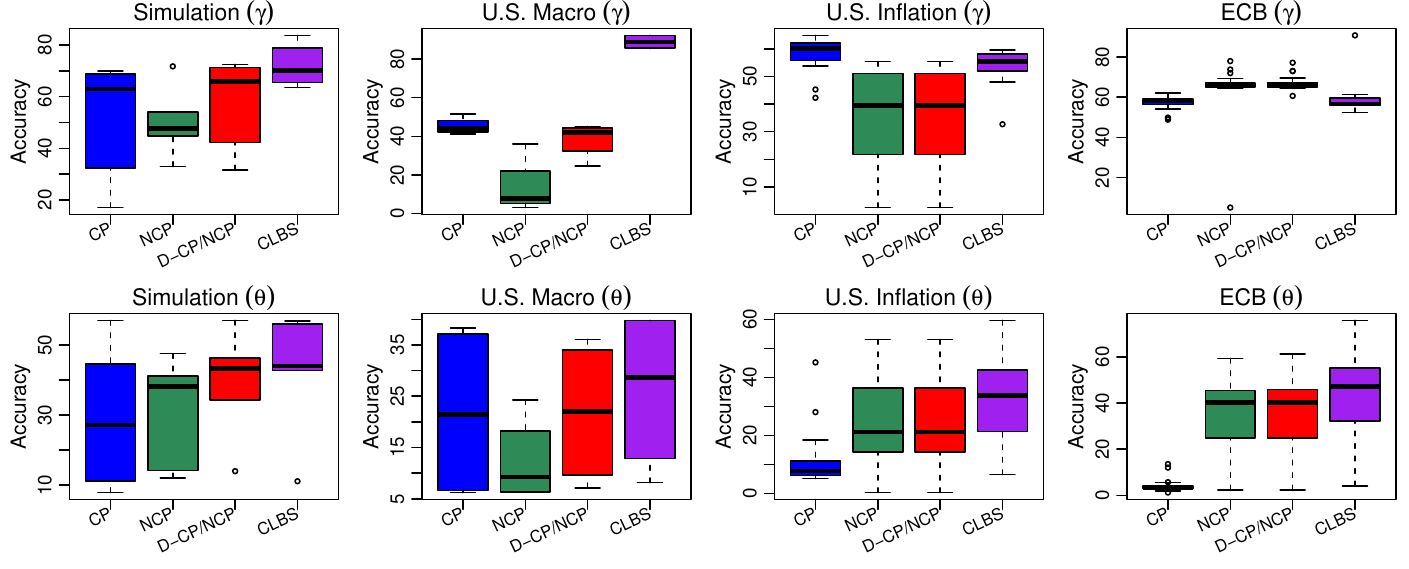}
\caption{IAE accuracy of MFVB algorithms under CP, NCP, D-CP/NCP and CLBS.}
\label{fig:va_iae}
\end{figure}

Recall that the simulated data were generated using $\bfgamma = (0, -0.3, 1.5, 0, -0.3, 1.5)^\top$, $\bftheta = (0, 0, 0.01, 0.05, 0.1, 0.5)^\top$ and $\sigma^2 = 1$. As shown in Table \ref{tab:VA_results}, CLBS achieves the highest ELBO, followed closely by D-CP/NCP, with both substantially outperforming CP and NCP. D-CP/NCP also requires the shortest runtime and fewest iterations to converge. Figure \ref{fig:va_iae} shows that CLBS and D-CP/NCP generally attain higher IAE accuracy than CP and NCP, with CLBS performing best overall. As shown in Figure \ref{fig:va_simulation_marginal}, CP overestimates the posterior mode of $\theta_1$, $\theta_2$ and $\theta_3$, whereas NCP underestimates the posterior variance of $\gamma_4$, $\gamma_5$ and $\gamma_6$. D-CP/NCP automatically selects the better parametrization for each dimension (NCP for $j \in \{1,2,3\}$ and CP for $j \in \{4,5,6\}$). CLBS differs slightly by assigning $j=3$ to CP. It also captures the wide spread of the MCMC posterior more effectively than D-CP/NCP.

\begin{figure}[tb!]
\centering
\includegraphics[width=\textwidth]{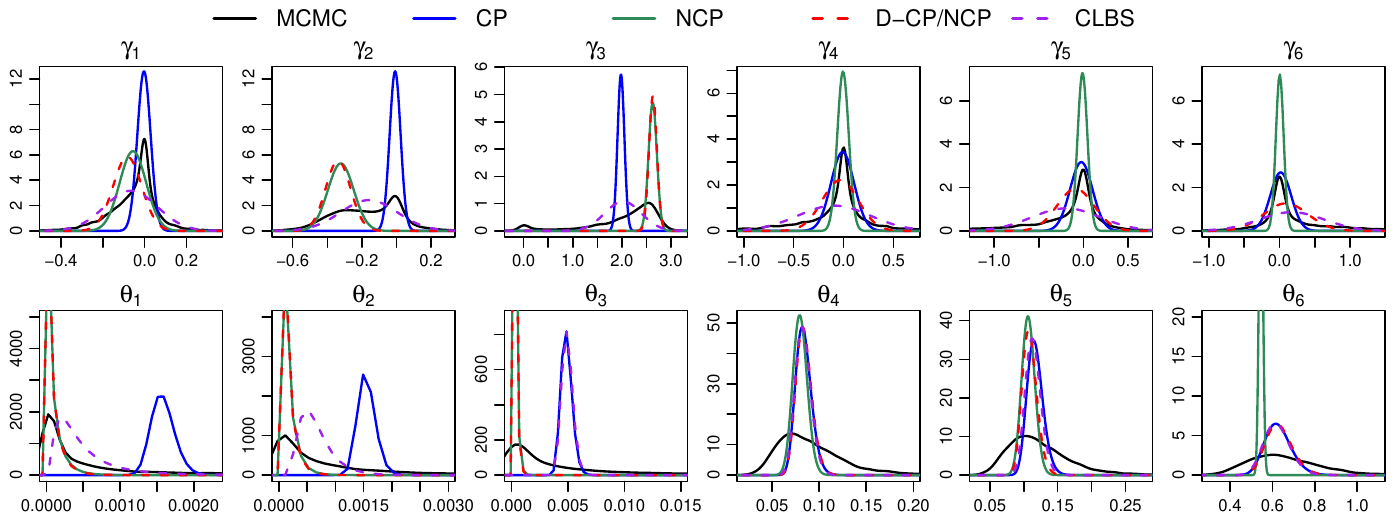}
\caption{Marginal posterior densities for $\bfgamma$ and $\bftheta$ in the simulated data.}
\label{fig:va_simulation_marginal}
\end{figure}

For the U.S. macroeconomic data, both D-CP/NCP and CLBS achieve higher ELBOs than CP and NCP. As illustrated in Figure \ref{fig:va_usmacro_marginal}, NCP underestimates the posterior variance of $\gamma_1$, $\gamma_2$ and $\gamma_3$, whereas CP overestimates the posterior mode of $\theta_4$. Both adaptive methods mitigate these issues by selecting CP for $j \in \{1,2,3\}$ and NCP for $j=4$. Despite sharing the same $\bfa$, CLBS sets $W=W_\opt$, whereas D-CP/NCP restricts $W$ to follow $\bfa$ (enforcing a full CP or NCP in each dimension). This added flexibility enables CLBS to achieve superior ELBO, runtime, and IAE accuracy across all dimensions of $\bfgamma$ and $\bftheta$.

\begin{figure}[tb!]
\centering
\includegraphics[width=0.87\textwidth]{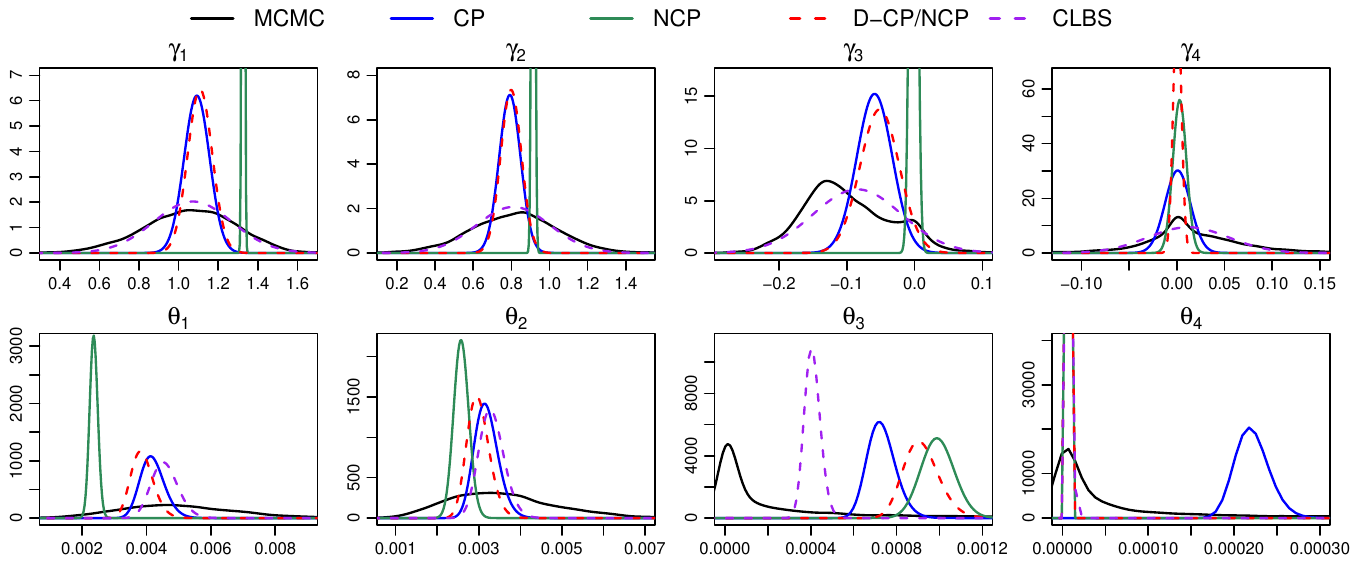}
\caption{Marginal posterior densities for $\bfgamma$ and $\bftheta$ in the U.S. Macro data.}
\label{fig:va_usmacro_marginal}
\end{figure}

For the U.S. inflation data, NCP achieves the highest ELBO and shortest runtime. D-CP/NCP matches this performance by assigning all 18 dimensions to NCP from the outset, incurring only minor computational overhead to update working parameters. Conversely, CLBS assigns the intercept to CP, yielding a lower ELBO than NCP but still outperforming CP. Regarding marginal posterior accuracy, CP aligns most closely with MCMC for $\bfgamma$ but performs worst for $\bftheta$, whereas NCP and D-CP/NCP achieve lower accuracy for $\bfgamma$ but higher accuracy for $\bftheta$. CLBS achieves the best overall balance, nearly matching CP for $\bfgamma$ while outperforming all alternatives for $\bftheta$. Overall, NCP and D-CP/NCP favor predominantly static coefficients, whereas CP, CLBS and MCMC support more extensive time variation.

For the ECB data, CLBS achieves the highest ELBO in the fewest iterations, followed closely by D-CP/NCP. Of the $d=37$ dimensions, D-CP/NCP assigns 33 to NCP, reserving CP for only the {\tt Intercept}, {\tt I\_1MO}, {\tt M3} and {\tt LOANS}variables. CLBS differs slightly by assigning {\tt LOANS} to NCP instead. In terms of marginal posterior accuracy, D-CP/NCP mirrors NCP, and generally outperforms CP for both $\bfgamma$ and $\bftheta$. CLBS improves on CP for $\bfgamma$ but falls short of D-CP/NCP. Nevertheless, it achieves the best overall performance for $\bftheta$. Notably, for $\gamma_1$, NCP and CP yield accuracies of 5.1\% and 57.0\%, whereas D-CP/NCP and CLBS (both assigning $j=1$ to CP) achieve 60.7\% and 90.8\%, respectively. Overall, the primary time variation is concentrated in the intercept, {\tt I\_1MO} and {\tt M3}, whereas the remaining predictors undergo substantial shrinkage, particularly under NCP, D-CP/NCP and CLBS.

Across all four applications, D-CP/NCP consistently matches or exceeds the ELBO of the better parametrization (CP or NCP) in fewer iterations. By accommodating an arbitrary $W$ and binary $\bfa$, CLBS flexibly deviates from pure CP or NCP per dimension, achieving the highest ELBO in three of the four applications (excluding U.S. inflation) and consistently delivering strong accuracy gains for $\bftheta$. Altogether, D-CP/NCP and CLBS provide effective automated alternatives to MFVB via standard CP or NCP, significantly boosting both convergence speed and posterior approximation accuracy.

\section{Conclusion}\label{sec:Conclusion}
We propose a dimension-specific PNCP for TVP models, introducing scale and location working parameters $(\bfa, W)$ to interpolate between CP and NCP. By optimizing the EM algorithm's convergence rate via the augmented information matrix, we establish a theoretical foundation for working parameter selection that adaptively guides parametrization toward CP for strongly time-varying coefficients, and NCP for nearly constant ones. We operationalize these insights via an AECM algorithm that dynamically switches data augmentation schemes across parameter blocks and iterations, while preserving tractable updates and monotonic log-likelihood growth. Empirically, PNCP consistently achieves faster convergence and higher log-likelihoods than pure CP or NCP.

Motivated by the close connections among the convergence rates of EM, MFVB and MCMC, we extend PNCP and its optimal working parameters to MFVB, leaving MCMC for future work. For Bayesian inference in the TVP model, we apply triple gamma shrinkage priors to both state variances and coefficients to mitigate overfitting. We address the limitations of implementing PNCP within MFVB by proposing two schemes: D-CP/NCP and CLBS. The former selects between CP and NCP on a per-dimension basis, ensuring tractable updates and monotonic ELBO growth, except during parametrization shifts. The latter sets $W = W_\opt$ to relax MFVB's independence assumption between latent states and coefficients, thresholding $\bfa \in \{0,1\}^d$ accordingly. Due to the intractability of $q(\bftheta)$, we adopt a Gaussian variational approximation for $\log \bftheta$ updated via gradient ascent.

Across both simulation and macroeconomic applications, D-CP/NCP consistently outperforms CP and NCP in convergence rate and posterior approximation accuracy. CLBS often further improves upon D-CP/NCP through its added flexibility, consistently yielding more accurate marginal posterior approximations for process variances. Crucially, all MFVB algorithms are an order of magnitude faster than MCMC, with D-CP/NCP and CLBS offering substantially more effective, automated alternatives to global CP or NCP.

Additionally, we present an efficient algorithm for operations involving the inverse of block tridiagonal matrices, that is essential to both AECM and MFVB for computing the mean and precision matrix of the conditional posterior of the latent states.

This work suggests several promising directions for future research. First, extending PNCP and its optimal working parameters to MCMC offers a powerful alternative to ASIS for accelerating exact Bayesian inference. Second, adapting D-CP/NCP and CLBS to dynamic factor models and non-Gaussian state space settings (e.g. stochastic volatility models) will broaden their econometric utility. Finally, integrating PNCP with structured variational families or normalizing flows can further bridge the gap between computational speed and posterior approximation accuracy.

\begingroup
\begin{singlespace}
  \footnotesize
  \setlength{\bibsep}{0pt}
  \bibliography{ref.bib}
\end{singlespace}
\endgroup

\clearpage
\appendix
\setcounter{figure}{0}
\setcounter{table}{0}
\setcounter{equation}{0}
\makeatletter
\renewcommand{\thefigure}{S\@arabic\c@figure}
\renewcommand{\thetable}{S\@arabic\c@table}
\renewcommand{\thesection}{S\@arabic\c@section}
\renewcommand{\theequation}{S\@arabic\c@equation}
\makeatother
\begin{center}
  {\LARGE\bf Supplementary Material}
\end{center}
\spacingset{1.2}
\section{Derivation of AECM algorithm}\label{supp:EM}
This section provides detailed derivations for the AECM algorithm presented in Section \ref{sec:EM}. The tridiagonal matrix $\Lambda \in \mathbb{R}^{(T+1) \times (T+1)}$ defined in Section \ref{sec:TVP_model} and its inverse are given by

\[
\Lambda = \begin{bmatrix}
2 & -1 \\
-1 & 2 & -1 \\
& \ddots & \ddots & \ddots &\\
& & -1 & 2 & -1 \\
& & & -1 & 1
\end{bmatrix}, \quad
\Lambda^{-1} = \begin{bmatrix}
1 & 1 & 1 & \dots & 1\\
1 & 2 & 2 & \dots & 2\\
\vdots & \vdots & \ddots & \ddots &\\
1 & 2 & \dots & T & T \\
1 & 2 & \dots & T & T+1
\end{bmatrix},
\]
and $|\Lambda| = 1$.

\subsection{Marginal log-likelihood}
We first consider the distribution of $\bfy$ in the CP. Integrating out the latent states $\bfbeta$ yields a marginal Gaussian distribution $\bfy \mid \bfpsi \sim \N(\bfmu_\bfy, \Sigma_\bfy)$, where
\begin{align*}
\bfmu_\bfy &= \E(\bfy \mid \bfpsi) = \E\{\E(\bfy \mid \bfbeta, \bfpsi)\} = \E(X\bfbeta_{1:T}) = X_s \bfgamma,\\
\Sigma_\bfy &= \Var(\bfy \mid \bfpsi) = \Var\{\E(\bfy \mid \bfbeta, \bfpsi)\} + \E\{\Var(\bfy \mid \bfbeta, \bfpsi)\}\\
&=\Var(X\bfbeta_{1:T}) + \E(\sigma^2 I_T)
= X(\Lambda_1^{-1} \otimes D_\bftheta) X^\top + \sigma^2 I_T.
\end{align*}
Note that $p(\bfy \mid \bfpsi)$ is independent of the parametrization used for the latent states. The log-likelihood for $\bfy$ is
\begin{align*}
\ell = -\frac{T}{2} \log (2\pi) - \frac{1}{2} \log |\Sigma_\bfy| - \frac{1}{2} (\bfy - \bfmu_\bfy)^\top \Sigma_\bfy^{-1} (\bfy - \bfmu_\bfy).
\end{align*}
Let $\Omega=\Lambda_1 \otimes D_\bftheta^{-1} + X^\top X / \sigma^2$ denote a block tridiagonal matrix and $\bfc = X^\top (\bfy - \bfmu_\bfy)/ \sigma^2$.
Using the Woodbury matrix identity, we obtain
\begin{align*}
\Sigma_\bfy^{-1} &= \sigma^{-2} I_T - \frac{1}{\sigma^4} X \left(\Lambda_1 \otimes D_\bftheta^{-1} + \sigma^{-2} X^\top X \right)^{-1} X^\top,\\
\Sigma_\bfy^{-1}(\bfy - \bfmu_\bfy) &= \sigma^{-2}(\bfy - \bfmu_\bfy) - \sigma^{-2}X\Omega^{-1}\bfc.
\end{align*}
The term $\Omega^{-1}\bfc$ can be computed efficiently using Algorithm \ref{Alg: Meurant} in Section \ref{sec:Meurant}.
Moreover,
\begin{align*}
|\Sigma_\bfy| &= |\sigma^2 I_T + X(\Lambda_1^{-1}\otimes D_\bftheta)X^\top |\\
&= \sigma^{2T} |I_{Td} +  (\Lambda_1^{-1}\otimes D_\bftheta)X^\top X / \sigma^2)|\\
&= \sigma^{2T} |\Lambda_1^{-1} \otimes D_\bftheta| \cdot |\Lambda_1 \otimes D_\bftheta^{-1} + X^\top X/\sigma^2|\\
&=\sigma^{2T}|\Lambda_1^{-1}|^d |D_\bftheta|^T |\Omega|.
\end{align*}
As $|\Lambda_1| = 1/2$, $|\Lambda_1^{-1}|^d = 2^d$.
Furthermore, since $D_\bftheta$ is diagonal, $|D_\bftheta|^T = (\prod_{j=1}^d \theta_j)^T$. Hence,
\begin{align*}
\log |\Sigma_\bfy| &= T \log \sigma^2 + d\log 2 + T \sum_{j=1}^d \log \theta_j + \log |\Omega|,
\end{align*}
where $\log |\Omega|$ is computed efficiently using Algorithm \ref{Alg: Meurant}.

\subsection{Deriving updates in EM algorithm}
Based on the complete data log-likelihood, we derive $Q(\bfpsi \mid \bfpsi^{(i)})$ used in the E-step of the EM algorithm as follows,
\begin{align*}
Q(\bfpsi &\mid \bfpsi^{(i)})
= \E_{\wbfbeta \mid \bfy, \bfpsi^{(i)}} \{\log p(\bfy \mid \wbfbeta, \bfgamma, \bftheta, \sigma^2) + \log p(\wbfbeta \mid \bfgamma, \bftheta) \} \\
&= \E_{\wbfbeta \mid \bfy, \bfpsi^{(i)}} \left\{  -\frac{T}{2} \log(2\pi) -\frac{T}{2} \log \sigma^2 - \frac{1}{2\sigma^2} \sum_{t=1}^T  (y_t - \bfx_t^\top W_t \bfgamma - \bfx_t^\top D_\bftheta^{\bfa/2} \wbfbeta_t)^2 \right. \\
&\quad \left. -\frac{(T+1)d}{2}  \log(2\pi) - \frac{(T+1)}{2} \sum_{j=1}^d (1-a_j)\log \theta_j \right. \\
&\quad \left. - \frac{1}{2}\{ \wbfbeta - (I_{T+1}\otimes D_\bftheta^{-\bfa/2}) \overline{W} \bfgamma \}^\top (\Lambda \otimes D_\bftheta^{\bfa-\bfone}) \{ \wbfbeta - (I_{T+1}\otimes D_\bftheta^{-\bfa/2}) \overline{W} \bfgamma \} \right\} \\
&= - \frac{1}{2\sigma^2} \sum_{t=1}^T  \{(y_t - \bfx_t^\top W_t \bfgamma - \bfx_t^\top D_\bftheta^{\bfa/2} \bfm_{\wbfbeta_t, \bfa W}^{(i)})^2 + \bfx_t^\top D_\bftheta^{\bfa/2} V_{\wbfbeta_t, \bfa}^{(i)} D_\bftheta^{\bfa/2} \bfx_t \} -\frac{T}{2} \log \sigma^2\\
&\quad -\frac{T + (T+1)d}{2} \log(2\pi) - \frac{(T+1)}{2} \sum_{j=1}^d (1-a_j)\log \theta_j -\frac{1}{2} \tr\{(\Lambda \otimes D_\bftheta^{\bfa-\bfone})V_{\wbfbeta, \bfa}^{(i)} \}\\
&\quad  - \frac{1}{2}\{ \bfm_{\wbfbeta, \bfa W}^{(i)} - (I_{T+1}\otimes D_\bftheta^{-\bfa/2}) \overline{W} \bfgamma \}^\top (\Lambda \otimes D_\bftheta^{\bfa-\bfone}) \{ \bfm_{\wbfbeta, \bfa W}^{(i)} - (I_{T+1}\otimes D_\bftheta^{-\bfa/2}) \overline{W} \bfgamma \}.
\end{align*}

The derivatives of $Q(\bfpsi \mid \bfpsi^{(i)})$ are given below.
\begin{align*}
    \nabla_{\sigma^2} Q(\bfpsi \mid \bfpsi^{(i)}) &=
    \frac{1}{2\sigma^4}\sum_{t=1}^T  \{(y_t - \bfx_t^\top W_t \bfgamma - \bfx_t^\top D_\bftheta^{\bfa/2} \bfm_{\wbfbeta_t, \bfa W}^{(i)})^2 + \bfx_t^\top D_\bftheta^{\bfa/2} V_{\wbfbeta_t, \bfa}^{(i)} D_\bftheta^{\bfa/2} \bfx_t \} - \frac{T}{2\sigma^2}, \\
\nabla_\bfgamma Q(\bfpsi \mid \bfpsi^{(i)})
&= \sigma^{-2} W_{1:T}^\top X^\top \{\bfy - X W_{1:T}\bfgamma - X(I_T \otimes D_\bftheta^{\bfa / 2})\bfm_{\wbfbeta_{1:T}, \bfa W}^{(i)}\}\\
&\quad + \overline{W}^\top (\Lambda \otimes D_\bftheta^{\bfa / 2 - \bfone}) \{\bfm_{\wbfbeta, \bfa W}^{(i)} - (I_{T+1} \otimes D_\bftheta^{-\bfa/2})\overline{W}\bfgamma\}.
\end{align*}

As discussed in Section \ref{sec:EM}, the index set $\{1, \ldots, d\}$ is partitioned into two disjoint subsets, $S_{\NCP}$ and $S_\CP$, to derive tractable updates for $\bftheta$. Under this setting, $Q(\bfpsi \mid \bfpsi^{(i)})$ is separable in $\bftheta_\CP$ and $\bftheta_\NCP$, and we can write $Q(\bfpsi \mid \bfpsi^{(i)}) = Q_\NCP + Q_\CP + C$, where $Q_\CP$ and $Q_\NCP$ contain the terms that depend only on $\bftheta_\CP$ and $\bftheta_\NCP$ respectively, while $C$ contains terms that are independent of $\bftheta$. We have
\begin{align*}
    Q_\CP &= -\frac{1}{2} \sum_{j\in S_\CP}  (T+1) \log \theta_j
    + \frac{ \tr\Big[ \Lambda \Big\{ V_{\wbfbeta_{[j]}, \bfa}^{(i)} + (\bfm_{\wbfbeta_{[j]}, \bfa W}^{(i)} - \overline{W}_{[j]} \bfgamma)(\bfm_{\wbfbeta_{[j]}, \bfa W}^{(i)} - \overline{W}_{[j]} \bfgamma)^\top \Big\} \Big]}{\theta_j},
\end{align*}
and the derivative with respect to each $\theta_j$ is
\begin{align*}
    \nabla_{\theta_j} Q_\CP = - \frac{T+1}{2\theta_j} + \frac{\tr\Big[ \Lambda \Big \{ V_{\wbfbeta_{[j]}, \bfa}^{(i)} + (\bfm_{\wbfbeta_{[j]}, \bfa W}^{(i)} - \overline{W}_{[j]} \bfgamma)(\bfm_{\wbfbeta_{[j]}, \bfa W}^{(i)} - \overline{W}_{[j]} \bfgamma)^\top \Big\} \Big]}{2\theta_j^2}.
\end{align*}
On the other hand,
\begin{align*}
Q_\NCP
    &= -\frac{1}{2\sigma^2} \sum_{t=1}^T  \bigg\{
    (y_t - \bfx_{t}^\top W_t \bfgamma - \bfx_{t, \NCP}^\top D_{\bftheta_\NCP}^{1 / 2} \bfm_{\wbfbeta_{t, \NCP}, \bfa W}^{(i)} - \bfx_{t, \CP}^\top \bfm_{\wbfbeta_{t, \CP}, \bfa W}^{(i)})^2\\
    &\quad+ \bfx_{t, \NCP}^\top D_{\bftheta_\NCP}^{1/2} \Var(\wbfbeta_{t, \NCP}) D_{\bftheta_\NCP}^{1/2}\bfx_{t, \NCP}
    +2\bfx_{t, \NCP}^\top D_{\bftheta_\NCP}^{1/2} \Cov(\wbfbeta_{t,\NCP},\wbfbeta_{t,\CP}) \bfx_{t, \CP}
    \bigg\}\\
    &=-\frac{1}{2\sigma^2}\sum_{t=1}^T  \bigg\{
    (y_t - \bfx_{t}^\top W_t \bfgamma -
    \surd \bftheta_\NCP^\top D_{\bfx_t, \NCP}
    \bfm_{\wbfbeta_{t, \NCP}, \bfa W}^{(i)} - \bfx_{t, \CP}^\top \bfm_{\wbfbeta_{t, \CP}, \bfa W}^{(i)})^2\\
    &\quad+ \surd \bftheta_\NCP^\top D_{\bfx_t, \NCP}
    \Var(\wbfbeta_{t, \NCP}) D_{\bfx_t, \NCP} \surd \bftheta_\NCP
    +2 \surd \bftheta_\NCP^\top D_{\bfx_t, \NCP}
    \Cov(\wbfbeta_{t,\NCP},\wbfbeta_{t,\CP}) \bfx_{t, \CP}
    \bigg\}.
\end{align*}
Taking derivatives with respect to $\surd \bftheta_\NCP$, $\nabla_{\surd \bftheta_\NCP} Q(\bfpsi \mid \bfpsi^{(i)}) = - (A\surd \bftheta_\NCP - \bfb) / \sigma^2$, where
\begin{align*}
    A &= \sum_{t=1}^T  D_{\bfx_t, \NCP}  \Big( \Var(\wbfbeta_{t, \NCP}) + \bfm_{\wbfbeta_{t, \NCP}, \bfa W}^{(i)}
    \bfm_{\wbfbeta_{t, \NCP}, \bfa W}^{(i)\top} \Big) D_{\bfx_t, \NCP},\\
    \bfb &= \sum_{t=1}^T  D_{\bfx_t, \NCP} \Big\{\bfm_{\wbfbeta_{t, \NCP}, \bfa W}^{(i)}(y_t - \bfx_{t}^\top W_t \bfgamma - \bfx_{t, \CP}^\top \bfm_{\wbfbeta_{t, \CP}, \bfa W}^{(i)}) - \Cov(\wbfbeta_{t, \NCP}, \wbfbeta_{t, \CP})\bfx_{t, \CP}\Big\}.
\end{align*}
The EM updates are obtained by setting the respective derivatives to zero.

\subsection{Deriving optimal working parameters for location}
This section provides detailed derivations for Section \ref{sec:optimal_location}. Recall that $Z = \blockdiag(0_{d\times d}, X^\top X/\sigma^2)$. The augmented information matrix for $\bfgamma$ is
\begin{align*}
I_{\bfgamma, \bfgamma} = -\nabla^2_\bfgamma Q(\bfpsi \mid \bfpsi^{(i)})
&= \sigma^{-2} W_{1:T}^\top X^\top X W_{1:T} + \overline{W}^\top (\Lambda \otimes D_\bftheta^{-1})\overline{W}\\
&= W^\top Z W + \overline{W}^\top (\Lambda \otimes D_\bftheta^{-1})\overline{W}.
\end{align*}

\begin{proof}[Proof of Theorem 1]
The convergence rate of Algorithm 1, which alternately updates $\bfm_{\wbfbeta, \bfa W}$ and $\bfgamma$ given $(\bftheta, \sigma^2)$ is optimized by minimizing $I_{\bfgamma, \bfgamma}$ with respect to $W$. Let $A = \Lambda \otimes D_\bftheta^{-1}$ and $B = \bfone_{T+1}\otimes I_d$. We have $I_{\bfgamma, \bfgamma} = W^\top Z W + (B - W)^\top A (B - W)$,
\begin{align*}
d\vec(I_{\bfgamma, \bfgamma}) &= \vec\{ dW^\top ZW + W^\top ZdW - dW^\top A (B - W) - (B - W)^\top A dW \} \\
&= \vec[ dW^\top \{ (A + Z) W -  AB \} + \{ (A + Z) W -  AB \}^\top dW ], \\
\nabla_{\vec(W)} \vec(I_{\bfgamma, \bfgamma})&= [ I_d \otimes\{ (A + Z) W -  AB \} ] (K + I),
\end{align*}
where $K$ denotes the commutation matrix \citep{Magnus1999}. The above gradient is zero when $W = (A + Z)^{-1} AB$. Thus
\begin{align*}
W_{\opt} &= (\Lambda \otimes D_\bftheta^{-1} + Z)^{-1} (\Lambda \bfone_{T+1} \otimes D_\bftheta^{-1})
= V_{\wbfbeta, \bfzr} (D_\bftheta^{-1}, 0, \dots, 0)^\top,
\end{align*}
since $V_{\wbfbeta, \bfzr} = (\Lambda \otimes D_\bftheta^{-1} + Z)^{-1}$. The matrix rate of convergence of Algorithm 1 is $I - I_{\aug, \bfgamma, \bfgamma}^{-1} I_{\obs, \bfgamma, \bfgamma}$. Thus, to show that the convergence rate is minimized to zero when $W=W_\opt$, it suffices to show that $I_{\aug, \bfgamma,\bfgamma} = I_{\obs, \bfgamma, \bfgamma}$. We have $I_{\obs, \bfgamma, \bfgamma} = -\nabla_\bfgamma^2 \ell = X_s^\top \Sigma_\bfy^{-1}X_s$. When $W=W_\opt$,
\begin{align*}
I_{\aug, \bfgamma, \bfgamma} = I_{\bfgamma, \bfgamma}
&= W_\opt^\top Z W_\opt + (B- W_\opt)^\top A (B - W_\opt) \\
&= W_\opt^\top (Z + A) W_\opt - W_\opt^\top AB - B^\top A W_\opt + B^\top A B \\
&= B^\top \{A - A (A+Z)^{-1} A\} B \\
&= B^\top A (A+Z)^{-1} Z B \\
&= B^\top \{Z - Z (A+Z)^{-1} Z\} B \\
&= B^\top \{ \tilde{X}^\top \tilde{X} / \sigma^2 -  (\tilde{X}^\top \tilde{X} / \sigma^2) (A + \tilde{X}^\top \tilde{X} / \sigma^2)^{-1}  (\tilde{X}^\top \tilde{X} / \sigma^2) \} B \\
&= B^\top \tilde{X}^\top \{ I / \sigma^2 -  (\tilde{X} / \sigma^2) (A + \tilde{X}^\top \tilde{X} / \sigma^2)^{-1}  (\tilde{X}^\top / \sigma^2) \} \tilde{X} B \\
&= (\tilde{X}B)^\top (\sigma^2 I_T + \tilde{X} A^{-1} \tilde{X}^\top)^{-1}\tilde{X} B \\
&= X_s^\top(\sigma^2 I_T + X(\Lambda_1^{-1} \otimes D_\bftheta)X^\top)^{-1} X_s \\
&= X_s^\top \Sigma_\bfy^{-1} X_s = I_{\obs, \bfgamma, \bfgamma}.
\end{align*}
We write $Z = \tilde{X}^\top \tilde{X} / \sigma^2$ by defining $\tilde{X} = \begin{bmatrix} 0_{T\times d} & X \end{bmatrix}$ in line 6 and apply the Woodbury matrix identity in line 8. In line 9, we apply the results that $\tilde{X}B = X_s$ and $\tilde{X}A^{-1}\tilde{X}^\top = \tilde{X} (\Lambda^{-1} \otimes D_\bftheta) \tilde{X}^\top = X(\Lambda_1^{-1} \otimes D_\bftheta)X^\top$. Note that $I_{\aug, \bfgamma,\bfgamma}$ and $I_{\obs, \bfgamma, \bfgamma}$ are both independent of $\bfgamma$.
\end{proof}

\begin{proof}[Proof of Corollary 1]
We can write
\begin{align*}
    W_\opt &= \left\{ (I_{T+1} \otimes D_\bftheta^{-1})\big(\Lambda \otimes I_d + (I_{T+1} \otimes D_\bftheta)Z\big) \right\}^{-1} (\Lambda \bfone_{T+1} \otimes D_\bftheta^{-1})\\
    &= \left\{ \Lambda \otimes I_d + (I_{T+1} \otimes D_\bftheta)Z \right\}^{-1} (\Lambda \bfone_{T+1} \otimes I_d).
\end{align*}
For a fixed $T$, as $\theta_j / \sigma^2 \rightarrow 0$ for $j = 1, \ldots, d$, we have $(I_{T+1} \otimes D_\bftheta)Z \rightarrow 0$,
\begin{align*}
W_\opt & \rightarrow \bfone_{T+1} \otimes I_d,\quad \text{and} \quad
\overline{W}_\opt = (\bfone_{T+1} \otimes I_d) - W_\opt \rightarrow 0.
\end{align*}
\end{proof}

\begin{proof}[Proof of Corollary 2]
From Section \ref{sec:EM}, $\wbfbeta \mid \bfgamma, \bfy, \bftheta, \sigma^2 \sim \N(\bfm_{\wbfbeta, \bfa W}, V_{\wbfbeta, \bfa})$,
where
\begin{align*}
\bfm_{\wbfbeta, \bfa W} &= (I_{T+1} \otimes D_\bftheta^{-\bfa / 2})\bfm_{\wbfbeta, \bfzr W}, \quad
\bfm_{\wbfbeta, \bfzr W} = V_{\wbfbeta, \bfzr}\{(\Lambda \otimes D_\bftheta^{-1}) \overline{W} \bfgamma + \bfz\}.
\end{align*}
and  $\bfz = (\bfzr_d^\top, \{ X^\top (\bfy -  X W_{1:T} \bfgamma)/\sigma^2 \}^\top)^\top$. Writing $\overline{W} = (\bfone_{T+1} \times I_d) - W$ and $\bfz = \bfz_\bfy - ZW\bfgamma$,
where $\bfz_\bfy = (\bfzr_d^\top, \{X^\top\bfy/\sigma^2\}^\top)^\top$,
\begin{align*}
\bfm_{\wbfbeta, \bfzr W}
&= V_{\wbfbeta, \bfzr}\{(\Lambda \otimes D_\bftheta^{-1}) \overline{W} \bfgamma + \bfz_\bfy - ZW\bfgamma\}\\
&= V_{\wbfbeta, \bfzr}\{(\Lambda \bfone_{T+1} \otimes D_\bftheta^{-1})  - (\Lambda \otimes D_\bftheta^{-1} + Z)W\} \bfgamma + V_{\wbfbeta, \bfzr} \bfz_\bfy.
\end{align*}
When $W=W_\opt$, $\bfm_{\wbfbeta, \bfzr W} =V_{\wbfbeta, \bfzr}\bfz_\bfy$. Thus, both $\bfm_{\wbfbeta, \bfa W}$ and $V_{\wbfbeta, \bfa}$ do not depend on $\bfgamma$ and $\wbfbeta$ is conditionally independent of $\bfgamma$ given $(\bfy, \bftheta, \sigma^2)$.
\end{proof}

\subsection{Deriving optimal working parameters for scale}
We first derive the entries of the augmented information matrix, $I_{\bftheta, \bftheta}$. Note that
\begin{align*}
Q(\bfpsi &\mid \bfpsi^{(i)})
= -\frac{T+1}{2}\sum_{j=1}^d (1-a_j)\log\theta_j
    -\frac{1}{2\sigma^2} \sum_{t=1}^T \left(
    y_t - \bfx_t^\top W_t \bfgamma - \sum_{j=1}^d x_{tj}\theta_j^{a_j / 2} m_{\wbeta_{tj}, \bfa W}^{(i)}\right)^2\\
&- \frac{1}{2\sigma^2} \sum_{t=1}^T  \sum_{j=1}^d \sum_{k=1}^d \theta_j^{a_j/2} x_{tj} V_{\wbfbeta_t, \bfa, jk}^{(i)}x_{tk}\theta_k^{a_k/2}
    -\frac{1}{2}\sum_{j=1}^d \theta_j^{a_j-1}\{
    \tr(\Lambda V^{(i)}_{\wbfbeta_{[j]}, \bfa})
    + \bfm_{\wbfbeta_{[j]}, \bfa W}^{(i)\top}
    \Lambda \bfm_{\wbfbeta_{[j]}, \bfa W}^{(i)}\}\\
&-\frac{1}{2}\sum_{j=1}^d \theta_j^{-1} \bfgamma^\top\overline{W}_{[j]}^\top \Lambda \overline{W}_{[j]}\bfgamma + \sum_{j=1}^d \theta_j^{a_j / 2 - 1}\bfm_{\wbfbeta_{[j]}, \bfa W}^{(i)\top}\Lambda \overline{W}_{[j]}\bfgamma + C,
\end{align*}
where $C$ is a constant independent of $\theta$. Taking the derivative with respect to $\theta_j$,
\begin{align*}
    \nabla_{\theta_j}Q(\bfpsi &\mid \bfpsi^{(i)})
    = -\frac{(T+1)(1-a_j)}{2\theta_j}
    +\frac{a_j\theta_j^{a_j/2-1}}{2\sigma^2}\sum_{t=1}^T  x_{tj}m_{\wbeta_{tj}, \bfa W}^{(i)} \bigg(
    y_t - \bfx_t^\top W_t \bfgamma - \sum_{k=1}^d x_{tk}\theta_k^{a_k/2}m_{\wbeta_{tk}, \bfa W}^{(i)}
    \bigg)\\
    & -\frac{a_j}{2\sigma^2}\sum_{t=1}^T  \left(
    \theta_j^{a_j-1} x_{tj}^2 V_{\wbfbeta_t, \bfa, jj}^{(i)} + \sum_{k\neq j}\theta_j^{a_j/2-1}x_{tj}V_{\wbfbeta_t, \bfa, jk}^{(i)}x_{tk}\theta_k^{a_k/2}
    \right)
    +\frac{\theta_j^{-2}}{2}\bfgamma^\top\overline{W}_{[j]}^\top\Lambda\overline{W}_{[j]}\bfgamma\\
    &-\frac{(a_j-1)\theta_j^{a_j-2}}{2}\{
    \tr(\Lambda V^{(i)}_{\wbfbeta_{[j]}, \bfa})
    +\bfm_{\wbfbeta_{[j]}, \bfa W}^{(i)\top}\Lambda\bfm_{\wbfbeta_{[j]}, \bfa W}^{(i)}\}
    +(a_j/2-1)\theta_j^{a_j/2-2}\bfm_{\wbfbeta_{[j]}, \bfa W}^{(i)\top}\Lambda \overline{W}_{[j]}\bfgamma.
\end{align*}
Taking the second derivative with respect to $\theta_j$,
\begin{align*}
    \nabla_{\theta_j}^2 Q(\bfpsi &\mid \bfpsi^{(i)})
    =\frac{(T+1)(1-a_j)}{2\theta_j^2}
    - \frac{a_j^2\theta_j^{a_j-2}}{4\sigma^2}
    \sum_{t=1}^T  x_{tj}^2 m_{\wbeta_{tj}, \bfa W}^{(i)2}\\
    &+\frac{a_j(a_j/2-1)\theta_j^{a_j/2-2}}{2\sigma^2} \sum_{t=1}^T  x_{tj} m_{\wbeta_{tj}, \bfa W}^{(i)} \left(
    y_t - \bfx_t^\top W_t\bfgamma - \sum_{k=1}^d x_{tk}\theta_k^{a_k/2}m_{\wbeta_{tk}, \bfa W}^{(i)}
    \right)\\
    &-\frac{a_j}{2\sigma^2}\sum_{t=1}^T \left(
    (a_j-1)\theta_j^{a_j-2}x_{tj}^2 V_{\wbfbeta_t, \bfa, jj}^{(i)} + \sum_{k\neq j}(a_j/2-1)\theta_j^{a_j/2-2}x_{tj}V_{\wbfbeta_t, \bfa, jk}^{(i)}x_{tk}\theta_k^{a_k/2}
    \right)\\
    &-\frac{(a_j-1)(a_j-2)\theta_j^{a_j-3}}{2}\{
    \tr(\Lambda V_{\wbfbeta_{[j]}, \bfa}^{(i)}) + \bfm_{\wbfbeta_{[j]}, \bfa W}^{(i)\top}\Lambda\bfm_{\wbfbeta_{[j]}, \bfa W}^{(i)}\}\\
    &-\frac{1}{\theta_j^{3}}\bfgamma^\top \overline{W}_{[j]}^\top \Lambda \overline{W}_{[j]}\bfgamma + (a_j/2-1)(a_j/2-2)\theta_j^{a_j/2-3}\bfm_{\wbfbeta_{[j]}, \bfa W}^{(i)\top}\Lambda\overline{W}_{[j]}\bfgamma.
\end{align*}
The second derivative with respect to $\theta_j$ and $\theta_k$, for $j\neq k$, is
\begin{align*}
    \nabla_{\theta_j, \theta_k} Q(\bfpsi \mid \bfpsi^{(i)})
    =-\frac{a_ja_k\theta_j^{a_j/2-1}\theta_k^{a_k/2-1}}{4\sigma^2}
    \sum_{t=1}^T  x_{tj}\left(
    m_{\wbeta_{tj}, a W}^{(i)} m_{\wbeta_{tk}, a W}^{(i)} + V_{\wbfbeta_t, a, jk}^{(i)}\right)x_{tk}.
\end{align*}
Evaluating the derivatives at convergence (where $\bfpsi^{(i+1)} = \bfpsi^{(i)} = \bfpsi^*$, dropping the superscript $^*$ hereafter for brevity) and applying the identities in \eqref{eq: m & V}, which imply that $\theta_j^{a_j/2} m_{\wbeta_{tj}, \bfa W} = m_{\wbeta_{tj}, \bfzr W}$, $\theta_j^{a_j} V_{\wbfbeta_t, \bfa, jj}= V_{\wbfbeta_t, \bfzr, jj}$ and $\theta_j^{a_j/2} \theta_j^{a_k/2} V_{\wbfbeta_t, \bfa, jk}= V_{\wbfbeta_t, \bfzr, jk}$, we obtain
\begin{align} \label{eqgrad}
    \nabla_{\theta_j} & Q(\bfpsi \mid \bfpsi)=
    -\frac{(T+1)(1-a_j)}{2\theta_j}
    - \frac{a_j}{2\sigma^2\theta_j} \sum_{t=1}^T  \bigg\{
    x_{tj}^2(m_{\wbeta_{tj}, \bfzr W}^2 + V_{\wbfbeta_t, \bfzr, jj})\\
    &-x_{tj}m_{\wbeta_{tj}, \bfzr W}(y_t - \bfx_t^\top W_t\bfgamma) + \sum_{k\neq j}x_{tj}(V_{\wbfbeta_t, \bfzr, jk} + m_{\wbeta_{tj}, \bfzr W}m_{\wbeta_{tk}, \bfzr W})x_{tk}
    \bigg\} \nonumber \\
    &-\frac{(a_j-1)
    \{\bfm_{\wbfbeta_{[j]}, \bfzr W}^\top\Lambda \bfm_{\wbfbeta_{[j]}, \bfzr W} + \tr(\Lambda V_{\wbfbeta_{[j]}, \bfzr})\}
    -(a_j-2)\bfm_{\wbfbeta_{[j]}, \bfzr W}^\top \Lambda \overline{W}_{[j]}\bfgamma - \bfgamma^\top \overline{W}_{[j]}^\top\Lambda\overline{W}_{[j]}\bfgamma
    }{2\theta_j^2}, \nonumber \\
    I_{\theta_j\theta_j} &= - \nabla_{\theta_j}^2 Q(\bfpsi \mid \bfpsi) =
    \frac{a_j(a_j-1)}{2\sigma^2\theta_j^2} \sum_{t=1}^T  x_{tj}^2 (m_{\wbeta_{tj}, \bfzr W}^2 + V_{\wbfbeta_t, \bfzr, jj})
    - \frac{a_j(a_j/2-1)}{2\sigma^2\theta_j^2}   \label{eqIjj}  \\
    &\quad \times
    \sum_{t=1}^T  \bigg\{
    x_{tj}m_{\wbeta_{tj}, \bfzr W}(y_t - \bfx_t^\top W_t \bfgamma) - \sum_{k\neq j}x_{tj}(V_{\wbfbeta_t, \bfzr, jk} + m_{\wbeta_{tj}, \bfzr W}m_{\wbeta_{tk}, \bfzr W}) x_{tk}
    \bigg\} \nonumber \\
    &\quad-\frac{(T+1)(1-a_j)}{2\theta_j^2} + \frac{1}{2\theta_j^3}(a_j-1)(a_j-2)\{\bfm_{\wbfbeta_{[j]}, \bfzr W}^\top\Lambda\bfm_{\wbfbeta_{[j]}, \bfzr W} + \tr(\Lambda V_{\wbfbeta_{[j]}, \bfzr})\} \nonumber \\
    &\quad-\frac{1}{2\theta_j^3}(a_j-2)(a_j/2-2)\bfm_{\wbfbeta_{[j]}, \bfzr W}^\top \Lambda \overline{W}_{[j]}\bfgamma + \frac{1}{\theta_j^3}\bfgamma^\top\overline{W}_{[j]}^\top \Lambda \overline{W}_{[j]}\bfgamma, \nonumber \\
    I_{\theta_j\theta_k} &= -\nabla_{\theta_j, \theta_k} Q(\bfpsi \mid \bfpsi) = \frac{a_ja_k}{4\sigma^2\theta_j\theta_k}\sum_{t=1}^T  x_{tj}\bigg(m_{\wbeta_{tj}, 0 W}m_{\wbeta_{tk}, 0 W} + V_{\wbfbeta_t, 0, jk}\bigg)x_{tk}. \nonumber
\end{align}

To further simplify the expression of $I_{\theta_j\theta_j}$, we use the fact that $\nabla_{\theta_j} Q(\bfpsi \mid \bfpsi) = 0$ at convergence and
\begin{align}
\bfm_{\wbfbeta, \bfzr I} &= \frac{V_{\wbfbeta, 0}}{\sigma^2} \begin{bmatrix} \bfzr_d \\  X^\top \{\bfy -  X(\bfone_T \otimes \bfgamma)\} \end{bmatrix}
= \bfm_{\wbfbeta, 0W} - \overline{W} \bfgamma. \label{Identity2}
\end{align}
From \eqref{eqgrad}, plugging $a_j=0$ and setting $\nabla_{\theta_j} Q(\bfpsi \mid \bfpsi) = 0$ yields
\begin{align} \label{Identity7}
(T+1) \theta_j &= \bfm_{\wbfbeta_{[j]}, \bfzr W}^\top \Lambda  \bfm_{\wbfbeta_{[j]}, \bfzr W} + \tr(\Lambda V_{\wbfbeta_{[j]}, \bfzr}) - 2 \bfm_{\wbfbeta_{[j]}, \bfzr W}^\top \Lambda \overline{W}_{[j]} \bfgamma + \bfgamma^\top \overline{W}_{[j]}^\top \Lambda \overline{W}_{[j]} \bfgamma,
\nonumber \\
\text{or} \quad
(T+1) \theta_j &= \bfm_{\wbfbeta_{[j]}, \bfzr I}^\top \Lambda  \bfm_{\wbfbeta_{[j]}, \bfzr I} + \tr(\Lambda V_{\wbfbeta_{[j]}, \bfzr}).
\end{align}
On the other hand, plugging $a_j=1$ and setting $\nabla_{\theta_j} Q(\bfpsi \mid \bfpsi) = 0$ yields
\begin{align} \label{Identity8}
&\sigma^{-2} \sum_{t=1}^T \left\{ x_{tj}^2 (m_{\wbeta_{tj}, 0W}^2 + V_{\wbfbeta_t, 0, jj} ) - x_{tj} m_{\wbeta_{tj}, 0W} (y_t - \bfx_t^\top W_t \bfgamma) \right.
\nonumber \\
& \left.  +  \sum_{k\neq j}^d x_{tj} (V_{\wbfbeta_t, 0, jk} + m_{\wbeta_{tj}, 0W} m_{\wbeta_{tk}, 0W}) x_{tk} \right\}
=\bfgamma^\top \overline{W}_{[j]}^\top \frac{\Lambda}{\theta_j} \overline{W}_{[j]} \bfgamma - \bfm_{\wbfbeta_{[j]}, \bfzr W}^\top \frac{\Lambda}{\theta_j} \overline{W}_{[j]} \bfgamma.
\end{align}
Substituting the identities in \eqref{Identity7} and \eqref{Identity8} into \eqref{eqIjj}, we have
\begin{align*}
I_{\theta_j\theta_j}
&= \frac{a_j(aj-1)}{2\sigma^2\theta_j^2} \sum_{t=1}^T x_{tj}^2 (m_{\wbeta_{tj}, 0W}^{2} + V_{\wbfbeta_t, 0, jj})- \frac{a_j(a_j/2-1)}{2\sigma^2\theta_j^2} \sum_{t=1}^T x_{tj}^2 (m_{\wbeta_{tj}, 0W}^2 + V_{\wbfbeta_t, 0, jj} )
\\
& \quad - \frac{a_j(a_j/2-1)\{ \bfm_{\wbfbeta_{[j]}, \bfzr W}^\top \Lambda \overline{W}_{[j]} \bfgamma - \bfgamma^\top \overline{W}_{[j]}^\top \Lambda \overline{W}_{[j]} \bfgamma \}}{2\theta_j^3} - \frac{(T+1)(1-a_j)}{2\theta_j^2}
\\
& \quad +\frac{(a_j-1)(a_j-2)\{ (T+1) \theta_j + 2 \bfm_{\wbfbeta_{[j]}, \bfzr W}^\top \Lambda \overline{W}_{[j]} \bfgamma - \bfgamma^\top \overline{W}_{[j]}^\top \Lambda \overline{W}_{[j]} \bfgamma \}}{2 \theta_j^3}
\\
& \quad - \frac{(a_j-2)(a_j/2-2)\bfm_{\wbfbeta_{[j]}, \bfzr W}^\top \Lambda \overline{W}_{[j]} \bfgamma}{2\theta_j^3}
\quad + \frac{\bfgamma^\top \overline{W}_{[j]}^\top \Lambda \overline{W}_{[j]} \bfgamma}{\theta_j^3}
\\
&= \frac{1}{2\theta_j^2} \left( (T+1)(a_j-1)^2 + \frac{a_j^2}{2\sigma^2} \sum_{t=1}^T x_{tj}^2 (m_{\wbeta_{tj}, 0W}^{2} + V_{\wbfbeta_t, 0, jj})
\right.
\\
& \quad \left. + \frac{(a_j^2-2a_j)\bfm_{\wbfbeta_{[j]}, \bfzr W}^\top \Lambda \overline{W}_{[j]} \bfgamma}{\theta_j}
+ \frac{(2a_j-a_j^2/2) \bfgamma^\top \overline{W}_{[j]}^\top \Lambda \overline{W}_{[j]} \bfgamma}{\theta_j} \right)
\\
&= \frac{1}{2\theta_j^2} \left( (T+1)(a_j-1)^2 + \frac{a_j^2}{2\sigma^2} \sum_{t=1}^T x_{tj}^2 (m_{\wbeta_{tj}, 0W}^{2} + V_{\wbfbeta_t, 0, jj})
\right.
\\
& \quad \left. + \frac{(a_j^2 - 2a_j) \bfm_{\wbfbeta_{[j]}, \bfzr I}^\top \Lambda \overline{W}_{[j]} \bfgamma}{\theta_j} + \frac{a_j^2 \bfgamma^\top \overline{W}_{[j]}^\top \Lambda \overline{W}_{[j]} \bfgamma}{2\theta_j} \right).
\end{align*}
Recall that $V_{\wbfbeta, \bfzr}^{-1} = \Lambda \otimes D_\bftheta^{-1} + Z$
and $\bar{V}_{\wbfbeta_{[j]}, \bfzr}^{-1} \in \mathbbm{R}^{(T+1) \times (T+1)}$ extracts the $(j,j)$ element in each $d\times d$ block of $V_{\wbfbeta, 0}^{-1}$, so that $\bar{V}_{\wbfbeta_{[j]}, \bf0}^{-1} = \Lambda/\theta_j + \diag(0, x_{1j^2}, \dots, x_{Tj}^2)/\sigma^2$. Combining this with \eqref{Identity7},
\begin{align*}
& \sigma^{-2} \sum_{t =1}^T x_{tj}^2 (m_{\wbeta_{tj}, 0W}^{2} + V_{\wbfbeta_t, 0, jj})
=  \tr\{ (\bfm_{\wbfbeta_{[j]}, \bfzr W}\bfm_{\wbfbeta_{[j]}, \bfzr W}^\top  + V_{\wbfbeta_{[j]}, \bfzr}) (\bar{V}_{\wbfbeta_{[j]}, \bfzr}^{-1}  - \Lambda/\theta_j)\} \\
& = \bfm_{\wbfbeta_{[j]}, \bfzr W}^\top \bar{V}_{\wbfbeta_{[j]}, \bfzr}^{-1}  \bfm_{\wbfbeta_{[j]}, \bfzr W} + \tr(\bar{V}_{\wbfbeta_{[j]}, \bfzr}^{-1} V_{\wbfbeta_{[j]}, \bfzr} ) - (T+1) - \frac{2\bfm_{\wbfbeta_{[j]}, \bfzr W}^\top \Lambda \overline{W}_{[j]} \bfgamma}{\theta_j} + \frac{\bfgamma^\top \overline{W}_{[j]}^\top \Lambda \overline{W}_{[j]} \bfgamma}{\theta_j}
\\
& = \bfm_{\wbfbeta_{[j]}, \bfzr W}^\top \bar{V}_{\wbfbeta_{[j]}, \bfzr}^{-1}  \bfm_{\wbfbeta_{[j]}, \bfzr W} + \tr(\bar{V}_{\wbfbeta_{[j]}, \bfzr}^{-1} V_{\wbfbeta_{[j]}, \bfzr} ) - (T+1) - \frac{2\bfm_{\wbfbeta_{[j]}, \bfzr I}^\top \Lambda \overline{W}_{[j]} \bfgamma}{\theta_j} - \frac{\bfgamma^\top \overline{W}_{[j]}^\top \Lambda \overline{W}_{[j]} \bfgamma}{\theta_j}.
\end{align*}
Plugging this result into $I_{\theta_j\theta_j} $, we obtain
\begin{align*}
    I_{\theta_j, \theta_j} &=
    \frac{1}{2\theta_j^2} \bigg\{
    (T+1) \bigg(1 - 2 a_j + \frac{a_j^2}{2}\bigg)
    - \frac{2a_j \bfm_{\wbfbeta_{[j]}, \bfzr I}^\top \Lambda \overline{W}_{[j]}\bfgamma}{\theta_j}\\
    &\quad + \frac{a_j^2\bfm_{\wbfbeta_{[j]}, \bfzr W}^\top \bar{V}_{\wbfbeta_{[j]}, \bfzr}^{-1} \bfm_{\wbfbeta_{[j]}, \bfzr W}}{2}
    + \frac{a_j^2}{2} \tr(\bar{V}^{-1}_{\wbfbeta_{[j]}, \bfzr} V_{\wbfbeta_{[j]}, \bfzr})
    \bigg\}.
\end{align*}

\begin{proof}[Proof of Theorem 2]
First, we minimize $I_{\theta_j, \theta_j}$ with respect to $a_j$, conditional on $W$.
\begin{align*}
\frac{\partial{I_{\theta_j, \theta_j}}}{\partial a_j}
&= \frac{1}{2\theta_j^2} \bigg\{
    (T+1)(a_j - 2)
    - \frac{2 \bfm_{\wbfbeta_{[j]}, \bfzr I}^\top \Lambda \overline{W}_{[j]}\bfgamma}{\theta_j}\\
    &\quad + {a_j\bfm_{\wbfbeta_{[j]}, \bfzr W}^\top \bar{V}_{\wbfbeta_{[j]}, \bfzr}^{-1} \bfm_{\wbfbeta_{[j]}, \bfzr W}}
    + {a_j} \tr(\bar{V}^{-1}_{\wbfbeta_{[j]}, \bfzr} V_{\wbfbeta_{[j]}, \bfzr})
    \bigg\}.
\end{align*}
Setting the derivative to zero, we obtain
\begin{align*}
    a_{\opt, j} &=
    \frac{2(T+1) + 2 \bfm_{\wbfbeta_{[j]}, \bfzr I}^\top \Lambda \overline{W}_{[j]}\bfgamma/\theta_j}{T+1 + \bfm_{\wbfbeta_{[j]}, \bfzr W}^\top \bar{V}_{\wbfbeta_{[j]}, \bfzr}^{-1} \bfm_{\wbfbeta_{[j]}, \bfzr W} + \tr(\bar{V}^{-1}_{\wbfbeta_{[j]}, \bfzr} V_{\wbfbeta_{[j]}, \bfzr})}.
\end{align*}
The second derivative is
\begin{align*}
    \frac{\partial^2 I_{\theta_j, \theta_j}}{\partial a_j^2}  &=
    \frac{1}{2\theta_j^2} \bigg\{
    (T+1)
    + {\bfm_{\wbfbeta_{[j]}, \bfzr W}^\top \bar{V}_{\wbfbeta_{[j]}, \bfzr}^{-1} \bfm_{\wbfbeta_{[j]}, \bfzr W}}
    + \tr(\bar{V}^{-1}_{\wbfbeta_{[j]}, \bfzr} V_{\wbfbeta_{[j]}, \bfzr})
    \bigg\} > 0.
\end{align*}
Hence, $a_{\opt, j}$ is the unique minimizer. In the case $\overline{W}_{[j]} = 0$, $a^*_{\opt, j} = 2 \{1 + C_j/(T+1) \}^{-1}$.
\end{proof}

\section{Data URLs and plots}\label{supp:application_fig}
This section provides dataset URLs and visual plots for the empirical applications analyzed in Section \ref{sec:EM_Applications} and Section \ref{sec:VA_Applications}. The U.S. macroeconomic dataset is included as {\tt usmacro.update} in the R package {\tt bvarsv} \citep{Krueger2024}. The U.S. inflation data is available in the supplementary material for \cite{Fruhwirth2021} at \url{https://www.stat.rice.edu/~marina/BVSBOOK2022/SUPPLEMENT/supportmaterial.html}. Lastly, the Euro area inflation data can be accessed directly from the European Central Bank (ECB) Data Portal at \url{https://data.ecb.europa.eu/data/datasets/RTD}.

\begin{figure}[H]
\centering
\includegraphics[width=0.9\textwidth]{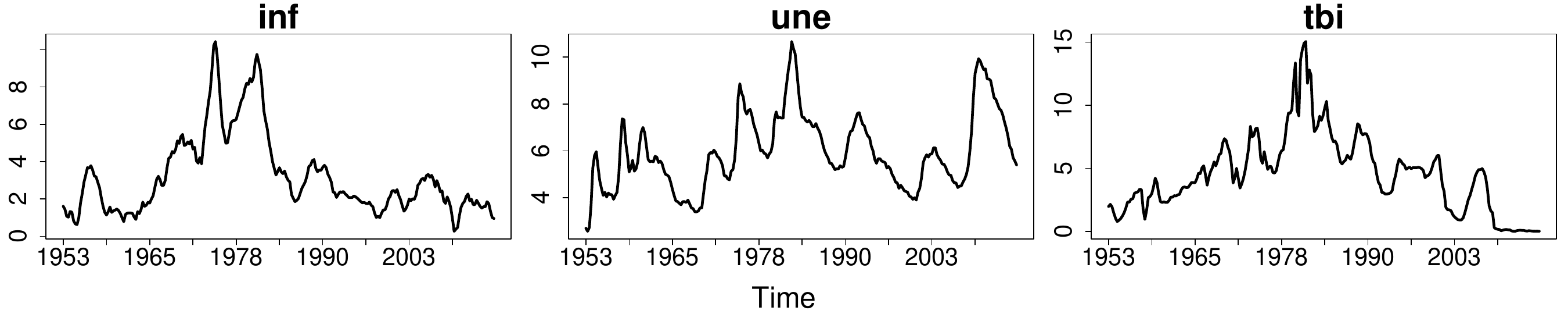}
\caption{U.S. macroeconomic data.}
\label{fig:usmacro_data}
\end{figure}

\begin{table}[H]
\centering
\small
\caption{Description of variables in U.S. inflation data.}
\label{tab:USinf_data}
\begin{tabular}{ll}
\hline
Symbol & Description \\
\hline
cons     & Real Personal Consumption Expenditures   \\
dom\_inv & Real Gross Private Domestic Investment   \\
dow      & Dow Jones Industrial Average             \\
emp      & Non-farm Payroll Employment              \\
exp      & Expected Changes in Inflation Rates      \\
gdp      & Real GDP                                 \\
hstarts  & Housing Starts                           \\
m1       & M1 Money Stock                           \\
napmpri  & NAPM Commodity Prices Index              \\
napmsdi  & NAPM Vendor Deliveries Index             \\
pmi      & ISM Manuf.: PMI Composite Index          \\
spread   & Spread 10-year T-Bond yield/3m T-Bill    \\
treas    & 3m Treasury Bill: Secondary Market       \\
unemp    & Unemployment rate                        \\
inf      & Consumer Price Index                     \\
\hline
\end{tabular}
\end{table}

\begin{figure}[H]
\centering
\includegraphics[width=0.72\textwidth]{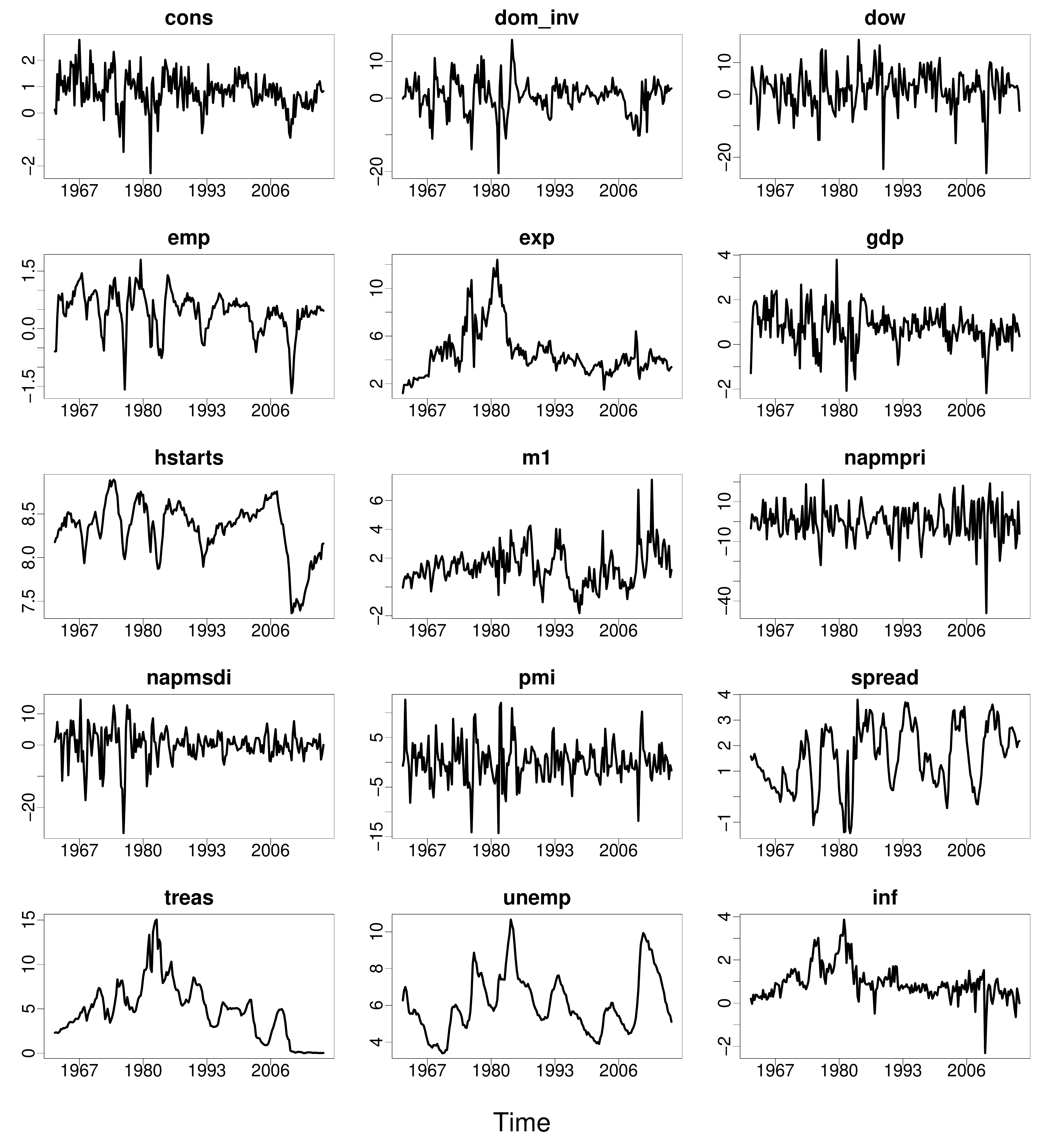}
\caption{U.S. inflation data.}
\label{fig:USinf_data}
\end{figure}

\begin{table}[H]
\centering
\small
\caption{Description of variables in ECB data.}
\label{tab:ECB_data}
\begin{tabular}{ll}
\hline
Symbol & Description \\
\hline
I-1MO   & One-month Euribor (Euro interbank offered rate) \\
I-1YR   & One-year Euribor (Euro interbank offered rate) \\
SENT    & Percentage change in economic sentiment indicator \\
STOCK-1 &
\makecell[l]{Percentage change in equity index\\
(Dow Jones, Euro Stoxx, Economic sector index financial)} \\
STOCK-2 & Percentage change in equity index -- Dow Jones Euro Stoxx 50 index \\
EXRATE  & \makecell[l]{Percentage change in ECB real effective exchange rate\\ (CPI deflated, broad group of currencies against euro)} \\
IP      & Percentage change in industrial production index \\
LOANS   & Percentage change in loans (total maturity, all currencies combined) \\
M3      & Annual percentage change in monetary aggregate M3 \\
CAR     & Registrations of new passenger cars \\
OIL     & Percentage change in oil price (Brent crude, 1-month forward) \\
ORDER   & Change in order-book levels \\
UNEMP   & Standardized unemployment rate (all ages, male \& female) \\
HICP    & Harmonized index of consumer prices \\
\hline
\end{tabular}
\end{table}

\begin{figure}[H]
\centering
\includegraphics[width=0.75\textwidth]{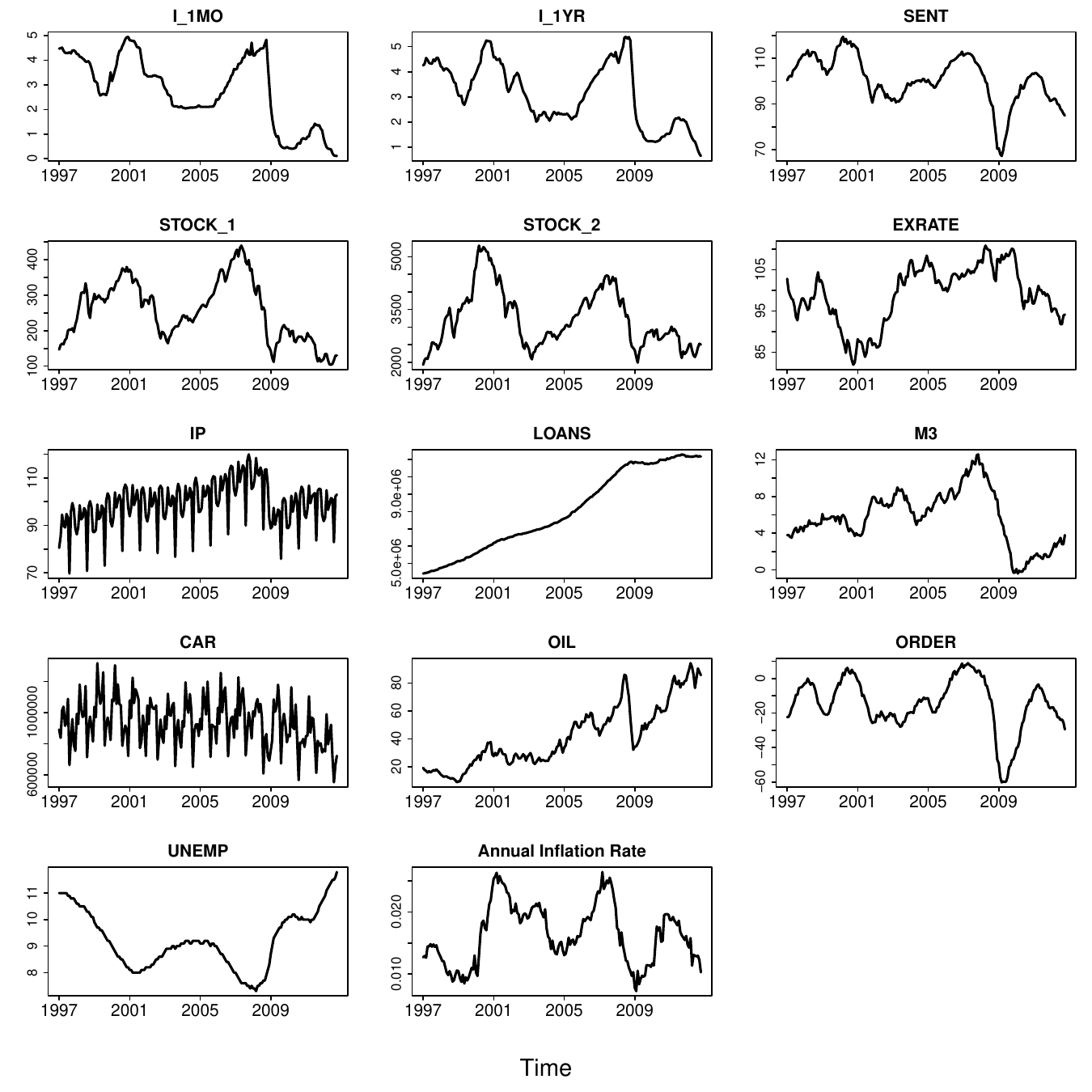}
\caption{ECB data. Original time series.}
\label{fig:ECB_data}
\end{figure}

\section{Detailed AECM algorithm results}\label{supp:EM_results}

This section provides detailed results for the applications presented in Section \ref{sec:EM_Applications}, reporting the estimated baseline coefficients $\bfgamma$, state variances $\bftheta$ and the  signal-to-noise ratios $\bftheta/\sigma^2$ and assigned parametrizations under the PNCP.

{
\small
\setlength{\tabcolsep}{2.5pt}

\begin{longtable}{crrrrrrrrrr}
    \caption{AECM estimates for simulated dataset and ground truth values.}
    \label{tab:simulation_EM_combined}\\

\hline
    \multirow{2}{*}{$j$}
    & \multicolumn{4}c{$\gamma_j$}
    & \multicolumn{4}c{$\theta_j$}
    & $\theta_j/\sigma^2$
    & $a_j$ \\
    \cline{2-5}
    \cline{6-9}
    \cline{10-10}
    \cline{11-11}
    & True & CP & NCP & PNCP
    & True & CP & NCP & PNCP
    & PNCP & PNCP \\
\hline
    \endfirsthead

\hline
    \multirow{2}{*}{$j$}
    & \multicolumn{4}c{$\gamma_j$}
    & \multicolumn{4}c{$\theta_j$}
    & $\theta_j/\sigma^2$
    & $a_j$ \\
    \cline{2-5}
    \cline{6-9}
    \cline{10-10}
    \cline{11-11}
    & Truth & CP & NCP & PNCP
    & Truth & CP & NCP & PNCP
    & PNCP & PNCP \\
\hline
    \endhead

\hline
    \endfoot

    1
    & $0$ & $-0.160549$ & $-0.129263$ & $-0.130216$
    & $0$ & $0.000360$ & $0.000000$ & $0.000000$
    & $0.000000$ & $1$ \\

    2
    & $-0.3$ & $-0.293657$ & $-0.362911$ & $-0.363042$
    & $0$ & $0.000302$ & $0.000000$ & $0.000000$
    & $0.000000$ & $1$ \\

    3
    & $1.5$ & $2.486690$ & $2.650430$ & $2.649950$
    & $0.01$ & $0.000750$ & $0.000000$ & $0.000000$
    & $0.000000$ & $1$ \\

    4
    & $0$ & $-0.236005$ & $-0.235895$ & $-0.242117$
    & $0.05$ & $0.084290$ & $0.083780$ & $0.081931$
    & $0.077778$ & $0$ \\

    5
    & $-0.3$ & $-0.919154$ & $-0.891327$ & $-0.900334$
    & $0.1$ & $0.108310$ & $0.107203$ & $0.101887$
    & $0.096722$ & $0$ \\

    6
    & $1.5$ & $1.655560$ & $1.639980$ & $1.649350$
    & $0.5$ & $0.632164$ & $0.522980$ & $0.633618$
    & $0.601499$ & $0$ \\

\end{longtable}
}

{
\small
\setlength{\tabcolsep}{3.5pt}

\begin{longtable}{lrrrrrrrr}
    \caption{AECM estimates for U.S. macro dataset.}
    \label{tab:macro_EM_combined}\\

\hline
    \multirow{2}{*}{Variable}
    & \multicolumn{3}c{$\gamma_j$}
    & \multicolumn{3}c{$\theta_j$}
    & $\theta_j/\sigma^2$
    & $a_j$ \\
    \cline{2-4}
    \cline{5-7}
    \cline{8-8}
    \cline{9-9}
    & CP & NCP & PNCP
    & CP & NCP & PNCP
    & PNCP & PNCP \\
\hline
    \endfirsthead

\hline
    \multirow{2}{*}{Variable}
    & \multicolumn{3}c{$\gamma_j$}
    & \multicolumn{3}c{$\theta_j$}
    & $\theta_j/\sigma^2$
    & $a_j$ \\
    \cline{2-4}
    \cline{5-7}
    \cline{8-8}
    \cline{9-9}
    & CP & NCP & PNCP
    & CP & NCP & PNCP
    & PNCP & PNCP \\
\hline
    \endhead

\hline
    \endfoot

    Intercept
    & $1.045750$ & $1.063220$ & $1.035250$
    & $0.007095$ & $0.006060$ & $0.006953$
    & $2.425420$ & $0$ \\

    {\tt inf\_lag}
    & $0.873011$ & $0.910856$ & $0.880890$
    & $0.004027$ & $0.004161$ & $0.004028$
    & $1.405020$ & $0$ \\

    {\tt une\_lag}
    & $-0.133204$ & $-0.138142$ & $-0.140987$
    & $0.000023$ & $0.000002$ & $0.000003$
    & $0.001002$ & $1$ \\

    {\tt tbi\_lag}
    & $0.027208$ & $0.027358$ & $0.026837$
    & $0.000016$ & $0.000000$ & $0.000000$
    & $0.000106$ & $1$ \\

\end{longtable}
}

{
\small
\setlength{\tabcolsep}{3.5pt}

\begin{longtable}{lrrrrrrrr}
    \caption{AECM estimates for U.S. inflation dataset.}
    \label{tab:usinf_EM_combined}\\

\hline
    \multirow{2}{*}{Variable}
    & \multicolumn{3}c{$\gamma_j$}
    & \multicolumn{3}c{$\theta_j$}
    & $\theta_j/\sigma^2$
    & $a_j$ \\
    \cline{2-4}
    \cline{5-7}
    \cline{8-8}
    \cline{9-9}
    & CP & NCP & PNCP
    & CP & NCP & PNCP
    & PNCP & PNCP \\
\hline
    \endfirsthead

\hline
    \multirow{2}{*}{Variable}
    & \multicolumn{3}c{$\gamma_j$}
    & \multicolumn{3}c{$\theta_j$}
    & $\theta_j/\sigma^2$
    & $a_j$ \\
    \cline{2-4}
    \cline{5-7}
    \cline{8-8}
    \cline{9-9}
    & CP & NCP & PNCP
    & CP & NCP & PNCP
    & PNCP & PNCP \\
\hline
    \endhead

\hline
    \endfoot

    Intercept
    & $1.111400$ & $1.129850$ & $1.129850$
    & $0.000020$ & $0.000000$ & $0.000000$
    & $0.000000$ & $1$ \\

    {\tt lag1}
    & $0.268465$ & $0.322039$ & $0.322039$
    & $0.000916$ & $0.000734$ & $0.000734$
    & $0.003103$ & $1$ \\

    {\tt lag2}
    & $-0.038085$ & $-0.035090$ & $-0.035090$
    & $0.000014$ & $0.000000$ & $0.000000$
    & $0.000000$ & $1$ \\

    {\tt lag3}
    & $0.315964$ & $0.328842$ & $0.328842$
    & $0.000526$ & $0.000339$ & $0.000339$
    & $0.001432$ & $1$ \\

    {\tt cons}
    & $0.096334$ & $0.123440$ & $0.123440$
    & $0.000009$ & $0.000000$ & $0.000000$
    & $0.000000$ & $1$ \\

    {\tt dom\_inv}
    & $0.021913$ & $-0.061366$ & $-0.061366$
    & $0.000104$ & $0.000000$ & $0.000000$
    & $0.000000$ & $1$ \\

    {\tt dow}
    & $-0.028668$ & $-0.007580$ & $-0.007580$
    & $0.000097$ & $0.000000$ & $0.000000$
    & $0.000000$ & $1$ \\

    {\tt emp}
    & $0.142537$ & $0.061060$ & $0.061060$
    & $0.000008$ & $0.000000$ & $0.000000$
    & $0.000000$ & $1$ \\

    {\tt exp}
    & $0.136886$ & $0.171804$ & $0.171804$
    & $0.000027$ & $0.000000$ & $0.000000$
    & $0.000000$ & $1$ \\

    {\tt gdp}
    & $-0.056348$ & $-0.046668$ & $-0.046668$
    & $0.000042$ & $0.000000$ & $0.000000$
    & $0.000000$ & $1$ \\

    {\tt hstarts}
    & $0.072232$ & $0.162069$ & $0.162069$
    & $0.000023$ & $0.000000$ & $0.000000$
    & $0.000000$ & $1$ \\

    {\tt m1}
    & $0.119115$ & $0.058541$ & $0.058541$
    & $0.000006$ & $0.000000$ & $0.000000$
    & $0.000000$ & $1$ \\

    {\tt napmpri}
    & $0.113764$ & $0.176976$ & $0.176976$
    & $0.000311$ & $0.000000$ & $0.000000$
    & $0.000000$ & $1$ \\

    {\tt napmsdi}
    & $-0.041599$ & $-0.059522$ & $-0.059522$
    & $0.000008$ & $0.000000$ & $0.000000$
    & $0.000000$ & $1$ \\

    {\tt pmi}
    & $0.085641$ & $0.071841$ & $0.071841$
    & $0.000061$ & $0.000000$ & $0.000000$
    & $0.000000$ & $1$ \\

    {\tt spread}
    & $-0.119524$ & $-0.146734$ & $-0.146734$
    & $0.000022$ & $0.000000$ & $0.000000$
    & $0.000000$ & $1$ \\

    {\tt treas}
    & $0.167627$ & $0.171760$ & $0.171760$
    & $0.004973$ & $0.000000$ & $0.000000$
    & $0.000000$ & $1$ \\

    {\tt unemp}
    & $0.019255$ & $0.022379$ & $0.022379$
    & $0.000046$ & $0.000451$ & $0.000451$
    & $0.001909$ & $1$ \\

\end{longtable}
}

{
\small
\setlength{\tabcolsep}{3.5pt}

\begin{longtable}{lrrrrrrrr}
    \caption{AECM estimates for ECB dataset.}
    \label{tab:ecb_EM_combined}\\

\hline
    \multirow{2}{*}{Variable}
    & \multicolumn{3}c{$\gamma_j$}
    & \multicolumn{3}c{$\theta_j$}
    & $\theta_j/\sigma^2$
    & $a_j$ \\
    \cline{2-4}
    \cline{5-7}
    \cline{8-8}
    \cline{9-9}
    & CP & NCP & PNCP
    & CP & NCP & PNCP
    & PNCP & PNCP \\
\hline
    \endfirsthead

\hline
    \multirow{2}{*}{Variable}
    & \multicolumn{3}c{$\gamma_j$}
    & \multicolumn{3}c{$\theta_j$}
    & $\theta_j/\sigma^2$
    & $a_j$ \\
    \cline{2-4}
    \cline{5-7}
    \cline{8-8}
    \cline{9-9}
    & CP & NCP & PNCP
    & CP & NCP & PNCP
    & PNCP & PNCP \\
\hline
    \endhead

\hline
    \endfoot

    Intercept
    & $2.724490$ & $3.027870$ & $2.607760$
    & $0.024176$ & $0.017841$ & $0.025338$
    & $67.173700$ & $0$ \\

    {\tt lag12}
    & $-0.102190$ & $-0.084281$ & $-0.082901$
    & $0.000147$ & $0.000143$ & $0.000132$
    & $0.350704$ & $0$ \\

    {\tt lag11}
    & $-0.265523$ & $-0.222045$ & $-0.219657$
    & $0.000000$ & $0.000000$ & $0.000000$
    & $0.000000$ & $1$ \\

    {\tt lag10}
    & $-0.216182$ & $-0.164990$ & $-0.169269$
    & $0.000001$ & $0.000000$ & $0.000000$
    & $0.000000$ & $1$ \\

    {\tt lag9}
    & $-0.160197$ & $-0.127455$ & $-0.136267$
    & $0.000001$ & $0.000000$ & $0.000000$
    & $0.000000$ & $1$ \\

    {\tt lag8}
    & $-0.289990$ & $-0.228750$ & $-0.227733$
    & $0.000209$ & $0.000000$ & $0.000000$
    & $0.000000$ & $1$ \\

    {\tt lag7}
    & $-0.075710$ & $-0.094025$ & $-0.086494$
    & $0.000050$ & $0.000000$ & $0.000000$
    & $0.000000$ & $1$ \\

    {\tt lag6}
    & $-0.225107$ & $-0.148067$ & $-0.149416$
    & $0.000001$ & $0.000000$ & $0.000000$
    & $0.000000$ & $1$ \\

    {\tt lag5}
    & $-0.282153$ & $-0.224895$ & $-0.224735$
    & $0.000000$ & $0.000000$ & $0.000000$
    & $0.000000$ & $1$ \\

    {\tt lag4}
    & $-0.305474$ & $-0.294914$ & $-0.285572$
    & $0.000269$ & $0.000000$ & $0.000000$
    & $0.000000$ & $1$ \\

    {\tt lag3}
    & $-0.408757$ & $-0.357777$ & $-0.347325$
    & $0.000001$ & $0.000000$ & $0.000000$
    & $0.000000$ & $1$ \\

    {\tt lag2}
    & $-0.344783$ & $-0.307471$ & $-0.293040$
    & $0.000001$ & $0.000000$ & $0.000000$
    & $0.000000$ & $1$ \\

    {\tt lag1}
    & $-0.415468$ & $-0.395706$ & $-0.395616$
    & $0.000001$ & $0.000000$ & $0.000000$
    & $0.000000$ & $1$ \\

    {\tt I\_1MO}
    & $0.841680$ & $0.756924$ & $0.655846$
    & $0.001938$ & $0.001276$ & $0.000747$
    & $1.981170$ & $0$ \\

    {\tt I\_1YR}
    & $-0.267106$ & $-0.245983$ & $-0.287491$
    & $0.000051$ & $0.000824$ & $0.000053$
    & $0.141298$ & $0$ \\

    {\tt SENT}
    & $-0.264064$ & $-0.279305$ & $-0.248460$
    & $0.001122$ & $0.000922$ & $0.000618$
    & $1.638110$ & $0$ \\

    {\tt STOCK\_1}
    & $0.278873$ & $0.296078$ & $0.320676$
    & $0.000015$ & $0.000053$ & $0.000016$
    & $0.043285$ & $0$ \\

    {\tt STOCK\_2}
    & $-0.229095$ & $-0.236794$ & $-0.252369$
    & $0.000028$ & $0.000149$ & $0.000043$
    & $0.115192$ & $0$ \\

    {\tt EXRATE}
    & $-0.131568$ & $-0.120023$ & $-0.086207$
    & $0.000009$ & $0.000010$ & $0.000007$
    & $0.019505$ & $1$ \\

    {\tt IP}
    & $0.043208$ & $0.049988$ & $0.067289$
    & $0.000006$ & $0.000346$ & $0.000344$
    & $0.912315$ & $0$ \\

    {\tt LOANS}
    & $0.227971$ & $0.459374$ & $0.005748$
    & $0.000349$ & $0.004473$ & $0.000333$
    & $0.883088$ & $0$ \\

    {\tt M3}
    & $0.210678$ & $0.004919$ & $0.247718$
    & $0.000108$ & $0.002326$ & $0.000046$
    & $0.122854$ & $0$ \\

    {\tt CAR}
    & $0.013186$ & $-0.007378$ & $-0.009523$
    & $0.000003$ & $0.000000$ & $0.000000$
    & $0.000000$ & $1$ \\

    {\tt OIL}
    & $0.020475$ & $-0.008633$ & $0.016253$
    & $0.000018$ & $0.000082$ & $0.000017$
    & $0.045762$ & $0$ \\

    {\tt ORDER}
    & $0.259846$ & $0.250409$ & $0.281922$
    & $0.000029$ & $0.000136$ & $0.000024$
    & $0.062623$ & $0$ \\

    {\tt UNEMP}
    & $-0.181185$ & $-0.182180$ & $-0.124326$
    & $0.000347$ & $0.001278$ & $0.005337$
    & $14.150000$ & $0$ \\

    {\tt D1}
    & $0.147137$ & $0.146278$ & $0.162030$
    & $0.000009$ & $0.000000$ & $0.000000$
    & $0.000000$ & $1$ \\

    {\tt D2}
    & $0.234107$ & $0.245383$ & $0.243795$
    & $0.000015$ & $0.000000$ & $0.000000$
    & $0.000000$ & $1$ \\

    {\tt D3}
    & $0.445267$ & $0.423498$ & $0.435982$
    & $0.000023$ & $0.000000$ & $0.000000$
    & $0.000000$ & $1$ \\

    {\tt D4}
    & $0.470875$ & $0.450710$ & $0.441397$
    & $0.000005$ & $0.000000$ & $0.000000$
    & $0.000000$ & $1$ \\

    {\tt D5}
    & $0.391402$ & $0.367915$ & $0.361342$
    & $0.000006$ & $0.000000$ & $0.000000$
    & $0.000000$ & $1$ \\

    {\tt D6}
    & $0.316357$ & $0.230548$ & $0.242291$
    & $0.000012$ & $0.000000$ & $0.000000$
    & $0.000000$ & $1$ \\

    {\tt D7}
    & $0.195520$ & $0.231519$ & $0.279034$
    & $0.000016$ & $0.000000$ & $0.000000$
    & $0.000000$ & $1$ \\

    {\tt D8}
    & $0.163478$ & $0.060654$ & $0.038685$
    & $0.000008$ & $0.000000$ & $0.000000$
    & $0.000000$ & $1$ \\

    {\tt D9}
    & $0.269734$ & $0.199587$ & $0.183860$
    & $0.000006$ & $0.000000$ & $0.000000$
    & $0.000000$ & $1$ \\

    {\tt D10}
    & $0.109603$ & $0.058128$ & $0.053706$
    & $0.000005$ & $0.000000$ & $0.000000$
    & $0.000000$ & $1$ \\

    {\tt D11}
    & $0.567765$ & $0.512190$ & $0.550122$
    & $0.000037$ & $0.000000$ & $0.000000$
    & $0.000000$ & $1$ \\

\end{longtable}
}

\section{Inverse of block tridiagonal matrix}\label{supp:Meurant}
This section provides additional derivations for Algorithm~\ref{Alg: Meurant}.
Following \citet{Meurant1992}, we consider the block LU, block UL and twisted factorizations of

$$
\Omega =
\begin{bmatrix}
D_1 & -A_2^\top   \\
-A_2 & D_2 & -A_3^\top & \\
& \ddots & \ddots & \ddots &\\
& & -A_{n-1} & D_{n-1} & -A_n^\top\\
& & & -A_n & D_n
\end{bmatrix}.
$$
The block LU factorization gives $|\Omega| = \prod_{i=1}^n |\Delta_i|$, which implies $\log|\Omega| = \sum_{i=1}^n \log|\Delta_i|$, leading to Step 2 of Algorithm~\ref{Alg: Meurant}.

For a given twist index $j\in\{2,\ldots,n-1\}$,
\begin{align*}
L= \begin{bmatrix}
0 & & &\\
-A_2&0&&\\
&\ddots&\ddots&\\
&&-A_n&0
\end{bmatrix}
\quad \text{and} \quad
\tilde L=
\begin{bmatrix}
0&&&&&&\\
-A_2&0&&&&&\\
&\ddots&\ddots&&&&\\
&&-A_j&0&-A_{j+1}^\top&&\\
&&&&\ddots&\ddots&\\
&&&&&0&-A_n^\top\\
&&&&&&0
\end{bmatrix}.
\end{align*}
Matching the diagonal blocks of the block LU, block UL and twisted factorizations with those of $\Omega$ yields
\begin{equation}\label{supp:Phi}
\Phi_i=
\begin{cases}
\Delta_i, & i=1,\ldots,j-1,\\[1mm]
D_j-A_j\Delta_{j-1}^{-1}A_j^\top
-A_{j+1}^\top\Sigma_{j+1}^{-1}A_{j+1} = \Delta_j+\Sigma_j-D_j,
& i=j,\\[1mm]
\Sigma_i, & i=j+1,\ldots,n.
\end{cases}
\end{equation}
The twisted factorization is particularly useful for deriving $\Omega^{-1}$. Although $\tilde{L}$ is neither lower nor upper block triangular, $(\Phi + \tilde{L})^{-1}$ can still be derived in closed form. Specifically, its $j$th block column is zero except for a single nonzero block, $\Phi_j^{-1}$, in the $j$th block row. Following \cite{Meurant1992}, let $X = (X_1^\top, \dots, X_n^\top)^\top$ denote the $j$th block column of $\Omega^{-1}$. Premultiplying $X$ with the twisted factorization of $\Omega$ yields
\begin{align*}
\Phi^{-1}(\Phi + \tilde{L})
\begin{bmatrix} X_1\\ \vdots\\ X_j\\ \vdots\\ X_n \end{bmatrix}
= (\Phi + \tilde{L})^{-1} \begin{bmatrix} 0\\ \vdots\\ I_d\\ \vdots\\ 0 \end{bmatrix}
= \begin{bmatrix} 0\\ \vdots\\ \Phi_{j}^{-1}\\ \vdots\\ 0 \end{bmatrix}.
\end{align*}
Solving this system leads to Theorem 3.1 of \cite{Meurant1992}:
\begin{align}\label{supp:Xj}
\begin{cases}
X_j = \Phi_j^{-1}, \\
X_{j-\ell} = \Delta_{j-\ell}^{-1} A_{j-\ell+1}^\top \Delta_{j - \ell+1}^{-1} \dots \Delta_{j-1}^{-1} A_j^\top \Phi_j^{-1}, &\ell = 1, \dots, j-1,
\end{cases}
\end{align}
where $2 \leq j \leq n-1$. Combining \eqref{supp:Phi} and \eqref{supp:Xj} gives the diagonal blocks of $\Omega^{-1}$:
\begin{equation}
\Omega_{j,j}^{-1}
= \begin{cases}
\Sigma_1^{-1}, & j=1,\\[1mm]
(\Delta_j+\Sigma_j-D_j)^{-1}, & 2\leq j\leq n-1,\\[1mm]
\Delta_n^{-1}, & j=n.
\end{cases}
\label{supp:Omega_diag}
\end{equation}
Furthermore, setting $\ell=1$ in \eqref{supp:Xj} yields the superdiagonal blocks:
\begin{equation}
\Omega_{j-1,j}^{-1}
= \Delta_{j-1}^{-1}A_j^\top\Omega_{j,j}^{-1},
\qquad j=2,\ldots,n,
\label{supp:Omega_offdiag}
\end{equation}
This completes the derivation of the diagonal and superdiagonal blocks of $\Omega^{-1}$ as required in Step 4 of Algorithm \ref{Alg: Meurant}.

Theorem 3.4 of \cite{Meurant1992} provides an alternative characterization of $\Omega^{-1}$ by expressing its $(i,j)$th block as $U_i V_j^\top$ for $j \geq i$, generated via the recursions,
\begin{align}\label{UV result}
\begin{cases}
U_1 = I_d, V_1 = \Sigma_1^{-1}, \\
U_i = A_i^{-\top}\Delta_{i-1} U_{i-1} &\text{for } i=2, \dots, n,\\
V_j = \Sigma_j^{-1} A_j V_{j-1}  &\text{for }j=1,\dots, n.
\end{cases}
\end{align}
Here, uniqueness is imposed by setting $U_1 = I_d$, which implies that $V_1^\top$, as the first diagonal block of $\Omega^{-1}$, is symmetric. The formulation in \eqref{UV result} can also be derived by premultiplying the first block column of $\Omega^{-1}$ by the block UL factorization of $\Omega$, and the last block column of $\Omega^{-1}$ by the block LU factorization of $\Omega$. In particular, a backward recursion of $\{U_i\}$ is given by
\begin{align} \label{U result}
U_n = \Delta_n^{-1} V_n^{-\top}
\quad \text{and} \quad
U_i = \Delta_i^{-1} A_{i+1}^\top U_{i+1}
\quad \text{for} \quad i=n-1, \dots, 1.
\end{align}
The recursions in \eqref{UV result} is used to compute the first block column of $\Omega^{-1}$ in Step 3 of Algorithm \ref{Alg: Meurant}. We initially considered using \eqref{UV result} to compute the diagonal and superdiagonal blocks of $\Omega^{-1}$. However, for large $n$, this approach suffers from numerical instability, as $\{U_i\}$ grows rapidly in magnitude, whereas $\{V_j\}$ vanishes toward zero. By contrast, the formulations in \eqref{supp:Omega_diag} and \eqref{supp:Omega_offdiag} exhibits superior numerical stability because they avoid repeated matrix multiplications over long sequences.

\section{Definitions of other shrinkage priors}\label{supp:table_shrinkage}

\begin{table}[H]
\centering
\caption{Definitions of alternative shrinkage priors compared with the triple gamma prior in Figure \ref{fig:cases_triple_gamma}}\label{tab: shrinkage priors}
\begin{tabular}{l l c c c c}
\hline
\textbf{Prior for $\surd{\theta_j}$} & \textbf{Representation} & $\bm{a^\xi}$ & $\bm{c^\xi}$ & $\bm{\kappa_B}$ & $\bm{\phi^\xi}$ \\
\hline
\rule[-2.5ex]{0pt}{6ex}
$\N(0,\tau^2\psi_j^2),\;
 \psi_j \sim t_1$
& Horseshoe
& $\dfrac{1}{2}$
& $\dfrac{1}{2}$
& $\dfrac{1}{\tau^2}$
& $\tau^2$
\\
\rule[-2.5ex]{0pt}{6ex}
$\N(0,\tau^2\tilde{\xi}_j),\;
 \tilde{\xi}_j \sim \G(a^\xi,a^\xi)$
& double gamma
& $a^\xi$
& $\infty$
& $\dfrac{2}{\tau^2}$
& --
\\
\rule[-2.5ex]{0pt}{6ex}
$\N(0,\tau^2\tilde{\xi}_j),\;
 \tilde{\xi}_j \sim \text{Exp}(1)$
& Lasso
& $1$
& $\infty$
& $\dfrac{2}{\tau^2}$
& --
\\
\hline
\end{tabular}
\label{tab:cases_triple_gamma}
\end{table}

\section{Log joint density of TVP model}\label{supp:joint_density_terms}
This section provides explicit forms of the terms constituting
$\log p(\bfy, \Theta)$ from Section \ref{sec:priors}.
\begin{itemize}
\item
$\log p(\bfy \mid \wbfbeta, \bfgamma, \bftheta, \sigma^2) = - [T \log(2\pi) + T \log \sigma^2 + \{\bfy -  X W_{1:T} \bfgamma - X (I_T \otimes D_\bftheta^{\bfa/2}) \wbfbeta_{1:T} \}^\top \{\bfy -  X W_{1:T} \bfgamma - X (I_T \otimes D_\bftheta^{\bfa/2}) \wbfbeta_{1:T} \} / \sigma^2]/2$.
\item
$\log p(\wbfbeta \mid \bfgamma, \bftheta) = - [(T+1)d\log(2\pi) + (T+1) \sum_{j=1}^d (1-a_j)\log \theta_j + \{ \wbfbeta - (I_{T+1}\otimes D_\bftheta^{-\bfa/2}) \overline{W} \bfgamma \}^\top (\Lambda \otimes D_\bftheta^{\bfa-\bfone}) \{ \wbfbeta - (I_{T+1}\otimes D_\bftheta^{-\bfa/2}) \overline{W} \bfgamma \} ]/2$.
\item
$\log p(\sigma^2 \mid C_0) = c_0 \log C_0 - \log \Gamma(c_0) - (c_0 +1) \log \sigma^2 - C_0/\sigma^2$.
\item
$\log p(C_0) = g_0 \log G_0 - \log \Gamma(g_0) + (g_0-1) \log (C_0) - G_0 C_0$.
\item
$\log p(\theta_j \mid \xi_j, \kappa_B) = \frac{1}{2}\log \{ \kappa_B/(4\xi_j) \} - \log \Gamma(1/2) - \frac{1}{2}\log \theta_j - \kappa_B\theta_j / (4\xi_j)$, for D-CP/NCP where $j \in S_\CP$.
\item
$\log p(\surd{\theta_j} \mid \xi_j, \kappa_B) = - \{ \log (2\pi) + \log (2\xi_j / \kappa_B) + \theta_j / (2 \xi_j / \kappa_B) \} / 2$, for D-CP/NCP where $j \in S_\NCP$.
\item
$\log p(\vartheta_j \mid \xi_j, \kappa_B) = \frac{1}{2}\log \{ \kappa_B / (4\xi_j) \} - \log \Gamma(1/2) + \vartheta_j / 2 - \kappa_B\exp(\vartheta_j) / (4\xi_j)$, for CLBS where $\vartheta_j = \log \theta_j$.
\item
$\log p(\xi_j \mid \kappa_j) = c^\xi \log \kappa_j - \log \Gamma(c^\xi)- (c^\xi + 1) \log \xi_j - \kappa_j /\xi_j$.
\item
$\log p(\kappa_j) = a^\xi \log(a^\xi /c^\xi) - \log \Gamma(a^\xi) + (a^\xi - 1) \log \kappa_j - a^\xi \kappa_j / c^\xi$.
\item
$\log p(\gamma_j \mid \tau_j, \lambda_B) =  -\{ \log(2\pi) + \log (2\tau_j / \lambda_B) + \lambda_B \gamma_j^2 /(2\tau_j) \}/2$.
\item
$\log p(\tau_j \mid \lambda_j) = c^\tau \log\lambda_j - \log \Gamma(c^\tau) - (c^\tau + 1) \log \tau_j - \lambda_j/\tau_j$.
\item
$\log p(\lambda_j) = a^\tau \log(a^\tau /c^\tau) - \log \Gamma(a^\tau) + (a^\tau - 1) \log \lambda_j - a^\tau \lambda_j / c^\tau$.
\item
$\log p(\kappa_B \mid d_2) = a^\xi \log d_2 - \log \Gamma(a^\xi) + (a^\xi -1) \log \kappa_B - d_2 \kappa_B$.
\item
$\log p(d_2) = c^\xi \log(2c^\xi/a^\xi) - \log \Gamma(c^\xi) + (c^\xi - 1) \log d_2 - 2c^\xi d_2 / a^\xi$.
\item
$\log p(\lambda_B \mid e_2) = a^\tau \log e_2 - \log \Gamma(a^\tau) + (a^\tau -1) \log \lambda_B - e_2 \lambda_B$.
\item
$\log p(e_2) = c^\tau \log(2c^\tau/a^\tau) - \log \Gamma(c^\tau) + (c^\tau - 1) \log e_2 - 2c^\tau e_2 / a^\tau$.
\end{itemize}

\section{Derivation of optimal densities in MFVB}\label{supp:derive_opt_dens}
This section derives the optimal variational densities under MFVB using \eqref{CAVI_update}, based on the independence assumptions established in Section \ref{sec:opt dens}.

\begin{itemize}[leftmargin=*]
\item \leavevmode \vspace{-1.275\baselineskip}
\begin{align*}
q(\wbfbeta) & \propto \exp \E_{-\wbfbeta} \left\{\log p(\bfy \mid \wbfbeta, \bfgamma, \bftheta, \sigma^2) + \log p(\wbfbeta \mid \bfgamma, \bftheta) \right\}
\\
& \propto \exp \E_{-\wbfbeta} \bigg\{ -\frac{1}{2} \bigg[ \wbfbeta^\top (\Lambda \otimes D_\bftheta^{\bfa-\bfone}) \wbfbeta - 2 \wbfbeta^\top (\Lambda \otimes D_\bftheta^{\bfa/2-\bfone}) \overline{W} \bfgamma
\\
&\quad  + \frac{\wbfbeta_{1:T}^\top (I_T \otimes D_\bftheta^{\bfa/2}) X^\top X (I_T \otimes D_\bftheta^{\bfa/2}) \wbfbeta_{1:T} - 2 \wbfbeta_{1:T}^\top (I_T \otimes D_\bftheta^{\bfa/2}) X^\top (\bfy - X W_{1:T} \bfgamma)}{\sigma^2} \bigg] \bigg\}.
\end{align*}
Thus the optimal $q(\wbfbeta)$ is $\N(\bfmu_\wbfbeta^q, \Sigma_\wbfbeta^q)$, where
\begin{align*}
\Sigma_{\wbfbeta}^q &= \left( \Lambda \otimes \E_q (D_\bftheta^{\bfa-\bfone}) +  \E_q(\sigma^{-2}) \begin{bmatrix}
\bfzr_{d \times d} & \bfzr_{d \times Td} \\
\bfzr_{Td \times d} &  \E_q\{ (I_T \otimes D_\bftheta^{\bfa/2}) X^\top X (I_T \otimes D_\bftheta^{\bfa/2}) \}
\end{bmatrix} \right)^{-1}
\\
\bfmu_{\wbfbeta}^q &= \Sigma_{\wbfbeta}^q \left( \{\Lambda \otimes \E_q(D_\bftheta^{\bfa/2-\bfone})\} \overline{W} \bfmu_\bfgamma^q + \E_q(\sigma^{-2}) \begin{bmatrix}
\bfzr_d \\   \{I_T \otimes \E_q(D_\bftheta^{\bfa/2})\} X^\top (\bfy -  X W_{1:T} \bfmu_\bfgamma^q) \end{bmatrix} \right).
\end{align*}
In the special case where $\bfa=\bfzr$, we have
\begin{align*}
\Sigma_{\wbfbeta}^q &= \left( \Lambda \otimes \E_q (D_\bftheta^{-1}) + \E_q(\sigma^{-2}) \begin{bmatrix}
\bfzr_{d \times d} & \bfzr_{d \times Td} \\
\bfzr_{Td \times d} &  X^\top X
\end{bmatrix} \right)^{-1} \\
\bfmu_{\wbfbeta}^q &= \Sigma_{\wbfbeta}^q \left( \{\Lambda \otimes \E_q(D_\bftheta^{-1})\} \overline{W} \bfmu_\bfgamma^q + \E_q(\sigma^{-2}) \begin{bmatrix}
\bfzr_d \\  X^\top (\bfy -  X W_{1:T} \bfmu_\bfgamma^q) \end{bmatrix} \right).
\end{align*}
These expressions correspond to $V_{\wbfbeta,\bfzr}$ and $\bfm_{\wbfbeta, \bfzr W}$, with $D_\bftheta^{-1}$, $\sigma^{-2}$ and $\bfgamma$ replaced by their respective expectations under the variational distribution $q$.

\item \leavevmode \vspace{-1.275\baselineskip}
\begin{align*}
q(\bfgamma) &\propto \exp \E_{-\bfgamma} \bigg\{ \log p(\bfy \mid \wbfbeta, \bfgamma, \bftheta, \sigma^2) + \log p(\wbfbeta \mid \bfgamma, \bftheta) + \sum_{j=1}^d \log p(\gamma_j \mid \tau_j, \lambda_B) \bigg\}
\\
& \propto \exp \E_{-\bfgamma} \bigg\{ -\frac{1}{2} \bigg[ \frac{\bfgamma^\top (\sum_{t=1}^T  W_t^\top \bfx_t \bfx_t^\top W_t) \bfgamma - 2 \bfgamma^\top \sum_{t=1}^T  (y_t - \bfx_t^\top D_\bftheta^{\bfa/2} \wbfbeta_t) W_t^\top\bfx_t  }{\sigma^2}
\\
& \quad + \bfgamma^\top \overline{W}^\top (\Lambda \otimes D_\bftheta^{-1})\overline{W} \bfgamma - 2 \bfgamma^\top \overline{W}^\top (\Lambda \otimes D_\bftheta^{\bfa/2-\bfone}) \wbfbeta
+ \lambda_B \bfgamma^\top \diag(\bftau^{-1}) \bfgamma/2 \bigg] \bigg\}.
\end{align*}
Thus the optimal $q(\bfgamma)$ is $\N(\bfmu_\bfgamma^q, \Sigma_\bfgamma^q)$, where
\begin{align*}
\Sigma_\bfgamma^q &= \bigg[ \E_q( \sigma^{-2} )\sum_{t=1}^T  W_t^\top \bfx_t \bfx_t^\top W_t + \overline{W}^\top \{\Lambda \otimes \E_q(D_\bftheta^{-1})\} \overline{W} + \frac{\E_q(\lambda_B)}{2} \diag\{ \E_q ( \bftau^{-1} )\} \bigg] ^{-1}
\\
\bfmu_\bfgamma^q &= \Sigma_\bfgamma^q \bigg[ \E_q( \sigma^{-2} ) \sum_{t=1}^T  \{y_t - \bfx_t^\top \E_q (D_\bftheta^{\bfa/2}) \mu_{\wbfbeta_t}^q\} W_t^\top\bfx_t + \overline{W}^\top \{ \Lambda \otimes \E_q(D_\bftheta^{\bfa/2-\bfone}) \} \mu_{\wbfbeta}^q \bigg].
\end{align*}

\item  \leavevmode \vspace{-1.275\baselineskip}
\begin{align*}
q(\sigma^2) &\propto \exp \E_{-\sigma^2} \left\{
\log p(\bfy \mid \wbfbeta, \bfgamma, \bftheta, \sigma^2) + \log p(\sigma^2 \mid C_0) \right\} \\
&\propto \exp \E_{-\sigma^2} \bigg\{
- \left(c_0 + \tfrac{T}{2} +1 \right) \log \sigma^2 - \frac{C_0 + \sum_{t=1}^T  (y_t - \bfx_t^\top W_t \bfgamma - \bfx_t^\top D_\bftheta^{\bfa/2} \wbfbeta_t)^2/2 }{\sigma^2}
\bigg\}.
\end{align*}
Thus the optimal $q(\sigma^2)$ is $\IG(c^q, C^q)$, where $c^q = c_0 + T/2$,
\begin{align*}
C^q &= \E_q(C_0) + \frac{1}{2}\sum_{t=1}^T  \E_q (y_t - \bfx_t^\top W_t \bfgamma - \bfx_t^\top D_\bftheta^{\bfa/2} \wbfbeta_t)^2
\\
&= \frac{g^q}{G^q} + \frac{1}{2}\sum_{t=1}^T  \Big[ (y_t - \bfx_t^\top W_t \bfmu_\bfgamma^q)^2 - 2(y_t - \bfx_t^\top W_t \bfmu_\bfgamma^q) \bfx_t^\top \E_q(D_\bftheta^{\bfa/2}) \bfmu_{\wbfbeta_t}^q + \bfx_t^\top W_t \Sigma_\bfgamma^q W_t^\top \bfx_t
\\
&\quad + \tr\{(\bfmu_{\wbfbeta_t}^q\bfmu_{\wbfbeta_t}^{q\top} + \Sigma_{\wbfbeta_t}^q) D_{\bfx_t} \E_q(\bftheta^{\bfa/2}\bftheta^{\bfa/2 \top})D_{\bfx_t} \} \Big].
\end{align*}

\item \leavevmode \vspace{-1.275\baselineskip}
\begin{align*}
q(\bfxi, \bftau) \propto \exp \sum_{j=1}^d \bigg\{& - \left(c^\xi + \frac{1}{2} + 1\right) \log \xi_j - \frac{\E_q(\kappa_j + \kappa_B \theta_j/4)}{\xi_j}\\
& - \left(c^\tau + \frac{1}{2} + 1 \right) \log \tau_j - \frac{\E_q(\lambda_j + \lambda_B \gamma_j^2/4)}{\tau_j} \bigg\}.
\end{align*}
Thus the optimal $q( \bfxi, \bftau) = \prod_{j=1}^{d} q(\xi_j)q(\tau_j)$, where $q(\xi_j)$ is $\IG(a_\xi^q, b_{\xi_j}^q)$ and $q(\tau_j)$ is $\IG(a_\tau^q, b_{\tau_j}^q)$. The variational parameters are given by
\begin{equation*}
\begin{aligned}
a_\xi^q =&c^\xi + 1/2, \\
a_\tau^q =&c^\tau + 1/2,
\end{aligned}
\qquad
\begin{aligned}
b_{\xi_j}^q &= \E_q(\kappa_j) + \E_q(\kappa_B) \E_q(\theta_j) / 4
= a_\kappa^q / b_{\kappa_j}^q + a_{\kappa_B}^q\E_q(\theta_j) / (4b_{\kappa_B}^q), \\
b_{\tau_j}^q &= \E_q(\lambda_j) + \E_q(\lambda_B) \E_q(\gamma_j^2) / 4
= a_\lambda^q / b_{\lambda_j}^q + a_{\lambda_B}^q(\mu_{\gamma_j}^{q2} + v_{\gamma_j}^q) / (4b_{\lambda_B}^q).
\end{aligned}
\end{equation*}

\item \leavevmode \vspace{-1.275\baselineskip}
\begin{equation*}
\begin{aligned}
q(\bfkappa, \bflambda) & \propto \exp \sum_{j=1}^d \Big\{(a^\xi + c^\xi - 1)\log \kappa_j - \left(\E_q\left(\xi_j^{-1}\right) + a^\xi / c^\xi\right) \kappa_j \\
& \qquad \qquad + (a^\tau + c^\tau - 1) \log \lambda_j - \left(\E_q\left(\tau_j^{-1}\right) + a^\tau/c^\tau\right) \lambda_j \Big\}.
\end{aligned}
\end{equation*}
Thus the optimal $q(\bfkappa, \bflambda) = \prod_{j=1}^d q(\kappa_j)q(\lambda_j)$, where $q(\kappa_j)$ is $\G(a_\kappa^q, b_{\kappa_j}^q)$ and $q(\lambda_j)$ is $\G(a_\lambda^q, b_{\lambda_j}^q)$. The variational parameters are given by
\begin{equation*}
\begin{aligned}
a_\kappa^q  & = a^\xi + c^\xi,  \\
a_\lambda^q & = a^\tau + c^\tau,
\end{aligned}
\qquad
\begin{aligned}
b_{\kappa_j}^q & = \E_q(\xi_j^{-1}) + a^\xi/c^\xi
= a_\xi^q / b_{\xi_j}^q + a^\xi / c^\xi,      \\
b_{\lambda_j}^q & = \E_q(\tau_j^{-1})+ a^\tau/c^\tau =
a_\tau^q / b_{\tau_j}^q + a^\tau/c^\tau.
\end{aligned}
\end{equation*}

\item \leavevmode \vspace{-1.275\baselineskip}
\begin{align*}
q(C_0, \kappa_B, \lambda_B) &\propto \exp \bigg[
(c_0 + g_0 - 1) \log (C_0) - \{G_0 + \E_q(\sigma^{-2})\} C_0
\\
&\quad  + (a^\xi + d/2 - 1) \log \kappa_B - \bigg\{ \sum\nolimits_{j=1}^d \E_q(\theta_j) \E_q(\xi_j^{-1}) / 4 + \E_q(d_2) \bigg\}\kappa_B
\\
&\quad  + (a^\tau + d/2 - 1) \log \lambda_B - \bigg\{\sum\nolimits_{j=1}^d \E_q(\gamma_j^2) \E_q(\tau_j^{-1}) / 4 + \E_q(e_2) \bigg\} \lambda_B\bigg].
\end{align*}
Thus, the optimal $q(C_0, \kappa_B, \lambda_B) = q(C_0)q(\kappa_B)q(\lambda_B)$, where $q(C_0)$ is $G(g^q, G^q)$, $q(\kappa_B)$ is $\G(a_{\kappa_B}^q, b_{\kappa_B}^q)$ and $q(\lambda_B)$ is $\G(a_{\lambda_B}^q, b_{\lambda_B}^q)$ where $g^q =  g_0 + c_0$, $a_{\kappa_B}^q = a^\xi + d/2$, $a_{\lambda_B}^q = a^\tau + d/2$,
\begin{equation*}
\begin{aligned}
G^q &= G_0 + \E_q(\sigma^{-2}) = G_0 + c^q / C^q, \\
b_{\kappa_B}^q &= \sum\nolimits_{j=1}^d \E_q(\theta_j) \E_q(\xi_j^{-1}) / 4 + \E_q(d_2) = (a_\xi^q / 4) \sum\nolimits_{j=1}^d \E_q(\theta_j) / b_{\xi_j}^q + a_{d_2}^q / b_{d_2}^q, \\
b_{\lambda_B}^q &= \sum\nolimits_{j=1}^d \E_q(\gamma_j^2) \E_q(\tau_j^{-1}) / 4 + \E_q(e_2)
= (a_\tau^q / 4) \sum\nolimits_{j=1}^d (\mu_{\gamma_j}^{q2} + v_{\gamma_j}^q) / b_{\tau_j}^q + a_{e_2}^q / b_{e_2}^q.
\end{aligned}
\end{equation*}

\item \leavevmode \vspace{-1.275\baselineskip}
\begin{align*}
q(d_2, e_2) & \propto \exp \left\{(a^\xi + c^\xi - 1) \log d_2 - \left(\E_q(\kappa_B) + 2 c^\xi/a^\xi \right)d_2  \right.
\\
& \left. \qquad \quad + (a^\tau + c^\tau - 1) \log e_2 - \left(\E_q(\lambda_B) + 2 c^\tau/a^\tau \right)e_2 \right\}
\end{align*}
The optimal $q(d_2, e_2) = q(d_2)q(e_2)$, where $q(d_2) \sim \G( a_{d_2}^q, b_{d_2}^q)$ and $q(e_2) \sim \G( a_{e_2}^q, b_{e_2}^q)$, with
\begin{equation*}
\begin{aligned}
 a_{d_2}^q & = a^\xi + c^\xi,   \\
 a_{e_2}^q & = a^\tau + c^\tau,
\end{aligned}
\qquad
\begin{aligned}
b_{d_2}^q & = \E_q(\kappa_B) + 2 c^\xi/a^\xi
= a_{\kappa_B}^q / b_{\kappa_B}^q + 2 c^\xi/a^\xi\\
b_{e_2}^q & = \E_q(\lambda_B) + 2 c^\tau/a^\tau
= a_{\lambda_B}^q / b_{\lambda_B}^q + 2 c^\tau/a^\tau.
\end{aligned}
\end{equation*}

\item \leavevmode \vspace{-1.275\baselineskip}
\begin{align*}
q(\bftheta) &\propto \exp \E_{-\bftheta} \bigg\{ \log p(\bfy \mid \wbfbeta, \bfgamma, \bftheta, \sigma^2) + \log p(\wbfbeta \mid \bfgamma, \bftheta) + \sum_{j=1}^d \log p(\theta_j \mid \xi_j, \kappa_B) \bigg\}
\\
& \propto \exp \E_{-\bftheta} \bigg\{\sigma^{-2} \sum_{t=1}^T  (\wbfbeta_t^\top D_\bftheta^{\bfa/2} \bfx_t) (y_t - \bfx_t^\top W_t \bfgamma) -\frac{1}{2\sigma^2} \sum_{t=1}^T  (\wbfbeta_t^\top D_\bftheta^{\bfa/2} \bfx_t)^2
\\
& \quad - \bfgamma^\top \overline{W}^\top (\Lambda \otimes D_\bftheta^{-1}) \overline{W} \bfgamma / 2 - \wbfbeta^\top (\Lambda \otimes D_\bftheta^{\bfa-\bfone} )\wbfbeta / 2 + \wbfbeta^\top (\Lambda \otimes D_\bftheta^{\bfa/2-\bfone}) \overline{W} \bfgamma
\\
& \quad -\frac{1}{2} \sum_{j=1}^d \{(T+1)(1-a_j) + 1 \} \log \theta_j  - \sum_{j=1}^d \frac{\kappa_B \theta_j}{4\xi_j}  \bigg\}.
\end{align*}
Under the D-CP/NCP scheme, we partition the index set $\{1, \dots, d\}$ into disjoint subsets, $S_\NCP$ and $S_\CP$, assigning each dimension to either CP or NCP. This formulation unifies fully CP and fully NCP schemes as special cases. Under this setup,
\begin{align*}
q(\bftheta) & \propto \exp \E_{-\bftheta} \bigg\{ \sigma^{-2}  \sum_{t=1}^T  \surd \bftheta_\NCP^\top D_{\bfx_{t, \NCP}}  \widetilde{\bfbeta}_{t,\NCP}
\big(y_t -\bfx_{t,\NCP}^{\top}\bfgamma_{\NCP}
-\bfx_{t,\CP}^{\top}\widetilde{\bfbeta}_{t,\CP} \big)
\\
& \quad -\frac{1}{2\sigma^2} \sum_{t=1}^T  \surd \bftheta_\NCP^\top D_{\bfx_{t, \NCP}} \widetilde{\bfbeta}_{t, \NCP} \widetilde{\bfbeta}_{t, \NCP}^\top D_{\bfx_{t, \NCP}} \surd{\bftheta_\NCP} - \frac{1}{2} \negmedspace \sum_{j \in S_\NCP} \negmedspace\negmedspace \log \theta_j
\\
& \quad - \negmedspace \sum_{j \in S_\NCP} \negmedspace\negmedspace \frac{\kappa_B \theta_j}{4\xi_j} - \frac{1}{2}
\sum_{t=1}^T (\widetilde{\bfbeta}_{t,\CP}-\widetilde{\bfbeta}_{t-1,\CP})^\top
D_{\bftheta_{\CP}}^{-1} (\widetilde{\bfbeta}_{t,\CP} - \widetilde{\bfbeta}_{t-1,\CP}) -\frac{T+2}{2} \negmedspace \sum_{j\in S_\CP} \negmedspace\negmedspace \log \theta_j
\\
& \quad - \frac{1}{2}(\widetilde{\bfbeta}_{0,\CP} - \bfgamma_{\CP})^\top D_{\bftheta_{\CP}}^{-1}(\widetilde{\bfbeta}_{0,\CP} - \bfgamma_{\CP}) - \sum_{j\in S_\CP} \frac{\kappa_B \theta_j}{4\xi_j} \bigg\}.
\end{align*}
Hence the variational posterior factorizes as $q(\bftheta) = q(\bftheta_\CP) q(\bftheta_\NCP)$ where $q(\bftheta_\CP) = \prod_{j \in S_\CP} q(\theta_j)$ with $q(\theta_j) = \GIG(p_\theta^q, a_{\theta_j}^q, b_{\theta_j}^q)$. The variational parameters are $p_\theta^q = -T/2$, $a_{\theta_j}^q = \E_q(\kappa_B) \E_q(\xi_j^{-1})/2 = (a_{\kappa_B}^q a_\xi^q) / (2b_{\kappa_B}^q b_{\xi_j}^q)$ and
\begin{align*}
b_{\theta_j}^q &= \sum_{t=1}^T   \E_q (\widetilde{\beta}_{tj} - \widetilde{\beta}_{t-1, j} )^2 +  \E_q(\widetilde{\beta}_{0j} - \gamma_j)^2 \\
&= \sum_{t=1}^{T}
\{ (\mu_{\widetilde{\beta}_{tj}}^q - \mu_{\widetilde{\beta}_{t-1,j}}^q)^2 + v_{\widetilde{\beta}_{tj}}^q + v_{\widetilde{\beta}_{t-1, j}}^q \negmedspace\negmedspace
- 2 \Cov_q(\widetilde{\beta}_{tj}, \widetilde{\beta}_{t-1,j}) \}
+ (\mu_{\widetilde{\beta}_{0j}}^q - \mu_{\gamma_j}^q )^2 + v_{\gamma_j}^q + v_{\widetilde{\beta}_{0j}}^q.
\end{align*}
On the other hand, for $q(\bftheta_\NCP)$, using a change-in-variable,
\begin{multline*}
q(\surd{\bftheta_\NCP}) \propto \exp \E_{-\surd{\bftheta_\NCP}} \bigg[ - \frac{\kappa_B\surd \bftheta_\NCP^\top \diag(\bfxi^{-1}) \surd{\bftheta_\NCP}}{4}
-\frac{1}{2\sigma^2} \sum_{t=1}^T  \surd\bftheta_\NCP^\top D_{\bfx_{t, \NCP}} \wbfbeta_{t, \NCP}
\\
\times  \Big\{ \wbfbeta_{t, \NCP}^\top D_{\bfx_{t, \NCP}} \surd{\bftheta_\NCP}
-2 (y_t - \bfx_{t, \NCP}^\top \bfgamma_\NCP -  \bfx_{t, \CP}^\top \widetilde{\bfbeta}_{t, \CP}) \Big\} \bigg].
\end{multline*}
Thus the optimal $q(\surd{\bftheta_\NCP})$ is $\N( \bfmu_{\surd{\bftheta_\NCP}}^q, \Sigma_{\surd{\bftheta_\NCP}}^q)$, where
\begin{align*}
\Sigma_{\surd \bftheta_\NCP}^q &= \bigg( \frac{c^q}{C^q} \sum_{t=1}^{T} D_{\bfx_{t, \NCP}} (\Sigma_{\wbfbeta_{t, \NCP}}^q + \bfmu_{\wbfbeta_{t, \NCP}}^q \bfmu_{\wbfbeta_{t, \NCP}}^{q\top}) D_{\bfx_{t, \NCP}}
+ \frac{\E_q(\kappa_B)}{2}\diag \{ \E_q (\bfxi^{-1} ) \} \bigg) ^{-1}, \\
\bfmu_{\surd{\bftheta_\NCP}}^q &= \Sigma_{\surd{\bftheta_\NCP}}^q  \frac{c^q}{C^q} \sum_{t=1}^{T} D_{\bfx_{t, \NCP}} \{ \bfmu_{\wbfbeta_{t, \NCP}}^q (y_t - \bfx_{t, \NCP}^\top\bfmu_{\bfgamma_\NCP}^q -  \bfx_{t, \CP}^\top \bfmu_{\wbfbeta_{t, \CP}}^q) \\
& -\Cov(\wbfbeta_{t, \NCP}, \widetilde{\bfbeta}_{t, \CP}) \bfx_{t, \CP} \}.
\end{align*}
\end{itemize}

\section{Derivation of evidence lower bound}\label{supp:derive_elbo}
This section derives the ELBO used in Algorithm \ref{alg:TVP_PNCP}. First, we present the terms in $\E_q\{\log p(\bfy, \Theta)\}$.
\begin{itemize}
\item
$\E_q \{\log p(\bfy \mid \wbfbeta, \bfgamma, \bftheta, \sigma^2)\} = - \Big( T \log(2\pi) + T\{ \log C^q-\psi(c^q)\} + (c^q / C^q) \sum_{t=1}^T [ (y_t - \bfx_t^\top W_t \bfmu_\bfgamma^q)^2 - 2(y_t - \bfx_t^\top W_t \bfmu_\bfgamma^q) \bfx_t^\top \E_q(D_\bftheta^{\bfa/2}) \bfmu_{\wbfbeta_t}^q + \bfx_t^\top W_t \Sigma_\bfgamma^q W_t^\top \bfx_t + \tr\{(\bfmu_{\wbfbeta_t}^q\bfmu_{\wbfbeta_t}^{q\top} + \Sigma_{\wbfbeta_t}^q) D_{\bfx_t} \E_q(\bftheta^{\bfa/2}\bftheta^{\bfa/2 \top}) D_{\bfx_t} \}] \Big) / 2$.
\item
$\E_q\{\log p(\wbfbeta \mid \bfgamma, \bftheta)\} = - \Big( (T+1)d\log(2\pi) + (T+1) \sum_{j=1}^d (1-a_j)\E_q(\log \theta_j) + \tr[ (\bfmu_\wbfbeta^q \bfmu_\wbfbeta^{q\top} + \Sigma_\wbfbeta^q) \{\Lambda \otimes \E_q(D_\bftheta^{\bfa-\bfone})\}] - 2 \bfmu_\wbfbeta^{q\top} \{\Lambda \otimes \E_q(D_\bftheta^{\bfa/2-\bfone})\} \overline{W} \bfmu_\bfgamma^q + \tr[ (\bfmu_\bfgamma^q \bfmu_\bfgamma^{q\top} + \Sigma_\bfgamma^q) \overline{W}^\top \{\Lambda \otimes \E_q(D_\bftheta^{-1})\}\overline{W} ] \Big) / 2$.
\item
$\E_q\{\log p(\theta_j \mid \xi_j, \kappa_B)\} = - \log 2 - \frac{1}{2}\log \pi + \{ \psi(a_{\kappa_B}^q)  - \log b_{\kappa_B}^q \}/2 - \{ \log b_{\xi_j}^q - \psi(a_\xi^q) \}/2  -a_{\kappa_B}^q a_\xi^q \E_q(\theta_j) / (4b_{\kappa_B}^q b_{\xi_j}^q) - \E_q(\log \theta_j)/2$, for D-CP/NCP where $j \in S_{\CP}$.
\item
$\E_q\{\log p(\surd{\theta_j} \mid \xi_j, \kappa_B)\} = - \log 2 - \log \sqrt{\pi}  + \{ \psi(a_{\kappa_B}^q)  - \log b_{\kappa_B}^q \}/2 - \{ \log b_{\xi_j}^q - \psi(a_\xi^q) \}/2 - a_{\kappa_B}^q a_\xi^q  \E_q(\theta_j) / (4b_{\kappa_B}^q b_{\xi_j}^q)$, for D-CP/NCP where $j \in S_{\NCP}$.
\item
$\E_q\{\log p(\vartheta_j \mid \xi_j, \kappa_B)\} = - \log 2 - \log \sqrt{\pi}  + \{ \psi(a_{\kappa_B}^q)  - \log b_{\kappa_B}^q \} / 2 - \{ \log b_{\xi_j}^q - \psi(a_\xi^q) \} /2  - a_{\kappa_B}^q a_\xi^q\E_q\{\exp(\vartheta_j)\} / (4b_{\kappa_B}^q b_{\xi_j}^q) + \E_q(\vartheta_j) / 2 $ for CLBS.
\item
$\E_q\{\log p(\gamma_j \mid \tau_j, \lambda_B)\} = - \log \sqrt{\pi} - \log 2 - \{ \log b_{\tau_j}^q - \psi(a_\tau^q) \}/2 + \{ \psi(a_{\lambda_B}^q)  - \log b_{\lambda_B}^q \}/2  - a_{\lambda_B}^q a_{\tau}^q  (\mu_{\gamma_j}^{q2} + v_{\gamma_j}^q) / (4 b_{\lambda_B}^q b_{\tau_j}^q)$.
\item
$\E_q\{\log p(\sigma^2 \mid C_0)\} = c_0 \{ \psi(g^q) - \log G^q \} - \log \Gamma(c_0) - (c_0 +1) \{ \log C^q - \psi(c^q) \} - g^qc^q / (G^qC^q)$.
\item
$\E_q\{\log p(C_0)\} = g_0 \log G_0 - \log \Gamma(g_0) + (g_0-1) \{ \psi(g^q) - \log G^q \} - G_0 g^q / G^q$.
\item
$\E_q\{\log p(\xi_j \mid \kappa_j)\} = c^\xi \{\psi(a_\kappa^q) - \log b_{\kappa_j}^q\} - \log \Gamma(c^\xi)- (c^\xi + 1) \{ \log b_{\xi_j}^q - \psi(a_\xi^q) \} - a_\kappa^q a_\xi^q / (b_{\kappa_j}^q b_{\xi_j}^q)$.
\item
$\E_q\{\log p(\tau_j  \mid \lambda_j)\} = c^\tau \{ \psi(a_\lambda^q) - \log b_{\lambda_j}^q \} - \log \Gamma(c^\tau) - (c^\tau + 1) \{ \log b_{\tau_j}^q - \psi(a_\tau^q) \} - a_\lambda^q a_\tau^q / (b_{\lambda_j}^q b_{\tau_j}^q)$.
\item
$\E_q\{\log p(\kappa_j)\} = a^\xi \log(a^\xi /c^\xi) - \log \Gamma(a^\xi) + (a^\xi - 1) \{ \psi(a_\kappa^q) - \log b_{\kappa_j}^q \} - a^\xi a_\kappa^q / (c^\xi b_{\kappa_j}^q)$.
\item
$\E_q\{\log p(\lambda_j)\} = a^\tau \log(a^\tau /c^\tau) - \log \Gamma(a^\tau) + (a^\tau - 1) \{ \psi(a_\lambda^q) - \log b_{\lambda_j}^q \} - a^\tau a_\lambda^q / (c^\tau b_{\lambda_j}^q)$.
\item
$\E_q\{\log p(\kappa_B \mid d_2)\} = a^\xi \{\psi(a_{d_2}^q) - \log b_{d_2}^q\} - \log \Gamma(a^\xi) + (a^\xi -1) \{ \psi(a_{\kappa_B}^q)  - \log b_{\kappa_B}^q \} - a_{d_2}^q a_{\kappa_B}^q / (b_{d_2}^q b_{\kappa_B}^q)$.
\item
$\E_q\{\log p(\lambda_B \mid e_2)\} = a^\tau \{\psi(a_{e_2}^q) - \log b_{e_2}^q\} - \log \Gamma(a^\tau) + (a^\tau -1) \{ \psi(a_{\lambda_B}^q)  - \log b_{\lambda_B}^q \} - a_{e_2}^q a_{\lambda_B}^q /(b_{e_2}^q b_{\lambda_B}^q)$.
\item
$\E_q\{\log p(d_2)\} = c^\xi \log(2c^\xi/a^\xi) - \log \Gamma(c^\xi) + (c^\xi - 1) \{\psi(a_{d_2}^q) - \log b_{d_2}^q\} - 2c^\xi a_{d_2}^q/(a^\xi b_{d_2}^q)$.
\item
$\E_q\{\log p(e_2)\} = c^\tau \log(2c^\tau/a^\tau) - \log \Gamma(c^\tau) + (c^\tau - 1) \{\psi(a_{e_2}^q) - \log b_{e_2}^q\} -  2c^\tau a_{e_2}^q / (a^\tau b_{e_2}^q) $.
\end{itemize}

Next, we present the terms in $\E_q\{ \log q(\Theta)\}$.
\begin{itemize}
\item
$\E_q\{ \log q(\wbfbeta)\} = -\{ (T+1)d\log(2\pi) + \log|\Sigma_\wbfbeta^q|+ (T+1)d \} / 2$.
\item
$\E_q\{ \log q(\bfgamma)\} = -\{d\log(2\pi) + \log |\Sigma_\bfgamma^q| + d\} /2$.
\item
$\E_q\{\log q(\bfvartheta)\} = - \{ d\log(2\pi) + \log|\Sigma_\bfvartheta^q| + d \} /2$, for CLBS.
\item $\E_q\{ \log q(\surd{\bftheta_\NCP}) \} = - \{ |S_\NCP| \log(2\pi) + \log |\Sigma_{\surd{\bftheta_\NCP}}^q| + |S_\NCP| \} /2$, for D-CP/NCP.
\item $\E_q\{ \log q(\theta_j) \} = (p_\theta^q / 2) \log (a_{\theta_j}^q / b_{\theta_j}^q) - \log2 - \log K_{p_\theta^q} (\sqrt{a_{\theta_j}^q b_{\theta_j}^q}) + (p_\theta^q -1) \E_q(\log \theta_j) - a_{\theta_j}^q \E_q(\theta_j) / 2 - b_{\theta_j}^q \E_q(\theta_j^{-1}) / 2$, for D-CP/NCP where $j \in S_{\CP}$.
\item $\E_q\{ \log q(\sigma^2)\} = c^q \log C^q - \log \Gamma(c^q) - (c^q + 1) \{ \log C^q - \psi(c^q) \} - c^q$.
\item
$\E_q\{ \log q(C_0)\} = g^q \log G^q - \log \Gamma(g^q) + (g^q-1) \{ \psi(g^q) - \log G^q \} - g^q$.
\item
$\E_q\{ \log q(\xi_j)\} = a_{\xi}^q \log b_{\xi_j}^q - \log \Gamma(a_{\xi}^q) - (a_{\xi}^q + 1) \{ \log b_{\xi_j}^q - \psi(a_{\xi}^q) \} - a_{\xi}^q$.
\item
$\E_q\{ \log q(\tau_j)\} = a_{\tau}^q \log b_{\tau_j}^q - \log \Gamma(a_{\tau}^q) - (a_{\tau}^q + 1) \{ \log b_{\tau_j}^q - \psi(a_{\tau}^q) \} - a_{\tau}^q$.
\item
$\E_q\{ \log q(\kappa_j)\} = a_{\kappa}^q \log b_{\kappa_j}^q - \log \Gamma(a_{\kappa}^q) + (a_{\kappa}^q-1) \{ \psi(a_{\kappa}^q) - \log b_{\kappa_j}^q \} - a_{\kappa}^q$.
\item
$\E_q\{ \log q(\lambda_j)\} = a_{\lambda}^q \log b_{\lambda_j}^q - \log \Gamma(a_{\lambda}^q) + (a_{\lambda}^q-1) \{ \psi(a_{\lambda}^q) - \log b_{\lambda_j}^q \} - a_{\lambda}^q$.
\item
$\E_q\{ \log q(\kappa_B)\} = a_{\kappa_B}^q \log b_{\kappa_B}^q - \log \Gamma(a_{\kappa_B}^q) + (a_{\kappa_B}^q - 1) \{ \psi(a_{\kappa_B}^q)  - \log b_{\kappa_B}^q \} - a_{\kappa_B}^q$.
\item
$\E_q\{ \log q(\lambda_B)\} = a_{\lambda_B}^q \log b_{\lambda_B}^q - \log \Gamma(a_{\lambda_B}^q) + (a_{\lambda_B}^q - 1) \{ \psi(a_{\lambda_B}^q)  - \log b_{\lambda_B}^q \} - a_{\lambda_B}^q$.
\item
$\E_q\{ \log q(d_2)\} = a_{d_2}^q \log b_{d_2}^q - \log \Gamma(a_{d_2}^q) + (a_{d_2}^q-1) \{ \psi(a_{d_2}^q) - \log b_{d_2}^q \} - a_{d_2}^q$.
\item
$\E_q\{ \log q(e_2)\} = a_{e_2}^q \log b_{e_2}^q - \log \Gamma(a_{e_2}^q) + (a_{e_2}^q-1) \{ \psi(a_{e_2}^q) - \log b_{e_2}^q \} - a_{e_2}^q$.
\end{itemize}

Combining the above terms leads to the following expression for the ELBO,
\begin{align*}
& \mL  = \E_q\{ \log p(\bfy, \Theta) - \log q(\Theta) \} \\
&= - \dfrac{c^q}{2C^q}\sum_{t=1}^T [ (y_t - \bfx_t^\top W_t \bfmu_\bfgamma^q)^2 - 2(y_t - \bfx_t^\top W_t \bfmu_\bfgamma^q) \bfx_t^\top \E_q(D_\bftheta^{\bfa/2}) \bfmu_{\wbfbeta_t}^q + \bfx_t^\top W_t \Sigma_\bfgamma^q W_t^\top \bfx_t
\\
&+ \tr\{(\bfmu_{\wbfbeta_t}^q\bfmu_{\wbfbeta_t}^{q\top} + \Sigma_{\wbfbeta_t}^q) D_{\bfx_t} \E_q(\bftheta^{\bfa/2}\bftheta^{\bfa/2 \top})D_{\bfx_t} \}] - \frac{1}{2} \tr[ (\bfmu_\wbfbeta^q \bfmu_\wbfbeta^{q\top} + \Sigma_\wbfbeta^q) \{\Lambda \otimes \E_q(D_\bftheta^{\bfa-\bfone})\}]
\\
&+ \bfmu_\wbfbeta^{q\top} \{\Lambda \otimes \E_q(D_\bftheta^{\bfa/2-\bfone})\} \overline{W} \bfmu_\bfgamma^q - \frac{1}{2}\tr[ (\bfmu_\bfgamma^q \bfmu_\bfgamma^{q\top} + \Sigma_\bfgamma^q) \overline{W}^\top \{\Lambda \otimes \E_q(D_\bftheta^{-1})\}\overline{W} ]
\\
&+ \sum_{j=1}^d \left[ -\dfrac{a_{\kappa_B}^q a_\xi^q }{4b_{\kappa_B}^q b_{\xi_j}^q}\E_q(\theta_j) - \dfrac{a_{\lambda_B}^q a_{\tau}^q}{4 b_{\lambda_B}^q b_{\tau_j}^q} (\mu_{\gamma_j}^{q2} + v_{\gamma_j}^q)
- \dfrac{a_\kappa^q}{b_{\kappa_j}^q} \left( \dfrac{a_\xi^q}{b_{\xi_j}^q} + \dfrac{a^\xi}{c^\xi} \right) - \dfrac{a_\lambda^q}{b_{\lambda_j}^q} \left( \dfrac{a^\tau}{c^\tau} + \dfrac{a_\tau^q}{b_{\tau_j}^q}\right)\right.
\\
&  - \dfrac{a_{d_2}^q}{b_{d_2}^q} \left( \frac{a_{\kappa_B}^q}{b_{\kappa_B}^q}
+ \dfrac{2c^\xi}{a^\xi} \right) - \dfrac{a_{e_2}^q }{b_{e_2}^q } \left( \frac{a_{\lambda_B}^q}{b_{\lambda_B}^q} + \dfrac{2c^\tau}{a^\tau} \right) - \dfrac{g^q}{G^q} \left( G_0 + \dfrac{c^q}{C^q} \right) + \frac{1}{2}\log|\Sigma_\wbfbeta^q|
+ \frac{1}{2}\log |\Sigma_\bfgamma^q|
\\
&
- c^q \log C^q + \log \Gamma(c^q)  + c^q - g^q \log G^q + \log \Gamma(g^q) + g^q
- a_{\kappa_B}^q \log b_{\kappa_B}^q + \log \Gamma(a_{\kappa_B}^q)  + a_{\kappa_B}^q
\\
&
- a_{\lambda_B}^q \log b_{\lambda_B}^q + \log \Gamma(a_{\lambda_B}^q) + a_{\lambda_B}^q
- a_{d_2}^q \log b_{d_2}^q + \log \Gamma(a_{d_2}^q) + a_{d_2}^q
- a_{e_2}^q \log b_{e_2}^q + \log \Gamma(a_{e_2}^q)
\\
&
+ a_{e_2}^q - \sum_{j=1}^d [a_{\xi}^q \log b_{\xi_j}^q + a_{\tau}^q \log b_{\tau_j}^q + a_{\kappa}^q \log b_{\kappa_j}^q + a_{\lambda}^q \log b_{\lambda_j}^q] + d \{  \log \Gamma(a_{\xi}^q) + a_{\xi}^q + \log \Gamma(a_{\tau}^q)
\\
& + a_{\tau}^q  + \log \Gamma(a_{\kappa}^q) + a_{\kappa}^q + \log \Gamma(a_{\lambda}^q) + a_{\lambda}^q\}
+ C + \widetilde{\mathcal{L}}_\bftheta ,
\end{align*}
where $C$ denotes a constant independent of the variational parameters and $\mathcal{L}_{\bftheta}$ depends on the chosen scheme (D-CP/NCP or CLBS). In evaluating $\mL$, we have used the following fixed updates,
\begin{gather*}
a_\xi^q = c^\xi + 1/2, \quad
a_{\kappa_B}^q = a^\xi + d/2, \quad
a_\kappa^q = a^\xi + c^\xi, \quad
a_{d_2}^q = a^\xi + c^\xi, \quad
c^q = c_0 + T/2,  \\
a_\tau^q = c^\tau + 1/2, \quad
a_{\lambda_B}^q = a^\tau + d/2, \quad
a_\lambda^q = a^\tau + c^\tau, \quad
a_{e_2}^q = a^\tau + c^\tau, \quad
g^q = g_0 + c_0, \quad
\end{gather*}
and $p_\theta^q=-T/2$ to obtain further simplification. For the D-CP/NCP scheme,
\begin{align*}
\widetilde{\mathcal{L}}_\bftheta &= \mathbbm{1}_{|S_\NCP| \geq 1} \log |\Sigma_{\surd{\bftheta_\NCP}}^q| / 2 + |S_\NCP| \{\log(2\pi) + 1\} / 2 + |S_\CP| \log 2\\
&\quad + \sum_{j\in S_{\CP}} \left( \log K_{p_\theta^q} (\sqrt{a_{\theta_j}^q b_{\theta_j}^q}) + \frac{a_{\theta_j}^q}{2} \E_q(\theta_j) + \frac{b_{\theta_j}^q}{2} \E_q(\theta_j^{-1}) - \frac{p_\theta^q}{2} \log \frac{a_{\theta_j}^q}{b_{\theta_j}^q} \right).
\end{align*}
For the CLBS scheme,
\begin{align*}
    \widetilde{\mathcal{L}}_\bftheta &=
    \frac{1}{2}\log|\Sigma_{\bfvartheta}^q|
    + \frac{d}{2}\{\log(2\pi) + 1\} - \sum_{j=1}^d \frac{(T+1)(1-a_j) - 1}{2} \mu_{\vartheta_j}^q.
\end{align*}
We retained the normalizing constants in $\mathcal{L}_{\bftheta}$ so that the ELBO remains directly comparable across both schemes.

\section{Updating variational posterior of $\log \bftheta$} \label{supp:derive_theta_PNCP}
This section derives the gradients of the ELBO with respect to the variational parameters $(\bfmu_\bfvartheta^q, C_\bfvartheta^q)$ of $q(\bfvartheta)$ for gradient ascent optimization in CLBS. The relevant terms in the ELBO involving $(\bfmu_\bfvartheta^q, C_\bfvartheta^q)$ are:
\begin{align*}
&\mL_\bfvartheta = \dfrac{c^q}{C^q}\sum_{t=1}^T [(y_t - \bfx_t^\top W_t \bfmu_\bfgamma^q) \bfx_t^\top \E_q(D_\bftheta^{\bfa/2}) \bfmu_{\wbfbeta_t}^q - \frac{1}{2}\tr\{( \bfmu_{\wbfbeta_t}^q\bfmu_{\wbfbeta_t}^{q\top} + \Sigma_{\wbfbeta_t}^q) D_{\bfx_t} \E_q(\bftheta^{\bfa/2}\bftheta^{\bfa/2 \top})D_{\bfx_t} \}]
\\
& \quad  - \frac{\tr[ (\bfmu_\wbfbeta^q \bfmu_\wbfbeta^{q\top} + \Sigma_\wbfbeta^q) \{\Lambda \otimes \E_q(D_\bftheta^{\bfa-\bfone})\}] +  \tr[ (\bfmu_\bfgamma^q \bfmu_\bfgamma^{q\top} + \Sigma_\bfgamma^q) \overline{W}^\top \{\Lambda \otimes \E_q(D_\bftheta^{-1})\}\overline{W} ] - \log |\Sigma_\bfvartheta^q|}{2}
\\
& \quad + \bfmu_\wbfbeta^{q\top} \{\Lambda \otimes \E_q(D_\bftheta^{\bfa/2-\bfone})\} \overline{W} \bfmu_\bfgamma^q - \sum_{j=1}^d \left[ \frac{(T+1)(1-a_j) - 1}{2} \mu_{\vartheta_j}^q + \dfrac{a_{\kappa_B}^q a_\xi^q }{4b_{\kappa_B}^q b_{\xi_j}^q}\E_q(\theta_j) \right]
\\
&= \dfrac{c^q}{C^q}\sum_{j=1}^d \exp\{ (a_j/2)\mu_{\vartheta_j}^q + (a_j/2)^2 v_{\vartheta_j}^q/2 \} \sum_{t=1}^T (y_t - \bfx_t^\top W_t \bfmu_\bfgamma^q) x_{tj} \mu_{\widetilde{\beta}_{tj}}^q
\\
&\quad - \dfrac{c^q}{2C^q} \sum_{j=1}^d \exp\{a_j\mu_{\vartheta_j}^q + a_j^2 v_{\vartheta_j}^q/2 \} \sum_{t=1}^T x_{tj}^2 (\mu_{\widetilde{\beta}_{tj}}^{q2} + \Sigma_{\widetilde{\beta}_{tj}}^q) - \dfrac{c^q}{C^q}\sum_{j=1}^d \sum_{k > j} \exp\bigg\{ \frac{a_j}{2} \mu_{\vartheta_j}^q
\\
&\quad  + \frac{a_k}{2} \mu_{\vartheta_k}^q
+ \frac{a_j^2 v_{\vartheta_j}^q + 2 a_ja_k v_{\vartheta_j, \vartheta_k}^q + a_k^2 v_{\vartheta_k}^q}{8}  \bigg\} \sum_{t=1}^T x_{tj} x_{tk} (\mu_{\widetilde{\beta}_{tj}}^{q}\mu_{\widetilde{\beta}_{tk}}^{q} + \Sigma_{\widetilde{\beta}_{tj}, \widetilde{\beta}_{tk}}^q)
\\
&\quad - \frac{1}{2} \sum_{j=1}^d \exp\{ (a_j-1)\mu_{\vartheta_j}^q + (a_j-1)^2 v_{\vartheta_j}^q/2 \} \tr\{ (\mu_{\wbfbeta_{[j]}}^q \mu_{\wbfbeta_{[j]}}^{q\top} + \Sigma_{\wbfbeta_{[j]}}^q) \Lambda \} + \frac{1}{2}\log |\Sigma_\bfvartheta^q|
\\
&\quad + \sum_{j=1}^d \exp\{ (a_j/2-1)\mu_{\vartheta_j}^q + (a_j/2-1)^2 v_{\vartheta_j}^q/2 \} \mu_{\wbfbeta_{[j]}}^{q\top} \Lambda \overline{W}_{[j]} \bfmu_\bfgamma^q - \sum_{j=1}^d \frac{(T+1)(1-a_j) - 1}{2} \mu_{\vartheta_j}^q
\\
&\quad - \frac{1}{2} \sum_{j=1}^d \exp\{ -\mu_{\vartheta_j}^q + v_{\vartheta_j}^q/2 \} \tr\{ (\bfmu_\bfgamma^q \bfmu_\bfgamma^{q\top} + \Sigma_\bfgamma^q) \overline{W}_{[j]}^\top \Lambda \overline{W}_{[j]}\}
- \sum_{j=1}^d \dfrac{a_{\kappa_B}^q a_\xi^q }{4b_{\kappa_B}^q b_{\xi_j}^q}\exp(\mu_{\vartheta_j}^q + v_{\vartheta_j}^q/2).
\end{align*}
Differentiating $\mL_\bfvartheta$ with respect to $\mu_{\vartheta_j}^q$,
\begin{align*}
\nabla_{\mu_{\vartheta_j}^q}  \mL_\bfvartheta
&= \dfrac{a_j c^q }{2C^q} \exp \left\{ \frac{a_j\mu_{\vartheta_j}^q}{2} + \frac{a_j^2 v_{\vartheta_j}^q}{8} \right\} \sum_{t=1}^T (y_t - \bfx_t^\top W_t \bfmu_\bfgamma^q) x_{tj} \mu_{\widetilde{\beta}_{tj}}^q - \frac{(T+1)(1-a_j) - 1}{2}
\\
&\quad - \dfrac{a_j c^q}{2C^q} \exp \left\{a_j\mu_{\vartheta_j}^q +\frac{ a_j^2 v_{\vartheta_j}^q}{2} \right\}\sum_{t=1}^T x_{tj}^2 (\mu_{\widetilde{\beta}_{tj}}^{q2} + \Sigma_{\widetilde{\beta}_{tj}}^q) - \dfrac{a_j c^q}{2C^q} \sum_{k \neq j} \exp\bigg\{ \frac{a_j}{2} \mu_{\vartheta_j}^q
\\
&\quad + \frac{a_k}{2} \mu_{\vartheta_k}^q + \frac{a_j^2 v_{\vartheta_j}^q + 2 a_ja_k v_{\vartheta_j, \vartheta_k}^q + a_k^2 v_{\vartheta_k}^q}{8}  \bigg\} \sum_{t=1}^T x_{tj} x_{tk} (\mu_{\widetilde{\beta}_{tj}}^{q}\mu_{\widetilde{\beta}_{tk}}^{q} + \Sigma_{\widetilde{\beta}_{tj}, \widetilde{\beta}_{tk}}^q)
\\
&\quad - \frac{a_j-1}{2} \exp\{ (a_j-1)\mu_{\vartheta_j}^q + (a_j-1)^2 v_{\vartheta_j}^q/2 \} \tr\{ (\mu_{\wbfbeta_{[j]}}^q \mu_{\wbfbeta_{[j]}}^{q\top} + \Sigma_{\wbfbeta_{[j]}}^q) \Lambda \}
\\
&\quad + \left( \frac{a_j}{2} - 1 \right)\exp \left\{ \left( \frac{a_j}{2} - 1 \right)\mu_{\vartheta_j}^q + \frac{v_{\vartheta_j}^q}{2}\left( \frac{a_j}{2} - 1 \right)^2  \right\} \mu_{\wbfbeta_{[j]}}^{q\top} \Lambda \overline{W}_{[j]} \bfmu_\bfgamma^q
\\
&\quad + \frac{1}{2} \exp\{ -\mu_{\vartheta_j}^q + v_{\vartheta_j}^q/2 \} \tr\{ (\bfmu_\bfgamma^q \bfmu_\bfgamma^{q\top} + \Sigma_\bfgamma^q) \overline{W}_{[j]}^\top \Lambda \overline{W}_{[j]}\}
- \dfrac{a_{\kappa_B}^q a_\xi^q }{4b_{\kappa_B}^q b_{\xi_j}^q}\exp\left(\mu_{\vartheta_j}^q + \frac{v_{\vartheta_j}^q}{2} \right).
\end{align*}

Let $\mL_\bfvartheta = \mL_\bfvartheta^1 + \mL_\bfvartheta^2$, where $\mL_\bfvartheta^2 = \frac{1}{2}\log |\Sigma_\bfvartheta^q|$. Write $\Sigma_{\bfvartheta}^q = C_{\bfvartheta}^q C_{\bfvartheta}^{q\top}$. Then $\mL_\bfvartheta^2 = \frac{1}{2}\log |\Sigma_\bfvartheta^q| = \log |C_{\bfvartheta}^q|$ and $\nabla_{\vech(C_{\bfvartheta}^q)} \mL_\bfvartheta^2 = \vech(C_{\bfvartheta}^{q -\top})$. In addition, $\nabla_{\vech(C_{\bfvartheta}^q)} \mL_\bfvartheta^1 = 2\vech(\nabla_{\Sigma_{\bfvartheta}^q} \mL_\bfvartheta^1 \; C_{\bfvartheta}^q)$. Differentiating with respect to $v_{\vartheta_j}^q$,
\begin{align*}
\nabla_{v_{\vartheta_j}^q} \mL_\bfvartheta^1
& = \dfrac{a_j^2 c^q}{8C^q} \exp \left\{ \frac{a_j\mu_{\vartheta_j}^q}{2} + \frac{a_j^2 v_{\vartheta_j}^q}{8}  \right\} \sum_{t=1}^T (y_t - \bfx_t^\top W_t \bfmu_\bfgamma^q) x_{tj} \mu_{\widetilde{\beta}_{tj}}^q
\\
&\quad  - \dfrac{a_j^2 c^q}{4C^q} \exp \left\{a_j\mu_{\vartheta_j}^q + \frac{a_j^2 v_{\vartheta_j}^q}{2} \right\}\sum_{t=1}^T x_{tj}^2 (\mu_{\widetilde{\beta}_{tj}}^{q2} + \Sigma_{\widetilde{\beta}_{tj}}^q) - \dfrac{a_j^2 c^q}{8C^q} \sum_{k \neq j} \exp\bigg\{ \frac{a_j}{2} \mu_{\vartheta_j}^q
\\
&\quad
+ \frac{a_k}{2} \mu_{\vartheta_k}^q + \frac{a_j^2 v_{\vartheta_j}^q + 2 a_ja_k v_{\vartheta_j, \vartheta_k}^q + a_k^2 v_{\vartheta_k}^q}{8} \bigg\} \sum_{t=1}^T x_{tj} x_{tk} (\mu_{\widetilde{\beta}_{tj}}^{q}\mu_{\widetilde{\beta}_{tk}}^{q} + \Sigma_{\widetilde{\beta}_{tj}, \widetilde{\beta}_{tk}}^q)
\\
&\quad - \frac{(a_j-1)^2}{4} \exp \left\{ (a_j-1)\mu_{\vartheta_j}^q + \frac{(a_j-1)^2 v_{\vartheta_j}^q}{2} \right\} \tr\{ (\mu_{\wbfbeta_{[j]}}^q \mu_{\wbfbeta_{[j]}}^{q\top} + \Sigma_{\wbfbeta_{[j]}}^q) \Lambda \}
\\
&\quad + \frac{1}{2}\left( \frac{a_j}{2} - 1 \right)^2 \exp \left\{ \left( \frac{a_j}{2} - 1 \right)\mu_{\vartheta_j}^q + \frac{ v_{\vartheta_j}^q}{2}\left( \frac{a_j}{2} - 1 \right)^2 \right\} \mu_{\wbfbeta_{[j]}}^{q\top} \Lambda \overline{W}_{[j]} \bfmu_\bfgamma^q
\\
&\quad - \frac{1}{4} \exp\{ -\mu_{\vartheta_j}^q + v_{\vartheta_j}^q/2 \} \tr\{ (\bfmu_\bfgamma^q \bfmu_\bfgamma^{q\top} + \Sigma_\bfgamma^q) \overline{W}_{[j]}^\top \Lambda \overline{W}_{[j]}\}- \dfrac{a_{\kappa_B}^q a_\xi^q }{8b_{\kappa_B}^q b_{\xi_j}^q}\exp\left(\mu_{\vartheta_j}^q + \frac{v_{\vartheta_j}^q}{2} \right).
\end{align*}

Differentiating with respect to $v_{\vartheta_j, \vartheta_k}^q$,
\begin{align*}
\nabla_{v_{\vartheta_j, \vartheta_k}^q} \mL_\bfvartheta^1
&= - \dfrac{a_j a_k c^q}{4C^q}\exp\left\{ \frac{a_j}{2} \mu_{\vartheta_j}^q + \frac{a_k}{2} \mu_{\vartheta_k}^q + \frac{a_j^2 v_{\vartheta_j}^q + 2 a_ja_k v_{\vartheta_j, \vartheta_k}^q + a_k^2 v_{\vartheta_k}^q}{8}  \right\} \\
& \quad \times \sum_{t=1}^T x_{tj} x_{tk} (\mu_{\widetilde{\beta}_{tj}}^{q}\mu_{\widetilde{\beta}_{tk}}^{q} + \Sigma_{\widetilde{\beta}_{tj}, \widetilde{\beta}_{tk}}^q).
\end{align*}

\section{Numerical evaluation of modified Bessel function}\label{supp:Bessel}
As discussed in Section \ref{sec:MFVB_TVP}, under the CP, $q(\theta_j) = \GIG(p_\theta^q, a_{\theta_j}^q, b_{\theta_j}^q)$, with pdf,
\[
f(x) = \frac{(a/b)^{p/2}}{2K_p(\sqrt{ab})} x^{p-1} \exp\left( -\frac{ax + b/x}{2}\right), \quad p\in \mathbb{R}, a>0, b>0,
\]
where $K_p(\cdot)$ is the modified Bessel function of the second kind. Furthermore,
\[
\E(X) = \sqrt{\frac{b}a} \frac{K_{p+1}(\sqrt{ab})}{K_{p}(\sqrt{ab})},
\quad \text{and} \quad
\E \left( \frac{1}{X} \right) = \sqrt{\frac{a}{b}} \frac{K_{p+1}(\sqrt{ab})}{K_{p}(\sqrt{ab})} - \frac{2p}{b}.
\]
Accurate numerical evaluation of $K_\nu(x)$ is crucial for updating the variational parameters and evaluating the ELBO. However, standard routines such as {\tt besselk(nu, x)} from Julia’s {\tt SpecialFunctions} package (based on the AMOS algorithm), often suffer from overflow, underflow, or severe precision loss when the order $|\nu|$ or argument $x$ is large.

To ensure numerical stability when $|\nu|$ is large, we adopt the Debye asymptotic approximation \citep[Section 10.7 of][]{Olver1974AsymptoticsAS}. Let $x = \nu z$ and define
\[
t = \frac{1}{\sqrt{1+z^2}},
\qquad
\xi = \sqrt{1+z^2} + \log\!\left( \frac{z}{1+\sqrt{1+z^2}} \right).
\]
For $0<z<\infty$, as $|\nu| \rightarrow \infty$ ($K_{-\nu}(x) = K_\nu(x)$ via symmetry), the asymptotic expansion is
\begin{align*}
K_\nu(x) = K_{\nu}(\nu z)\sim\sqrt{\frac{\pi}{2\nu}}\,\frac{e^{-\nu\xi}}{(1+z^2)^{1/4}}\sum_{s=0}^{\infty}(-1)^s\frac{U_s(t)}{\nu^s},
\end{align*}
where the Debye polynomials $U_s(t)$ satisfy the recursion,
\[
U_{s+1}(t)=\frac{1}{2}t^{2}(1-t^{2})U_s'(t)+\frac{1}{8}\int_{0}^{t}\left(1-5q^{2}\right)U_s(q)\,dq.
\]
Following \cite{Bitto2019}, we retain terms up to $s=4$. The first five Debye polynomials \citep[pg. 366][]{Abramowitz1972} are:
\begin{align*}
U_0(t) &= 1, \quad
U_1(t) = (3t-5t^3)/24, \quad
U_2(t) = (81t^2-462t^4+385t^6)/1152,\\
U_3(t) &= (30375t^3-369603t^5+765765t^7-425425t^9)/414720,\\
U_4(t) &= \frac{4465125t^4-94121676t^6+349922430t^8-446185740t^{10}+185910725t^{12}}{39813120}.
\end{align*}

To balance numerical stability against truncation error, we switch between the AMOS-based algorithm and this 5-term Debye approximation based on empirical performance. Grid evaluations demonstrate that  the truncated expansion achieves relative errors below approximately $10^{-9}$ on the log scale for $|\nu| \ge 30$. Consequently, we apply the Debye approximation whenever $|\nu| \ge \max(30,\,2x)$. Furthermore, to prevent underflow in the standard evaluation of $K_{\nu}(x)$ via {\tt besselk()}, we invoke the Debye approximation whenever $x > 600$, regardless of $\nu$.

\section{Additional MFVB results}\label{supp:VA_results}
This section provides detailed results for the MFVB applications in Section \ref{sec:VA_Applications}. For each dataset, we report posterior mean estimates for $\bftheta$ and $\bfgamma$ to assess agreement with MCMC.

\subsection{Simulation study}
For $j\in \{1, 2, 3\}$, where true $\theta_j = 0$, NCP and D-CP/NCP shrink $\theta_j$ toward 0 more aggressively than other methods, whereas CP and CLBS yield posterior means closer to those of MCMC. For $j \in \{4, 5, 6\}$, all methods correctly identify larger magnitudes for $\theta_j$ and preserve their relative ordering, with $\theta_6$ displaying the strongest time variation. The estimates of $\gamma_5$ and $\gamma_6$ are particularly shrunk toward 0 under CP and NCP.

{\small
\begin{table}[H]
\centering
\caption{Ground truth values and posterior mean estimates of $\bftheta$ and $\bfgamma$ for simulated data.}
\label{tab:simulation_VA_estimates}
\small
\begin{tabular}{cccrrrrr}
\hline
& $j$ & Truth & CP & NCP & D-CP/NCP & CLBS & MCMC \\  \hline
\multirow{6}{*}{$\theta_j$}
    & 1 & $0$ & $0.001583$ & $0.000082$ & $0.000077$ & $0.000519$ & $0.000823$ \\
    & 2 & $0$ & $0.001555$ & $0.000080$ & $0.000075$ & $0.000661$ & $0.001145$ \\
    & 3 & $0.01$ & $0.004892$ & $0.000096$ & $0.000091$ & $0.004948$ & $0.009047$ \\
    & 4 & $0.05$ & $0.083624$ & $0.079756$ & $0.084098$ & $0.082915$ & $0.091448$ \\
    & 5 & $0.1$ & $0.116893$ & $0.106761$ & $0.108903$ & $0.113908$ & $0.122695$ \\
    & 6 & $0.5$ & $0.627776$ & $0.545855$ & $0.633442$ & $0.622243$ & $0.654714$ \\
\hline
\multirow{6}{*}{$\gamma_j$}
    & 1 & $0$ & $-0.003384$ & $-0.056216$ & $-0.082669$ & $-0.065962$ & $-0.049446$ \\
    & 2 & $-0.3$ & $-0.007675$ & $-0.321737$ & $-0.337446$ & $-0.159053$ & $-0.165907$ \\
    & 3 & $1.5$ & $1.981320$ & $2.627570$ & $2.631620$ & $2.023010$ & $2.037340$ \\
    & 4 & $0$ & $-0.010030$ & $-0.005253$ & $-0.033021$ & $-0.094777$ & $-0.100605$ \\
    & 5 & $-0.3$ & $-0.023762$ & $-0.009946$ & $-0.087826$ & $-0.194594$ & $-0.198264$ \\
    & 6 & $1.5$ & $0.017417$ & $0.003972$ & $0.090469$ & $0.173415$ & $0.272802$ \\
    \hline
    \end{tabular}
\end{table}}

\subsection{U.S. Macroeconomic data}
Across all methods, the intercept and {\tt inf\_lag} exhibit the largest $\theta_j$ values, indicating pronounced time variation, alongside consistently positive $\gamma_j$ estimates. Conversely, {\tt une\_lag} yields smaller posterior means for $\theta_j$ and negative $\gamma_j$ estimates, reflecting weaker time variation. For {\tt tbi\_lag}, NCP, D-CP/NCP and CLBS shrink strongly $\theta_j$ toward 0, suggesting a nearly constant coefficient, whereas CP and MCMC yield somewhat larger estimates. Nonetheless, posterior means of $\gamma_j$ for {\tt tbi\_lag} remain near 0 across all methods, indicating a limited contribution from this predictor.

\begin{table}[H]
    \centering
    \caption{Posterior mean estimates of $\bftheta$ and $\bfgamma$ for U.S. macro data.}
    \label{tab:usmacro_VA_estimates}
    \small
    \begin{tabular}{llrrrrr}
    \hline
& Variable & CP & NCP & D-CP/NCP & CLBS & MCMC \\
    \hline
\multirow{4}{*}{$\theta_j$}
        & Intercept
        & $0.004194$ & $0.002363$ & $0.003877$ & $0.004575$ & $0.005004$ \\
        & {\tt inf\_lag}
        & $0.003188$ & $0.002581$ & $0.002995$ & $0.003313$ & $0.003591$ \\
        & {\tt une\_lag}
        & $0.000731$ & $0.000992$ & $0.000917$ & $0.000408$ & $0.000313$ \\
        & {\tt tbi\_lag}
        & $0.000221$ & $0.000000$ & $0.000000$ & $0.000007$ & $0.000117$ \\
    \hline
        \multirow{4}{*}{$\gamma_j$}
        & Intercept
        & $1.091920$ & $1.326210$ & $1.107770$ & $1.071540$ & $1.053910$ \\
        & {\tt inf\_lag}
        & $0.792006$ & $0.914901$ & $0.798461$ & $0.810041$ & $0.815358$ \\
        & {\tt une\_lag}
        & $-0.059085$ & $-0.001483$ & $-0.051547$ & $-0.084529$ & $-0.097801$ \\
        & {\tt tbi\_lag}
        & $0.000882$ & $0.002967$ & $0.000486$ & $0.012001$ & $0.016315$\\
    \hline
    \end{tabular}
\end{table}

\subsection{U.S. inflation data}
Both NCP and D-CP/NCP strongly shrink $\theta_j$ toward 0 for nearly all predictors,  except {\tt lag1}, suggesting predominantly constant coefficients. In contrast, CP and MCMC yield larger $\theta_j$ estimates for the intercept, {\tt lag1}, {\tt lag3}, {\tt napmpri}, {\tt spread}, and {\tt treas}, indicating more pronounced time variation. CLBS produces larger $\theta_j$ estimates for the intercept, {\tt lag3}, and {\tt spread}, while strongly shrinking remaining dimensions toward 0. CP shrinks all estimates of $\gamma_j$ closer to 0.

{\small
\begin{longtable}{llrrrrr}
    \caption{Posterior mean estimates of $\bftheta$ and $\bfgamma$ for U.S. inflation data.}
    \label{tab:usinf_VA_estimates}
    \small
    \\
\hline
&    Variable & CP & NCP & D-CP/NCP & CLBS & MCMC \\
\hline
    \endfirsthead
\hline
&    Variable & CP & NCP & D-CP/NCP & CLBS & MCMC \\
\hline
    \endhead

\hline
    \endfoot
    \multirow{18}{*}{$\theta_j$}
    & Intercept & $0.013545$ & $0.000001$ & $0.000001$ & $0.012787$ & $0.013979$ \\
    & {\tt lag1} & $0.001335$ & $0.000566$ & $0.000566$ & $0.000012$ & $0.001061$ \\
    & {\tt lag2} & $0.000130$ & $0.000001$ & $0.000001$ & $0.000013$ & $0.000130$ \\
    & {\tt lag3} & $0.000989$ & $0.000001$ & $0.000001$ & $0.000844$ & $0.001143$ \\
    & {\tt cons} & $0.000133$ & $0.000001$ & $0.000001$ & $0.000014$ & $0.000185$ \\
    & {\tt dom\_inv} & $0.000227$ & $0.000001$ & $0.000001$ & $0.000011$ & $0.000150$ \\
    & {\tt dow} & $0.000205$ & $0.000001$ & $0.000001$ & $0.000009$ & $0.000184$ \\
    & {\tt emp} & $0.000188$ & $0.000001$ & $0.000001$ & $0.000012$ & $0.000203$ \\
    & {\tt exp} & $0.000251$ & $0.000001$ & $0.000001$ & $0.000014$ & $0.000243$ \\
    & {\tt gdp} & $0.000165$ & $0.000001$ & $0.000001$ & $0.000009$ & $0.000070$ \\
    & {\tt hstarts} & $0.000220$ & $0.000001$ & $0.000001$ & $0.000008$ & $0.000511$ \\
    & {\tt m1} & $0.000404$ & $0.000001$ & $0.000001$ & $0.000007$ & $0.000266$ \\
    & {\tt napmpri} & $0.001177$ & $0.000001$ & $0.000001$ & $0.000022$ & $0.000842$ \\
    & {\tt napmsdi} & $0.000081$ & $0.000002$ & $0.000002$ & $0.000009$ & $0.000036$ \\
    & {\tt pmi} & $0.000412$ & $0.000001$ & $0.000001$ & $0.000009$ & $0.000141$ \\
    & {\tt spread} & $0.000765$ & $0.000001$ & $0.000001$ & $0.001936$ & $0.000901$ \\
    & {\tt treas} & $0.001041$ & $0.000001$ & $0.000001$ & $0.000009$ & $0.007015$ \\
    & {\tt unemp} & $0.000474$ & $0.000001$ & $0.000001$ & $0.000010$ & $0.000489$ \\
    \hline

    \multirow{18}{*}{$\gamma_j$}
    & Intercept & $0.000069$ & $1.155680$ & $1.155680$ & $0.007260$ & $0.095199$ \\
    & {\tt lag1} & $0.000006$ & $0.354163$ & $0.354163$ & $0.005433$ & $0.007256$ \\
    & {\tt lag2} & $-0.000047$ & $-0.000592$ & $-0.000592$ & $-0.009552$ & $-0.007370$ \\
    & {\tt lag3} & $0.000047$ & $0.196567$ & $0.196567$ & $0.006796$ & $0.027961$ \\
    & {\tt cons} & $0.000168$ & $0.104003$ & $0.104003$ & $0.039414$ & $0.019547$ \\
    & {\tt dom\_inv} & $0.000069$ & $-0.025800$ & $-0.025800$ & $-0.005289$ & $0.000871$ \\
    & {\tt dow} & $-0.000063$ & $0.002384$ & $0.002384$ & $-0.003793$ & $-0.001496$ \\
    & {\tt emp} & $0.000094$ & $0.033710$ & $0.033710$ & $0.029074$ & $0.014655$ \\
    & {\tt exp} & $0.000003$ & $0.241046$ & $0.241046$ & $0.016397$ & $0.012615$ \\
    & {\tt gdp} & $-0.000007$ & $-0.012467$ & $-0.012467$ & $0.003903$ & $0.000862$ \\
    & {\tt hstarts} & $0.000007$ & $0.043571$ & $0.043571$ & $-0.000139$ & $0.003108$ \\
    & {\tt m1} & $0.000067$ & $0.053724$ & $0.053724$ & $0.008707$ & $0.012686$ \\
    & {\tt napmpri} & $0.000051$ & $0.154925$ & $0.154925$ & $0.115895$ & $0.031626$ \\
    & {\tt napmsdi} & $-0.000025$ & $-0.022680$ & $-0.022680$ & $-0.004349$ & $-0.000994$ \\
    & {\tt pmi} & $0.000138$ & $0.050042$ & $0.050042$ & $0.005330$ & $0.010289$ \\
    & {\tt spread} & $-0.000101$ & $-0.137584$ & $-0.137584$ & $-0.006458$ & $-0.052246$ \\
    & {\tt treas} & $0.000023$ & $0.072172$ & $0.072172$ & $0.002362$ & $0.012193$ \\
    & {\tt unemp} & $-0.000104$ & $0.000041$ & $0.000041$ & $-0.015649$ & $-0.027818$ \\
    \hline
\end{longtable}}

\subsection{ECB data}
Across all methods, the intercept, {\tt I\_1MO}, and {\tt M3} exhibit larger $\theta_j$ estimates, indicating pronounced time variation. For other predictors, NCP, D-CP/NCP and CLBS heavily shrink $\theta_j$ toward zero, implying nearly constant coefficients. An exception is {\tt LOANS}, where D-CP/NCP yields a larger $\theta_j$ estimate than NCP and CLBS, aligning more closely with CP and MCMC. Conversely, CP and MCMC produce larger $\theta_j$ estimates for {\tt SENT}, {\tt UNEMP}, and {\tt I\_1YR}. Variational methods yield posterior means of $\gamma_j$ near zero across most predictors, whereas MCMC estimates are larger for {\tt I\_1MO}, {\tt STOCK\_1}, {\tt M3}, and the December dummy {\tt D11}.

\begingroup
\small
\setlength{\LTleft}{\fill}
\setlength{\LTright}{\fill}

\begin{longtable}{llrrrrr}
\caption{Posterior mean estimates of $\bftheta$ and $\bfgamma$ for ECB data.}\label{tab:ecb_VA_estimates}\\
\hline
& Variable & CP & NCP & D-CP/NCP & CLBS & MCMC \\
\hline
\endfirsthead
\hline
& Variable & CP & NCP & D-CP/NCP & CLBS & MCMC \\
\hline
\endhead
\hline
\endfoot
\multirow{37}{*}{$\theta_j$} & Intercept & $0.021282$ & $0.025882$ & $0.025757$ & $0.027268$ & $0.040427$ \\
 & {\tt lag12} & $0.000014$ & $0.000000$ & $0.000000$ & $0.000000$ & $0.000003$ \\
 & {\tt lag11} & $0.000010$ & $0.000000$ & $0.000000$ & $0.000000$ & $0.000002$ \\
 & {\tt lag10} & $0.000011$ & $0.000000$ & $0.000000$ & $0.000000$ & $0.000003$ \\
 & {\tt lag9} & $0.000010$ & $0.000000$ & $0.000000$ & $0.000000$ & $0.000002$ \\
 & {\tt lag8} & $0.000032$ & $0.000000$ & $0.000000$ & $0.000000$ & $0.000004$ \\
 & {\tt lag7} & $0.000010$ & $0.000000$ & $0.000000$ & $0.000000$ & $0.000002$ \\
 & {\tt lag6} & $0.000009$ & $0.000000$ & $0.000000$ & $0.000000$ & $0.000002$ \\
 & {\tt lag5} & $0.000007$ & $0.000000$ & $0.000000$ & $0.000000$ & $0.000002$ \\
 & {\tt lag4} & $0.000016$ & $0.000000$ & $0.000000$ & $0.000000$ & $0.000006$ \\
 & {\tt lag3} & $0.000010$ & $0.000000$ & $0.000000$ & $0.000000$ & $0.000003$ \\
 & {\tt lag2} & $0.000014$ & $0.000000$ & $0.000000$ & $0.000000$ & $0.000005$ \\
 & {\tt lag1} & $0.000010$ & $0.000000$ & $0.000000$ & $0.000000$ & $0.000002$ \\
 & {\tt I\_1MO} & $0.009341$ & $0.009012$ & $0.011870$ & $0.011752$ & $0.003440$ \\
 & {\tt I\_1YR} & $0.000112$ & $0.000000$ & $0.000000$ & $0.000000$ & $0.000027$ \\
 & {\tt SENT} & $0.002369$ & $0.000000$ & $0.000000$ & $0.000000$ & $0.000798$ \\
 & {\tt STOCK\_1} & $0.000088$ & $0.000000$ & $0.000000$ & $0.000000$ & $0.000089$ \\
 & {\tt STOCK\_2} & $0.000087$ & $0.000000$ & $0.000000$ & $0.000000$ & $0.000290$ \\
 & {\tt EXRATE} & $0.000058$ & $0.000000$ & $0.000000$ & $0.000000$ & $0.000023$ \\
 & {\tt IP} & $0.000051$ & $0.000000$ & $0.000000$ & $0.000000$ & $0.000007$ \\
 & {\tt LOANS} & $0.000130$ & $0.000000$ & $0.000519$ & $0.000000$ & $0.000243$ \\
 & {\tt M3} & $0.000695$ & $0.001896$ & $0.001637$ & $0.001670$ & $0.000404$ \\
 & {\tt CAR} & $0.000016$ & $0.000000$ & $0.000000$ & $0.000000$ & $0.000006$ \\
 & {\tt OIL} & $0.000047$ & $0.000000$ & $0.000000$ & $0.000000$ & $0.000014$ \\
 & {\tt ORDER} & $0.000073$ & $0.000000$ & $0.000000$ & $0.000000$ & $0.000033$ \\
 & {\tt UNEMP} & $0.000829$ & $0.000000$ & $0.000000$ & $0.000000$ & $0.000188$ \\
 & {\tt D1} & $0.000057$ & $0.000000$ & $0.000000$ & $0.000000$ & $0.000018$ \\
 & {\tt D2} & $0.000049$ & $0.000000$ & $0.000000$ & $0.000000$ & $0.000029$ \\
 & {\tt D3} & $0.000078$ & $0.000000$ & $0.000000$ & $0.000000$ & $0.000052$ \\
 & {\tt D4} & $0.000037$ & $0.000000$ & $0.000000$ & $0.000000$ & $0.000007$ \\
 & {\tt D5} & $0.000034$ & $0.000000$ & $0.000000$ & $0.000000$ & $0.000007$ \\
 & {\tt D6} & $0.000046$ & $0.000000$ & $0.000000$ & $0.000000$ & $0.000010$ \\
 & {\tt D7} & $0.000060$ & $0.000000$ & $0.000000$ & $0.000000$ & $0.000015$ \\
 & {\tt D8} & $0.000035$ & $0.000000$ & $0.000000$ & $0.000000$ & $0.000008$ \\
 & {\tt D9} & $0.000043$ & $0.000000$ & $0.000000$ & $0.000000$ & $0.000009$ \\
 & {\tt D10} & $0.000034$ & $0.000000$ & $0.000000$ & $0.000000$ & $0.000007$ \\
 & {\tt D11} & $0.000189$ & $0.000000$ & $0.000000$ & $0.000000$ & $0.000033$ \\
\hline
\multirow{37}{*}{$\gamma_j$} & Intercept & $2.553500$ & $2.874060$ & $2.540080$ & $2.568590$ & $2.595070$ \\
 & {\tt lag12} & $-0.000012$ & $-0.001354$ & $-0.001236$ & $-0.006589$ & $-0.004225$ \\
 & {\tt lag11} & $-0.000019$ & $-0.000996$ & $-0.000974$ & $-0.004538$ & $-0.002744$ \\
 & {\tt lag10} & $-0.000018$ & $0.000216$ & $0.000167$ & $-0.000182$ & $0.000070$ \\
 & {\tt lag9} & $0.000015$ & $-0.000692$ & $-0.000615$ & $-0.000898$ & $-0.000450$ \\
 & {\tt lag8} & $-0.000018$ & $0.000726$ & $0.000760$ & $0.001319$ & $-0.000401$ \\
 & {\tt lag7} & $0.000019$ & $0.001494$ & $0.001368$ & $0.005879$ & $0.003735$ \\
 & {\tt lag6} & $0.000012$ & $-0.000782$ & $-0.000797$ & $-0.001400$ & $-0.000205$ \\
 & {\tt lag5} & $-0.000001$ & $-0.000692$ & $-0.000660$ & $-0.002249$ & $-0.001251$ \\
 & {\tt lag4} & $-0.000015$ & $0.000436$ & $0.000391$ & $0.000652$ & $0.000153$ \\
 & {\tt lag3} & $0.000000$ & $-0.001246$ & $-0.001146$ & $-0.004021$ & $-0.001925$ \\
 & {\tt lag2} & $-0.000007$ & $0.001001$ & $0.000997$ & $0.002985$ & $0.001123$ \\
 & {\tt lag1} & $-0.000035$ & $0.001187$ & $0.001062$ & $0.002746$ & $0.001382$ \\
 & {\tt I\_1MO} & $0.000004$ & $-0.000015$ & $0.000038$ & $0.000291$ & $0.023327$ \\
 & {\tt I\_1YR} & $-0.000003$ & $-0.000077$ & $-0.000036$ & $-0.000400$ & $-0.000998$ \\
 & {\tt SENT} & $-0.000003$ & $0.000144$ & $0.000113$ & $0.001093$ & $0.004388$ \\
 & {\tt STOCK\_1} & $0.000011$ & $0.000310$ & $0.000230$ & $0.002611$ & $0.021598$ \\
 & {\tt STOCK\_2} & $-0.000001$ & $0.000167$ & $0.000085$ & $0.000934$ & $0.001713$ \\
 $\gamma_j$ & {\tt EXRATE} & $-0.000015$ & $-0.000110$ & $-0.000091$ & $-0.001047$ & $-0.005700$ \\
 & {\tt IP} & $0.000010$ & $0.001811$ & $0.001574$ & $0.008288$ & $0.004651$ \\
 & {\tt LOANS} & $-0.000001$ & $0.000047$ & $-0.000023$ & $-0.000227$ & $-0.005219$ \\
 & {\tt M3} & $0.000006$ & $0.000069$ & $0.000031$ & $0.000236$ & $0.015218$ \\
 & {\tt CAR} & $-0.000008$ & $0.000802$ & $0.000739$ & $0.001677$ & $0.001345$ \\
 & {\tt OIL} & $-0.000001$ & $0.000089$ & $0.000042$ & $0.000466$ & $0.000965$ \\
 & {\tt ORDER} & $0.000004$ & $0.000190$ & $0.000137$ & $0.001455$ & $0.012343$ \\
 & {\tt UNEMP} & $0.000001$ & $-0.000060$ & $0.000013$ & $0.000149$ & $-0.000183$ \\
 & {\tt D1} & $-0.000016$ & $-0.000559$ & $-0.000544$ & $-0.003591$ & $-0.006694$ \\
 & {\tt D2} & $-0.000007$ & $0.000270$ & $0.000252$ & $0.000587$ & $0.001025$ \\
 & {\tt D3} & $0.000021$ & $0.000240$ & $0.000242$ & $0.002310$ & $0.003989$ \\
 & {\tt D4} & $0.000003$ & $0.000079$ & $0.000059$ & $0.000400$ & $0.001144$ \\
 & {\tt D5} & $0.000005$ & $0.000238$ & $0.000235$ & $0.000956$ & $0.002499$ \\
 & {\tt D6} & $-0.000014$ & $-0.000309$ & $-0.000301$ & $-0.001829$ & $-0.004321$ \\
 & {\tt D7} & $0.000004$ & $-0.000263$ & $-0.000274$ & $-0.000710$ & $-0.000226$ \\
 & {\tt D8} & $-0.000008$ & $0.000090$ & $0.000086$ & $-0.000415$ & $-0.001404$ \\
 & {\tt D9} & $0.000013$ & $0.000177$ & $0.000184$ & $0.001454$ & $0.003152$ \\
 & {\tt D10} & $0.000000$ & $0.000064$ & $0.000065$ & $0.000136$ & $-0.000153$ \\
 & {\tt D11} & $0.000018$ & $0.000423$ & $0.000406$ & $0.003242$ & $0.017632$ \\
\end{longtable}

\end{document}